\newtheorem{theorem}{Theorem}
\newtheorem{proposition}[theorem]{Proposition}
\newtheorem{definition}[theorem]{Definition}
\newtheorem{lemma}[theorem]{Lemma}
\newtheorem{corollary}[theorem]{Corollary}
\newtheorem{conjecture}[theorem]{Conjecture}
\numberwithin{theorem}{section}
\title[Higher-order Kripke Models]{Higher-order Kripke models for intuitionistic and non-classical modal logics}
\author[Victor Barroso-Nascimento]{Victor Barroso-Nascimento \\ \medskip University College London}
\address{} {}
{}
\email{victorluisbn@gmail.com}
\urladdr{}
\thanks{}
\keywords{Modal Logic, Non-classical Modal Logic, Intuitionistic Modal Logic,
Model Theory, Kripke Models, Birelational Models}
\begin{document}

\begin{abstract}

This paper introduces higher-order (``nested") Kripke models, a generalization of Kripke models that is remarkably close to Kripke's original idea -- both mathematically and conceptually. Standard models are now $0$-ary models, whereas $n$-ary models for $n > 0$ are models whose set of objects (``possible worlds'') contain only $(n-1)$-ary models. A key idea is the use of worlds as fixed points for modal definitions, in the sense that what is necessary or possible in a world of a frame depends only on what is true in the same world on the accessible frames. This paper mainly deals with the paradigmatic cases of intuitionistic modal logics $IK$ and $MK$, from which the generalisation to other non-classical logics arises naturally. The association between conditions on accessibility relations and modal axioms also carries over to this framework, so modal logics stronger than $K$ can be obtained by imposing requirements on the relations between frames.   Just like Kripke models define a concept of ``alternative'' for classical models, the $n$-ary models (for $n > 0$) defines the same concept for any interpretation of the $(n-1)$-ary models.

\end{abstract}

\maketitle


\section{Introduction}

Classical modal logic extends classical propositional logic by allowing application of the modal notions of necessity and possibility to propositions \cite{sep-logic-modal}.  Of particular importance for it is Saul Kripke's ``possible worlds'' semantics \cite{Kripke1959-KRIACT}, in which modal reasoning takes the shape of counterfactual exploration of what would and would not hold if the facts which are true in the world were different. This framework allows judgments of shape ``A is necessary'' ($\Box A$) to be characterised as true whenever $A$ is true in all alternative configurations of the world, and judgments of shape ``A is possible'' ($\Diamond A$) as true whenever $A$ is true in at least one alternative configuration. The semantics can be further enriched by incorporating a notion of conditional possibility that uses ``acessibility relations'' to specify which worlds are alternative configurations of which other worlds, yielding a plethora of distinct modal logics suitable for different concepts of modality \cite[pgs. 61-112]{Chellas1980}.

There is nothing essentially classical about modalities, so it is natural to wonder how one could obtain modal extensions of non-classical propositional logics. For logics characterizable through simple but non-standard truth functions\footnote{\textit{E.g.} multivalued functions \cite{sep-logic-manyvalued}, ``glutty'' bivalued functions allowing propositions to be both true and false, and  ``gappy'' bivalued functions allowing them to be neither true nor false \cite{FDEAndersonBelnap}\cite{RestallGapsGluts}.} the answer is straightforward, as it suffices to change how truth is treated at possible worlds and adapt modalities accordingly \cite{PriestManyvaluedmodal}. The answer is not as clear in the case of non-classical logics requiring possible world semantics for characterization of their \textit{propositional} fragment, as the concepts employed to provide modal extensions are already used to define the basic semantic notions. This is the case, for instance, of intuitionistic logic, whose main semantic characterization is given in terms of models structurally similar to those of the classical modal logic $S4$ \cite{KripkeIntuitionistic}.


This paper provides a conceptually and technically sound answer to this question by showing a general way of obtaining modal semantics for logics characterizable through Kripke models\footnote{Which incidentally also works for Tarskian models, as they can be viewed as special cases of Kripke models.}. Just like it is possible to define classical modal semantics by using a set of classical models (truth-functions, which are the ``possible worlds'') and relations between them, we show that it is possible to define non-classical modal semantics in general by using a set of non-classical models (Kripke models, now regarded as ``possible structures'') and relations between them. In other words, we show that it is possible to extend to modal versions of such non-classical logics essentially the same treatment given to classical modal logic.

This very general framework is arrived at after we discuss how modalities are treated in the paradigmatic case of intuitionistic modal logics, specifically the modal logic $IK$ and a new logic we call $MK$. Those are taken to offer a perfect starting point -- both due to the existence of subtle differences between their semantics, which highlight important features of our new framework, and to the broad relevance of intuitionistic modalities for both philosophy of mathematics and computer science\footnote{A thorough presentation of the philosophical importance of mathematical constructivism is given in \cite{sep-mathematics-constructive}. An equally thorough overview of many possible applications of intuitionistic modalities to computer science is given in section II of \cite{degroot2025semanticalanalysisintuitionisticmodal}.}.

The paper is structured as follows. First, we present the traditional semantic treatment of intuitionistic modalities, which takes the shape of Kripke models with two accessibility relations (birelational models). Then we provide mathematical and conceptual reasons for a radical change in perspective. From these reasons arises a new, intuitive, conceptually robust semantic framework for intuitionistic modalities, which is proven to be equivalent to birelational semantics through effective mappings that preserve model validity. It is then show that the structure of the new models naturally leads to a generalization of the very notion of Kripke models, as well as to modular definitions which yield a modal semantic extension for every logic characterisable through Kripke models. In the case of logics that do not have canonical modal extensions (\textit{viz} for each classical modal logic there is more than one viable candidate for the position of ``true'' non-classical analogue of that logic), the approach always yields a significantly strong modal extension, but weaker extensions may also be obtained by making adjustments. The paper ends with a general discussion on features of the framework and on some of its variants, which includes the statement and justification of important conjectures concerning its properties.

With the exception of short proofs and the proof of Lemma \ref{lemma:modalexistence}\footnote{This is the lemma ensuring the existence of accessible worlds in the canonical model. Lemmas of this nature are usually the most important steps in modal completeness proofs, ours being no exception.}, whose structure is considered interesting in itself, all proofs are in the Appendix.

\section{Propositional intuitionistic logic}

The semantics of intuitionistic propositional logic can be characterized as follows:

\begin{definition}
    A \textit{Kripke frame} is any sequence $\langle W, \leq \rangle$ such that:

\begin{enumerate}
  \item $W$ is a non-empty set of objects;

       \item $\leq$ is a reflexive and transitive relation on $W$;
\end{enumerate}

\end{definition}

\begin{definition}\label{minimalpropositionalmodels}
    A \textit{propositional Kripke model} (or just \textit{propositional model}) $K$ is any sequence $\langle W, \leq, v \rangle$ such that:

   \begin{enumerate}
       \item $\langle W, \leq \rangle$ is a Kripke frame;

         \smallskip
      
       \item $v$ is a function assigning a set $v(w)$ of atomic propositions to each $w \in W$, with the condition that if $w \leq w'$ then $v(w) \subseteq v(w')$.
   \end{enumerate}
\end{definition}

\begin{definition}\label{def:validityinpropositional}
 Given a propositional model $K =\langle W, \leq, v \rangle$ and for $w \in W$, the relations $(\Vdash^{K}_{w})$, $(\Vdash^{K})$ and $(\Vdash_{P})$ are defined as follows\footnote{As usual, crossed relational symbols such as $\nVdash$ are used as an abbreviation of the statement that $\Vdash$ does not hold.}:

 \begin{enumerate}

    \item $\Vdash_{w}^{K} p  \Longleftrightarrow  \ p \in v(w)$, for atomic $p$;

    \smallskip
    
    \item $\Vdash_{w}^{K} A \land B \Longleftrightarrow \ \Vdash_{w}^{K} A$ and $\Vdash_{w}^{K} B$;

      \smallskip
    
    \item $\Vdash_{w}^{K} A \lor B \Longleftrightarrow \ \Vdash_{w}^{K} A$ or $\Vdash_{w}^{K} B$;

      \smallskip
    
    \item $\Vdash_{w}^{K} A \to B \Longleftrightarrow A \Vdash_{w}^{K} B$;
     
        \smallskip
        
    \item $\nVdash_{w}^{K} \bot$ for any $w \in W$;
    
  \smallskip

    \item For non-empty $\Gamma$, $\Gamma \Vdash_{w}^{K} A \Longleftrightarrow \forall w' (w \leq w'): \ \Vdash_{w'}^{K} C$ for all $C \in \Gamma$ implies $ \Vdash_{w'}^{K} A$;

      \smallskip
    
     \item $\Gamma \Vdash^{K} A \Longleftrightarrow \forall w (w \in W): \Gamma \Vdash_{w}^{K} A$.

       \smallskip

    \item $\Gamma \Vdash_{P} A \Longleftrightarrow \Gamma \Vdash^{K} A$ holds for all $K$.
\end{enumerate}

\end{definition}

\begin{definition}
    $\neg A$ is an abbreviation of $A \to \bot$.
\end{definition}

One way of conceptually justifying those definitions is by saying they represent the activities of a mathematician throughout time, as well as the ensuing process of knowledge accumulation. This intuitive interpretation, given by Kripke in \cite[pgs. 97-105]{KripkeIntuitionistic}, is concisely fleshed out by Priest in \cite[pg. 106]{Priest} in a passage we quote (adapting the notation to match ours):

\begin{quote}
    6.3.6 Before we complete the definition of validity, let us see how an
intuitionist interpretation arguably captures the intuitionist ideas of the
previous section. Think of a world as a state of information at a certain time;
intuitively, the things that hold at it are those things which are proved at
this time. $w \leq w'$ is thought of as meaning that $w'$ is a possible extension of $w$,
obtained by finding some number (possibly zero) of further proofs. Given
this understanding, $\leq$ is clearly reflexive and transitive. [...] And the heredity condition is also
intuitively correct. If something is proved, it stays proved, whatever else
we prove.

 6.3.7 Given the provability conditions of 6.2.7, the recursive conditions of
6.3.4 are also very natural. $A \land B$ is proved at a time iff $A$ is proved at that
time, and so is $B$; $A \lor B$ is proved at a time iff $A$ is proved at that time, or
$B$ is. If $\neg A$ is proved at some time, then we have a proof that there is no
proof of $A$. Hence, $A$ will be proved at no possible later time. Conversely, if
$\neg A$ is not proved at some time, then it is at least possible that a proof of
$A$ will turn up, so $A$ will hold at some possible future time. Finally, if $A \to B$
is proved at a time, then we have a construction that can be applied to any
proof of $A$ to give a proof of $B$. Hence, at any future possible time, either
there is no proof of $A$, or, if there is, this gives us a proof of $B$. Conversely,
if $A \to B$ is not proved at a time, then it is at least possible that at a future
time, $A$ will be proved, and $B$ will not be. That is, $A$ holds and $B$ fails at some
possible future time.
\end{quote}

It is a matter of controversy if such a justification is acceptable, or even if the use of Kripke models is intuitionistically justifiable. Some philosophers argue that this reading is misleading \cite{AllenHazen}, whilst some argue that the characterization of intuitionistic semantics through Kripke models itself is misleading \cite{WagnerSanz}. In any case, we will not try to settle such discussions here. The interpretation will be used to better illustrate some key conceptual points in section 5, but the arguments we present are entirely independent of which conceptual reading of the models one adopts.


\section{Intuitionistic modal logic and birelational models} 

Unlike classical logic, intuitionistic logic does not satisfy \textit{categoricity}, in the sense that for every classical modal logic there is more than one intuitionistic modal logic that could claim to be its true analogue. An overview of debates and results on this issue is given by Simpson in \cite[pgs. 41-44]{Sim94}. Mathematical criteria for establishing reasonable analogues of classical modal logics have been given by Fischer Servi in a series of papers \cite{Servi1ModalAnalogue}\cite{Servi2finitemodel}\cite{Servi3rare}, as well as independently by Plotkin and Stirling \cite{PlotkinStirling1988-PLOAFF}. It is also possible to list series of \textit{desiderata} that a reasonable intuitionistic modal logic should fulfill in order to provide a philosophical argument for its adequacy, as does Simpson in \cite[pgs. 38-41]{Sim94}. After listing his desiderata, Simpson argues that the true intuitionistic analogue of the classical modal logic $K$ is the intuitionistic logic $IK$ \cite[pgs. 58-64]{Sim94}. $IK$ is also the analogue of $K$ obtained in the frameworks of Fischer Servi and of Plotkin and Stirling.

As mentioned before, since intuitionistic proposional models are already Kripke models, it is not entirely clear how models for intuitionistic modal logics should look like. The traditional answer given by the literature takes the shape of \textit{birelational models} -- which, as the name suggests, are just Kripke models with two accessibility relations. The first relation, $\leq $, induces the desired intuitionistic behaviour on the propositional semantics, while the second, $R$, plays the role of a proper modal relation.

Some practical difficulties arise when we attempt to define intuitionistic semantics this way. The following property can easily be proved for the propositional (also first-order) semantics \cite[pgs. 105-106 and 423]{Priest}, and is essential for ensuring proper intuitionistic behaviour:

\begin{proposition}[Monotonicity]
     If $\Vdash^{K}_{w} A$ and $w \leq w'$ then $\Vdash^{K}_{w'} A$
\end{proposition}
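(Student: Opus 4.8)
The plan is to argue by induction on the structure of the formula $A$, reading the statement in its intended form: if $\Vdash^{K}_{w} A$ and $w \leq w'$, then $\Vdash^{K}_{w'} A$. The two base cases are immediate. If $A$ is an atom $p$, then $\Vdash^{K}_{w} p$ means $p \in v(w)$; since $w \leq w'$, the heredity condition on $v$ from Definition~\ref{minimalpropositionalmodels} gives $v(w) \subseteq v(w')$, hence $p \in v(w')$ and $\Vdash^{K}_{w'} p$. If $A = \bot$, there is nothing to prove, since by clause~(5) of Definition~\ref{def:validityinpropositional} we never have $\Vdash^{K}_{w} \bot$, so the hypothesis is vacuous.

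For the inductive step, the cases $A = B \land C$ and $A = B \lor C$ are routine. From $\Vdash^{K}_{w} B \land C$ we get $\Vdash^{K}_{w} B$ and $\Vdash^{K}_{w} C$, and applying the induction hypothesis to $B$ and to $C$ yields $\Vdash^{K}_{w'} B$ and $\Vdash^{K}_{w'} C$, hence $\Vdash^{K}_{w'} B \land C$; the disjunctive case is the same after a case split on which disjunct is forced at $w$. The only case requiring a little attention is $A = B \to C$, and the point worth noting is that it does not in fact appeal to the induction hypothesis but to transitivity of $\leq$. Unfolding clause~(4) and then clause~(6) with $\Gamma = \{B\}$, the statement $\Vdash^{K}_{w} B \to C$ says: for every $w''$ with $w \leq w''$, if $\Vdash^{K}_{w''} B$ then $\Vdash^{K}_{w''} C$. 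To obtain $\Vdash^{K}_{w'} B \to C$ we must establish the same condition for every $w''$ with $w' \leq w''$; but any such $w''$ satisfies $w \leq w''$ because $w \leq w' \leq w''$ and $\leq$ is transitive, so the condition is inherited directly.

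Since every formula is generated from atoms and $\bot$ by $\land$, $\lor$ and $\to$ (with $\neg$ merely an abbreviation handled by the $\to$ case), the induction is complete. I do not expect any genuine obstacle here: the proof is a textbook structural induction, and the one mildly delicate point — the implication clause — dissolves once one observes that monotonicity for $\Vdash^{K}_{w}$ on implications, and more generally for any $\Gamma \Vdash^{K}_{w} A$ with $\Gamma$ non-empty, follows from transitivity of $\leq$ alone. This is precisely why clause~(6) of Definition~\ref{def:validityinpropositional} is phrased as a universal quantification over all $\leq$-successors of $w$.
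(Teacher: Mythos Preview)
Your proof is correct and is precisely the standard structural induction that the paper has in mind. In fact the paper does not spell out a proof of this proposition at all: it simply asserts that the property ``can easily be proved for the propositional (also first-order) semantics'' and refers to Priest's textbook, then observes that the result extends to non-empty $\Gamma$ by the shape of the entailment clause. Your argument supplies exactly this omitted detail, including the key observation that the implication case (and hence the extension to $\Gamma \Vdash^{K}_{w} A$) rests on transitivity of $\leq$ rather than on the induction hypothesis.
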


Due to how entailment is defined, the proof can be extended to also show that if $\Gamma \Vdash^{K}_{w} A$ and $w \leq w'$ then $\Gamma \Vdash^{K}_{w'} A$.

If we want properly intuitionistic modalities, we need to ensure that they also satisfy monotonicity. However, not all combinations of intuitive modal clauses with birelational models make modalities monotonic. A short overview of some possible solutions is given by Simpson\footnote{Incidentally, Simpson's thesis gives a comprehensive historical overview of the development of those modal semantics and is itself an authoritative source on the syntax and semantics of intuitionistic modal logics.} in \cite[pgs. 45-57]{Sim94}. They essentially come in two kinds: either monotonicity is directly built into the modal clauses or all models are required to satisfy \textit{frame conditions} which allow monotonicity to be proved even for simpler modal clauses.

The birelational semantics for $IK$ solves the issue by building monotonicity in the clause for $\Box$ and requiring satisfaction of frame conditions, one of which makes $\Diamond$ monotonic. The conditions it requires for the modal relation $R$ and the non-modal $\leq$ are $(F_1)$ and $(F_2)$:

\begin{definition}\label{F1andF2}
 Conditions $F_1$ and $F_2$ are defined as follows:

\begin{enumerate}

     \item[($F_1$)] If $w \leq w'$ and $w R j$ then there is a $j'$ such that $j \leq j'$ and $w' R j'$

        \item[($F_2$)] If $w R j$ and $j \leq j'$ then there is a $w'$ such that $w \leq w'$ and $w' R j'$

\end{enumerate}

\end{definition}

Which can also be expressed diagrammatically as follows, where the solid lines represents the antecedent of the conditions and the dotted lines represent the relations that are required to hold when the antecedent holds:

\begin{center}
$(F_1)$\;
\xymatrix{
w' \ar@{.>}[r]^R & j' \\
w \ar[u]^{\le} \ar[r]^R & j \ar@{.>}[u]_{\le}
}
\qquad \qquad
$(F_2)$\;
\xymatrix{
w' \ar@{.>}[r]^R & j' \\
w \ar@{.>}[u]^{\le} \ar[r]^R & j \ar[u]_{\le}
}
\end{center}

Condition $F_1$ makes the simpler definition of $\Diamond$ monotonic.

It is not a requirement that the objects $j'$ and $w'$ are unique; if they are, the model is said to be \textit{universal}. $IK$ is also sound and complete with respect to universal models \cite[pg. 152]{Sim94}. Since some of our proofs rely on this uniqueness, the models we'll define will be universal models.

The natural complements of those two conditions would be the following, which are not required to hold in $IK$:

\begin{definition}\label{def:F3andF4}
 Conditions $F_3$ and $F_4$ are defined as follows:

\begin{enumerate}

     \item[$(F_3)$] if  $w \leq w'$ and $w' R j'$ then there is a $j$ such that $w R j$ and $j \leq j'$

        \item[$(F_4)$] If $j \leq j'$ and $w' R j'$ then there is a $w$ such that $w R j$ and $w \leq w'$

\end{enumerate}

\begin{center}
$(F_3)$\;
\xymatrix{
w' \ar[r]^R & j' \\
w \ar[u]^{\le} \ar@{.>}[r]^R & j \ar@{.>}[u]_{\le}
}
\qquad \qquad
$(F_4)$\;
\xymatrix{
w' \ar[r]^R & j' \\
w \ar@{.>}[u]^{\le} \ar@{.>}[r]^R & j  \ar[u]_{\le}
}
\end{center}

\end{definition}


Conceptual justifications for $F_{1}$ and $F_{2}$ are given in \cite[pg. 51]{Sim94}, and mathematical justifications are given in \cite{PlotkinStirling1988-PLOAFF}. Roughly speaking, $F_{1}$ can be read as preservation of the property of ``accessing a (suitable) world'', and $F_{2}$ as the property of ``being accessed by a (suitable) world''. It is also worth remarking that $F_{3}$ was used by Božić and Došen in \cite{Bozic1984-BOIMFN}\cite{Dosen1985-DOEMFS}.



Once those basic definitions have been stablished, models for $IK$ can be defined as follows:

\begin{definition}\label{def:birelational}
    A \textit{birelational model} $B$ is any sequence $\langle W, \leq, R , v \rangle$ such that:

   \begin{enumerate}
       \item $W$ is a non-empty set of objects;
       
       \item $\leq$ is a reflexive and transitive relation on $W$;

       \item $R$ is a relation on $W$;
      
       \item $\textit{v}$ is a function assigning a set of atomic propositions $\textit{v}(w)$ to each $w \in W$, satisfying the condition that $w \leq w'$ implies $v(w) \subseteq v(w')$.

       \item Conditions $F_{1}$ and $F_{2}$ are satisfied. The $j'$ of $F_{1}$ and the $w'$ of $F_2$ are also required to be unique.

   \end{enumerate}

\end{definition}

\begin{definition}\label{def:validityinbirelational}
 Given any birelational model $B$ with set of objects $W$ and any $w  \in W$, the relations $(\Vdash^{B}_{w})$, $(\Vdash^{B})$ and $(\Vdash)$ are defined as follows:

 \begin{enumerate}

    \item Clauses 1 through 8 are as in Definition \ref{def:validityinpropositional}, but all occurrences of $K$ are replaced by $B$ and $\Vdash_{P}$ is replaced by $\Vdash$;

    \setcounter{enumi}{8}

    \item $\Vdash_{w}^{B} \Box A \Longleftrightarrow $ for all $w'$, if $w \leq w'$ and $w' R w''$ then $\Vdash_{w''}^{B} A$;

    \smallskip

    \item $\Vdash_{w}^{B} \Diamond A \Longleftrightarrow $ there is a $w'$ such that $ w R w'$ and $\Vdash_{w'}^{B} A$;

\end{enumerate}

\end{definition}

So monotonicity is built into $\Box$, whereas it can easily be proven to hold for $\Diamond$ by using condition $F_{2}$ in a straightforward inductive proof. The use of $\leq$ in the clause for $\Box$ is not seen as problematic because essentially the same treatment is given to intuitionistic implication and universal quantification, which are generally viewed as non-modal analogues of necessity.

Monotonicity does not have to be built into $\Box$ if models are also required to satisfy condition $F_{3}$. In this case, we obtain a semantics for a logic we call $MK$. It follows from our results that by requiring all models to satisfy conditions $F_{1}$, $F_{2}$, $F_{3}$ and $F_{4}$ we still obtain semantics for $MK$, therefore $MK$ is the maximal modal logic obtainable through requirement of satisfaction of those simple frame conditions.


Models for $MK$ are defined as follows:

\begin{definition}
    A \textit{strong model} $S$ is a birelational model satisfying condition $F_{3}$ for unique $j$.
\end{definition}

\begin{definition}\label{def:validityinstrong}
 Given any strong model $S$ with set of objects $W$ and any $w  \in W$, the relations $(\Vdash^{S}_{w})$, $(\Vdash^{S})$ and $(\Vdash^{*})$ are defined as follows:

 \begin{enumerate}

    \item Clauses 1 through 8 are as in Definition \ref{def:validityinpropositional}, but all occurrences of $K$ are replaced by $S$ and $\Vdash_{P}$ is replaced by $\Vdash^{*}$;

    \setcounter{enumi}{8} 

    \smallskip

    \item $\Vdash_{w}^{S} \Box A \Longleftrightarrow $ for all $w'$, $wRw'$ implies  $\Vdash_{w'}^{S} A$;

\smallskip

    \item $\Vdash_{w}^{S} \Diamond A \Longleftrightarrow $ there is a $w'$ such that $ w R w'$ and $\Vdash_{w'}^{S} A$;

\end{enumerate}

\end{definition}

We also define the following stronger notion of model for the sole purpose of showing later that it still yield a semantics for $MK$, which can then indeed be taken as maximal with respect to the frame conditions:

\begin{definition}\label{excessivemodels}
    A \textit{excessive model} $E$ is a strong model satisfying condition $F_{4}$ for unique $w$.
\end{definition}

Although this is not immediately obvious, $MK$ is slightly stronger than $IK$, as witnessed by the following results\footnote{We are thankful to Alex Simpson for providing those counterexamples and dispelling our initial belief that this was still a semantics for $IK$.}:

\begin{proposition}\label{prop:differenceIKMKfirst}
    Let $\top$ be a theorem of $MK$. Then the following hold:

\begin{enumerate}
    \item $\Vdash^{*} (\neg \Box \bot) \to (\Diamond \top)$

    \item $\Vdash^{*} (\Box( A \lor \neg A) \land \neg \Box A) \to (\Diamond \neg A)$.
\end{enumerate}
\end{proposition}

\begin{proposition}\label{prop:IKMKdifferencesecondpart}
    Let $\top$ be a theorem of $IK$. Then the following hold:

\begin{enumerate}
    \item $\nVdash (\neg \Box \bot) \to (\Diamond \top)$

    \item $\nVdash (\Box( A \lor \neg A) \land \neg \Box A) \to (\Diamond \neg A)$.
\end{enumerate}
    
\end{proposition}

Since the validities of $MK$ which are not validities of $IK$ are not unreasonable and $MK$ is still intuitionistic (as the excludded middle can be easily disproved), $MK$ seems to be either the strongest or one of the strongest ``reasonable'' intuitionistic modal logics.



\section{A change of perspective} In the traditional approach, Kripke models for classical modal logic are essentially obtained by picking a set of classical propositional models and putting some relation between them. Birelational models for intuitionistic modal logic are obtained by picking a \textit{single} intuitionistic propositional model and putting a second relation \textit{inside} it. A classical model can intuitively be read as a collection of alternative propositional models, but no such reading can be given to birelational models. They are best understood as a single model with alternative \textit{worlds}. Moreover, each world in a model may also be viewed as a classical propositional model, only the Kripke model itself being intuitionistic.

The traditional approach is conceptually justified by the intended meaning of modalities. On the other hand, most justifications for the birelational approach are essentially mathematical in nature. Discussions on what intuitionistic modal definitions should look like often revolve around which conditions would it be natural for the models to satisfy, or which formulas would it be reasonable for them to validate, instead of what should be the intended meaning of the modalities or of the semantic constructs.


We will now show that it is possible to provide intuitionistic modal semantics through a generalization of the traditional approach. We start by defining an abstract concept of model:

\begin{definition}\label{def:generalmodel}
    An \textit{abstract model} $A$ is a sequence $\langle W, \succ \rangle$, where $W$ is a non-empty set of Kripke models and $\succ$ is a relation $W$.
\end{definition}

The modal accessibility relation $\succ$ is now defined as holding between Kripke models instead of between the worlds of a model. The intuitionistic propositional models themselves are now considered the ``possible worlds'' of the abstract model, and the accessibility relation once again establishes which models are alternative versions of which other models.  

Although the idea behind it is quite intuitive, this definitions creates a practical issue: when defining relations between worlds, we are capable of handpicking a world $w$ and putting it in a modal relation with a specific world $w'$, provided we make sure all required frame conditions are satisfied. On the other hand, when the relation is defined for Kripke models, we have no control over the accessibility relation at the level of worlds. In a sense, by defining an accessibility relation for models we are uniformly defining modal accessibility for all worlds of that model. This is problematic because validity for formulas is first defined at the level of worlds and then carried over to the entire Kripke model, so without some sort of control it seems modalities could not hold in a world without also holding in all others.

The key insight behind the solution to this is that every world in a Kripke model can be taken to conveniently fix a reference point for the definition of its own modal validities  -- namely, \textit{itself}. Let us return to the interpretation of models as timelines in order to illustrate this point. Consider a frame with worlds $\{m, a, e\}$, representing the possible activities of a particular mathematician in the morning, afternoon and evening of a single work day. Now take an abstract model containing three models $K$, $K'$ and $K''$ defined with no other objects. Let us represent the models graphically as follows:

\[
\xymatrix{
e, \{p\} & e, \{p, q\} & e, \{p, q\} \\
a, \{\varnothing \} \ar[u]^{\le} & a, \{p\} \ar[u]^{\le} & a, \{p\} \ar[u]^{\le} \\
m, \{\varnothing \} \ar[u]^{\le} & m, \{p \} \ar[u]_{\le} &  \\
K  & K' & K''
}
\]

Timeline $K$ represents an unproductive day in which, even though the mathematician started working in the morning, they only managed to prove a single theorem $p$ in the evening. In the second timeline, $K'$, the mathematician proves $p$ in the morning (which stays proven afterwards) and manages to prove theorem $q$ in the evening. In the third timeline, $K''$, the mathematician only starts working in the afternoon, but they she still manages to prove theorem $p$ and then prove $q$ in the evening.

Let us assume that, given some criteria of modal accessibility, we have established that $K'$ and $K''$ are acceptable alternative versions of timeline $K$, so $K \succ K'$ and $K' \succ K''$ in our model. How should modal formulas be evaluated at each point of the day? The idea behind the insight is that, in order to evaluate what is possible and necessary in the morning of any given timeline, we only have to check what is and is not the case in the \textit{morning} of its alternative versions. So, for instance, $\Diamond p$ holds in the morning of $K$ because there is at least one timeline which is an alternative to $K$ in which $p$ is proved in the morning (namely, $K'$). In the case of $\Box$ we can either follow the $IK$ definition and also consider the future states of the same time period or the $MK$ modal definition and consider only that specific time period, but in any case we have that $\Box q$ holds in the evening of $K$ if it is not considered an alternative to itself (which is not a given) but does not hold if it is.

The conceptual justification of this idea in no way depends on the reading of models $K$ in terms of timelines. Regardless of what we consider models and worlds to be, the evaluation of a modal formula in a world $w$ of a model $K$ needs only consider what holds in copies of that same world $w$ in models accessible from $K$, which are implicitly taken to be its alternatives. In other words, the ``identity'' of a world is fixed by the object used to represent it in a frame, and modal formulas are evaluated in a world of a model by taking into account its possible occurrences in different propositional models.

Since modal validity in a propositional model depends on which frames are used in other propositional models of the same abstract model, it might be useful to impose some conditions on the frames themselves, which amounts to being selective about which propositional models we allow in our abstract models. In this paper we will deal with abstract models satisfying two kinds of restrictions: \textit{homogeneous models}, in which all models have to be defined using the same frame, and \textit{partially homogeneous models}, in which the frame of each model has to be at least a partial copy of the frame of some other model (satisfying some additional restrictions). The investigation of abstract models themselves will not be conducted here, but we will prove that homogeneous models yield a semantics for $MK$ and partially homogeneous models a semantics for $IK$. Remarkably, no frame conditions or similar structures are necessary. As our proofs show, the frame conditions naturally emerge from interactions between the semantic clauses and the kinds of propositional models we allow in abstract models.



Intuitively, the frame conditions emerge from the new definitions because, when $K \succ K'$, every $ w \in W_{K}$ can be taken as implicitly accessing its own copy in $W_{K'}$. Going back to the example given before, by putting $K \succ K'$ and $K \succ K''$  we would have the following:

\[
\xymatrix{
e, \{p\} \ar@{.>}[r]^{\succ} & e, \{p, q\}  \ar@{.>}[r]^{\succ} & e, \{p, q\} \\
a, \{\varnothing \} \ar[u]^{\le} \ar@{.>}[r]^{\succ} & a, \{p\} \ar[u]^{\le} \ar@{.>}[r]^{\succ} & a, \{p\} \ar[u]^{\le} \\
m, \{\varnothing \} \ar[u]^{\le} \ar@{.>}[r]^{\succ} & m, \{p \} \ar[u]_{\le} &  \\
K  & K' & K''
}
\]

Hence by imposing requirements on the frames of propositional models we can indirectly induce certain combinations of frame conditions. Notice that this also makes it easier to construct countermodels, as instead of checking whether a model satisfy a given set of frame conditions we only have to make sure that it is defined using the correct frames. For instance, if we want to build a strong model we have to check that the defined structure indeed satisfies $F_{1}$, $F_{2}$ and $F_{3}$, but in order to build a homogeneous model it suffices to just define all propositional models using the same frame.

The idea of using worlds as reference for their own modal definitions is also the main feature that distinguishes our semantics from those present in \cite{vanderGiessen2025-VANOTC-12} and \cite{mojtahedi2026provabilitylogicha}. In those semantics, worlds are associated with (not necessarilly classical) theories instead of Kripke models, meaning that in both semantics we have an accessibility relation between possibly non-classical worlds. On the other hand, theories are syntactic objects lacking the inner structure of Kripke models, meaning that they are not compatible with this idea. As will be seen, the idea allows a very refined control of the semantics through requirements on frames, meaning that we can change the semantics by simply restricting the structure of propositional models. It also allows us to easily avoid undesirable semantic properties, such as validation of the Box Excluded Middle axiom \cite{vanderGiessen2025-VANOTC-12}. Finally, although this will not be explored here, if we use $1$-ary models that relates Kripkean $0$-ary models instead of Kripke models that relates theories, models are allowed be further nested arbitrarily many times.

From now on we denote the set of objects, relation and valuation function of a particular propositional model $K$ respectively as $W_{K}$, $\leq_{K}$ and $\textit{v}_{K}$ whenever it is convenient. The notation will also be used for other sets when the object being referred to in the subscript is clear from the context.

\begin{definition} \label{def:partialcopy}
     Let $K$ and $K'$ be propositional models. Then $K'$ is \textit{at least a partial copy} of $K$ if the following conditions are met:

    \begin{enumerate}
        \item $W_{K'} \subset W_{K}$;

        \item If $w \leq_{K} w'$ and $w \in W_{K'}$ then $w' \in W_{K'}$;

        \item For all $w, w' \in K'$, $w \leq_{K} w'$ iff $w \leq_{K'} w'$;
        
    \end{enumerate}
\end{definition}

This is also called a \textit{generated subframe} in some sources (cf. \cite[pg. 28]{Chagrov1997-CHAML-2}).

In short, an at least partial copy $K'$ of $K$ is just a model obtained by repeating at least part of the objects of $K$ without removing any object $w'$ such that $w \leq_{K'} w'$ for the replicated $w$, as well without promoting changes on the relation $\leq_{K'}$ other than those warranted by the removal of objects. Intuitively, this definition allows us to remove worlds of a frame, but \textit{only} if we keep all ``future states'' of worlds that remain in the frame.

\begin{definition} \label{definition:atleastpartiallyhomogeneous}
     A non-empty set $F$ of propositional models is \textit{at least partially homogeneous} if there is a $K \in F$, called its \textit{reference model}, such that $K' \in F$ implies  $K'$ is at least a partial copy of $K$.
\end{definition}

The following property of at least partially homogeneous models will be useful later:

\begin{lemma} \label{lemma:structureofpartialmodels}
    If $w \leq_{K} w'$, $w \in W_{K'}$ and 
    $K$ and $K'$ are contained in a at least partially homogeneous set $F$, then $w' \in W_{K'}$ and $w \leq_{K'} w'$.
\end{lemma}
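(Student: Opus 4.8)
The plan is to route everything through the reference model of $F$. Since $F$ is at least partially homogeneous, Definition~\ref{definition:atleastpartiallyhomogeneous} supplies a reference model, say $M \in F$, such that every model in $F$ — in particular both $K$ and $K'$ — is at least a partial copy of $M$ in the sense of Definition~\ref{def:partialcopy}. All three defining clauses of ``being at least a partial copy'' will be used: clause~2 (upward closure of $\leq$ under passage to a subframe) and clause~3 (the order is inherited exactly, both ways).

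First I would note that from $w \leq_K w'$ we automatically get $w, w' \in W_K$, since $\leq_K$ is by definition a relation on $W_K$. Applying clause~3 of Definition~\ref{def:partialcopy} to the pair $(K, M)$, using $w, w' \in W_K$, yields $w \leq_M w'$. This is the single ``upward'' transfer step to the reference model.

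Next I would transfer ``downward'' into $K'$. By hypothesis $w \in W_{K'}$, and we have just shown $w \leq_M w'$; clause~2 of Definition~\ref{def:partialcopy} applied to the pair $(K', M)$ then yields $w' \in W_{K'}$, which is the first half of the conclusion. With both $w, w' \in W_{K'}$ now available, clause~3 of Definition~\ref{def:partialcopy} applied to $(K', M)$ gives $w \leq_{K'} w' \iff w \leq_M w'$, and since the right-hand side holds we conclude $w \leq_{K'} w'$, the second half.

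There is no substantial obstacle here: the statement is essentially a two-step chase through the reference model, combining the upward-closure condition with the ``$\leq$ is inherited exactly'' condition. The only point requiring care is discharging the membership side-conditions in the right order — $w, w' \in W_K$ for the first use of clause~3, then $w \in W_{K'}$ for clause~2, and only afterwards $w, w' \in W_{K'}$ for the second use of clause~3 — which is why the downward transfer into $K'$ must be carried out before, not after, invoking clause~3 on $K'$.
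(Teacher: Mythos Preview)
Your proposal is correct and follows essentially the same route as the paper: both proofs route through the reference model of $F$, use clause~3 of Definition~\ref{def:partialcopy} to transfer $w \leq_K w'$ up to the reference model, then clause~2 to obtain $w' \in W_{K'}$, and finally clause~3 again to transfer the order down into $K'$. The only cosmetic difference is that the paper invokes clause~1 to record $w, w' \in W_{K''}$ explicitly, whereas you obtain $w, w' \in W_K$ directly from $\leq_K$ being a relation on $W_K$; both serve the same purpose of discharging the membership hypothesis of clause~3.
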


\begin{proof}
    Assume $w \leq_{K} w'$ and $w \in W_{K'}$. Let $K''$ be the reference model of $F$. By Clause 1 of Definition \ref{def:partialcopy} we conclude $w, w' \in W_{K''}$, and by Clause 3 we conclude $w \leq_{K''} w'$. But since $w \in W_{K'}$ we conclude $w' \in W_{K'}$ by Clause 2, so by Clause 3 also $w \leq_{K'} w'$.
\end{proof}

\begin{definition} \label{def:homogeneouset}
     Let $F$ be a non-empty set of propositional models. $F$ is said to be \textit{homogeneous} if, for any $K \in F$ and $K' \in F$,  $W_{K} = W_{K'}$ and $\leq_{K} \ = \ \leq_{K'}$.
\end{definition}

 So in a partially homogeneous set all propositional models share part of their frame with a fixed reference model, whereas in a fully homogeneous set all models have the same frame. In both cases models may differ with respect to their atomic assignment functions, but in fully homogeneous sets they may \textit{only} differ with respect to their functions.

\begin{definition}\label{def:partialmodel}
    An \textit{at least partially homogeneous model} (or just \textit{partial model}) $M$ is any sequence $\langle F, \succ \rangle$ such that:

   \begin{enumerate}
       \item $F$ is an at least partially homogeneous set of propositional models;
       
       \item $\succ$ is a relation on $F$.
   \end{enumerate}
\end{definition}

\begin{definition}\label{def:homogeneousmodel}
    A \textit{homogeneous model} $H$ is any sequence $\langle F, \succ \rangle$ such that:

   \begin{enumerate}
       \item $F$ is a homogeneous set of propositional models;
       
       \item $\succ$ is a relation on $F$.
   \end{enumerate}
\end{definition}



\begin{definition}\label{def:validityinpartialsimilaritymodel}
 Given any partial model $M$, any $K \in F_{M}$ and any $w \in W_{K}$, the relations $(\vDash^{M, K}_{w})$,  $(\vDash^{M, K})$ $(\vDash^{M})$ and $(\vDash)$ are defined as follows:
 \begin{enumerate}

    \item $\vDash_{w}^{M, K} p  \Longleftrightarrow  \ p \in v_{K}(w)$, for atomic $p$;

    \smallskip
    
    \item $\vDash_{w}^{M, K} A \land B \Longleftrightarrow \ \vDash_{w}^{M, K} A$ and $\vDash_{w}^{M, K} B$;

    \smallskip
    
    \item $\vDash_{w}^{M, K} A \lor B \Longleftrightarrow \ \vDash_{w}^{M, K} A$ or $\vDash_{w}^{M, K} B$;

    \smallskip
    
    \item $\vDash_{w}^{M, K} A \to B \Longleftrightarrow A \vDash_{w}^{M, K} B$;

    \smallskip

    \item $\nvDash_{w}^{M, K} \bot$ for any $w \in W_{K}$ and $K \in F_{M}$;

    \smallskip
    
    \item $\vDash_{w}^{M, K} \Box A \Longleftrightarrow $ for all $w'$ with $w \leq_{K} w'$, if $ K \succ K'$ and $w' \in K'$ then $\vDash_{w'}^{M, K'} A$;

    \smallskip

    \item $\vDash_{w}^{M, K} \Diamond A \Longleftrightarrow $ there is a $K'$ with $ K \succ K'$ such that $w \in K'$ and $\vDash_{w}^{M, K'} A$;

    \smallskip
  
    \item For non-empty $\Gamma$, $\Gamma \vDash_{w}^{M, K} A \Longleftrightarrow \forall w' (w \leq_{K} w' ): \vDash_{w'}^{M, K} B$ for all $B \in \Gamma$ implies $ \vDash_{w'}^{M, K} A$;

\smallskip
    
     \item $\Gamma \vDash^{M, K} A \Longleftrightarrow \forall w (w \in W_{K}): \Gamma \vDash_{w}^{M, K} A$;

     \smallskip
     
     \item $\Gamma \vDash^{M} A \Longleftrightarrow \forall K (K \in F_{M}): \Gamma \vDash^{M, K} A$;

\smallskip

    \item $\Gamma \vDash A \Longleftrightarrow \Gamma \vDash^{M} A$ holds for all $M$.
\end{enumerate}

\end{definition}

\begin{definition}\label{def:validityinfullsimilaritymodel}
 Given any homogeneous model $H$, any $K \in F_{H}$ and any $w \in W_{K}$, the relations $(\vDash^{H, K}_{w})$,  $(\vDash^{H, K})$ $(\vDash^{H})$ and $(\vDash^{*})$ are defined as follows:
 \begin{enumerate}

    \item Clauses 1 through 5 and 8 through 10 are as in Definition \ref{def:validityinpartialsimilaritymodel}, but all occurrences of $K$ are replaced by $B$ and $\vDash$ is replaced by $\vDash^{*}$;

    \setcounter{enumi}{5}

\smallskip

    \item $\vDash_{w}^{H, K} \Box A \Longleftrightarrow $ for all $K'$, $K \succ K'$ implies $\vDash_{w}^{H, K'} A$;

\smallskip

  \item   $\vDash_{w}^{H, K} \Diamond A \Longleftrightarrow $ there is a $K'$ such that $ K \succ K'$ and $\vDash_{w}^{H, K'} A$;
    
\end{enumerate}

\end{definition}

Notice that $w \in W_{K}$ implies $w \in W_{K'}$ when both $K$ and $K'$ are in the same homogeneous model, so the respective conditions on the modal clauses for partial models can be dropped.  Moreover, since all propositional models in the same homogeneous model $H$ share the same structure, we can uniformly refer to their sets $W$ and relation $\leq$ without specification.

The definition of validity can be used to show monotonicity:

 \begin{lemma}\label{lemma:monotonicity}[Monotonicity]  The following results hold:

\begin{enumerate}

    \item  For all $w'$, $\Gamma \vDash^{M, K}_{w} A$ and $w \leq_{K} w'$ implies $\Gamma \vDash^{M, K}_{w'} A$;

\smallskip

    \item  For all $w'$, $\Gamma \vDash_{w}^{H, K} A$ and $w \leq w'$ implies $\Gamma \vDash^{H, K}_{w'} A$;
\end{enumerate}

\end{lemma}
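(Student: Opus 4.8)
The plan is to prove both items simultaneously by induction on the complexity of the formula $A$, with the induction hypothesis stated uniformly for every propositional model occurring in the given higher-order model, so that when a modal clause sends us from a model $K$ to a $\succ$-accessible model $K'$ we may reuse the hypothesis for the relevant subformula inside $K'$. Before the induction I would dispose of two cases that need no inductive step. First, when $\Gamma$ is non-empty, $\Gamma \vDash^{K,M}_w A$ is, by Clause 8 of Definition \ref{def:validityinpartialsimilaritymodel} (and its homogeneous counterpart), a statement universally quantified over the $\leq_K$-successors of $w$; since $\leq_K$ is transitive, replacing $w$ by a $\leq_K$-larger $w'$ only shrinks the range of that quantifier, so the implication is immediate. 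The same remark settles the connective case $A \to B$, since $\vDash^{K,M}_w (A\to B)$ unfolds to $\{A\}\vDash^{K,M}_w B$. It therefore remains to treat the empty-context relation $\vDash^{K,M}_w A$ (and $\vDash^{K,H}_w A$) by induction on $A$.

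The base cases are straightforward: for atomic $A$ this is exactly the heredity condition on $v_K$ imposed in Definition \ref{minimalpropositionalmodels}, and for $A = \bot$ the conclusion holds vacuously. The cases $A \land B$ and $A \lor B$ follow by applying the induction hypothesis to the immediate subformulas. The real content is in the modal cases, and here the two items behave a little differently. For homogeneous models, $\vDash^{K,H}_w \Box A$ asserts $\vDash^{K',H}_w A$ for every $K'$ with $K\succ K'$, and $\vDash^{K,H}_w \Diamond A$ asserts it for some such $K'$; since every propositional model in $H$ carries the same frame, the relation $w\leq w'$ is available inside each $K'$, and one simply applies the induction hypothesis for $A$ inside $K'$ to lift $\vDash^{K',H}_w A$ to $\vDash^{K',H}_{w'} A$, which gives the result for both $\Box$ and $\Diamond$.

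For partial models the $\Box$ case is in fact the easiest: $\vDash^{K,M}_w \Box A$ demands $\vDash^{K',M}_{w''} A$ for every $w''$ with $w\leq_K w''$ and every $\succ$-successor $K'$ containing $w''$, so if $w\leq_K w'$ then any $w''$ with $w'\leq_K w''$ already satisfies $w\leq_K w''$ by transitivity, and the instances required at $w'$ form a subset of those already assumed — the $\leq$-step is absorbed into the $\Box$-quantifier and the induction hypothesis is not even needed. The $\Diamond$ case is where the structure of partial models is genuinely used: from $\vDash^{K,M}_w \Diamond A$ we obtain some $K'$ with $K\succ K'$, $w\in W_{K'}$ and $\vDash^{K',M}_w A$; given $w\leq_K w'$, Lemma \ref{lemma:structureofpartialmodels} guarantees that $w'\in W_{K'}$ and $w\leq_{K'} w'$, whereupon the induction hypothesis for $A$ inside $K'$ yields $\vDash^{K',M}_{w'} A$, so $K'$ still witnesses $\vDash^{K,M}_{w'}\Diamond A$.

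The step I expect to be the main obstacle — really the only point that requires care — is precisely this bookkeeping about which worlds survive in which propositional models: one must ensure that the world needed to witness a modal formula at $w'$ is actually present in the relevant $\succ$-accessible model. Lemma \ref{lemma:structureofpartialmodels}, which encodes the upward-closure built into the notion of partial copy in Definition \ref{def:partialcopy}, is exactly the tool that closes this gap; it plays here the role that the frame conditions $F_1$ and $F_2$ play in the monotonicity argument for birelational models, and its absence is what would make naive attempts to push the modal cases through fail.
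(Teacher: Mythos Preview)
Your proposal is correct and follows essentially the same approach as the paper: the non-modal cases and the non-empty $\Gamma$ case are dismissed as routine, the partial $\Box$ case is handled by transitivity of $\leq_K$ without invoking the induction hypothesis, the homogeneous $\Box$ and $\Diamond$ cases apply the induction hypothesis inside the accessible model $K'$, and the partial $\Diamond$ case uses Lemma~\ref{lemma:structureofpartialmodels} to transport both membership of $w'$ and the relation $w\leq_{K'} w'$ into $K'$ before applying the induction hypothesis there. Your identification of Lemma~\ref{lemma:structureofpartialmodels} as the crux of the partial $\Diamond$ case is exactly right.
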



In the next two sections we present validity-preserving mappings to and from birelational models w.r.t. partial models, as well as to and from strong models w.r.t. homogeneous models, thus showing that partial models provide a semantics for $IK$ and homogeneous models a semantics for $MK$.

\section{Mapping the new models into the traditional models}

We now define a mapping that transforms partial models into equivalent birelational models and homogeneous models into equivalent strong models. For the sake of convenience, we present the mapping itself for abstract models:

\begin{definition}  \label{def:definitionequivalencebirelationalmodel}

 For any abstract model $A = \langle F, \succ  \rangle$, its \textit{birelational equivalent} $A^{*} = \langle W^*,  \leq^{*}, R^{*}, v^{*} \rangle$ is defined as follows:

\begin{enumerate}
\item $W^* = F^{*}$, where $F^{*} = \{\langle w, K \rangle | w \in W_{K}$ and $K \in F\}$;

\item $v^{*} (\langle w, K \rangle) = v_{K}(w)$

    \item   $\langle w, K \rangle  \leq^{*} \langle w', K' \rangle $ iff $K = K'$ and $w \leq_{K} w'$;
    
    \item   $\langle w, K \rangle  R^{*} \langle w', K' \rangle $ iff $w = w'$ and $K \succ K'$;
\end{enumerate}
\end{definition}

So to every pair of a world $w$ with a model $K$ that contains it in $A$ we associate a world $\langle w, K \rangle $ in $A^{*}$, hence every distinct occurrence of the same world in the original model becomes a different world in the new model. The function $v^{*}$ assures us that an atom is valid in the $w$ of $K$ if and only if it is valid in $\langle w, K \rangle$, and the definition of $\leq^{*}$ guarantees that the ``internal structure" of each $K$ is preserved in the new model. Finally, the definition of $R^{*}$ makes it so that a world $\langle w, K \rangle$ can access a world $\langle w', K' \rangle$ only if $w$ and $w'$ are actually occurrences of the same worlds in any models $K$ and $K'$ such that $K \succ K'$, so the accessibility between models is transformed into an accessibility between worlds. Notice also that $\langle w, K \rangle R^{*} \langle w, K' \rangle$ only holds if both $K$ and $K'$ contain $w$, as $w \notin W_{K}$ implies $\langle w, K \rangle \notin F^{*}$ and $w \notin W_{K'}$ implies $\langle w, K \rangle \notin F^{*}$. In what follows we will be interested specifically in the birelational equivalents of partial models $M$ and homogeneous models $H$, denoted by $M^{*}$ and $H^{*}$.

This mapping is very similar to a mapping of birelational models into intuitionistic first-order models presented by Simpson in \cite[pg. 151]{Sim94}, the main difference being that Simpson uses pairs $\langle w, d\rangle$ of worlds and elements of their domain instead of pairs of models and elements of their frame


Now we prove the following:

\begin{lemma} \label{lemma:existencebirelationamodel}
 If $M$ is a partial model then $M^{*}$ is a birelational model.
\end{lemma}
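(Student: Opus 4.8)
The plan is to verify that $M^{*} = \langle W^{*}, \leq^{*}, R^{*}, v^{*}\rangle$ satisfies each of the five clauses of Definition~\ref{def:birelational}. Clauses 1--4 are essentially bookkeeping: $W^{*} = F^{*}$ is non-empty because $F_{M}$ is non-empty and each $K \in F_{M}$ has non-empty $W_{K}$; reflexivity and transitivity of $\leq^{*}$ follow immediately from reflexivity and transitivity of each $\leq_{K}$ together with the fact that $\langle w,K\rangle \leq^{*} \langle w',K'\rangle$ forces $K = K'$; $R^{*}$ is a relation on $W^{*}$ by construction; and $v^{*}$ satisfies the heredity condition because $\langle w,K\rangle \leq^{*} \langle w',K'\rangle$ gives $K = K'$ and $w \leq_{K} w'$, so $v^{*}(\langle w,K\rangle) = v_{K}(w) \subseteq v_{K}(w') = v^{*}(\langle w',K'\rangle)$ by the heredity of $v_{K}$.

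The real content is Clause 5: $M^{*}$ must satisfy $F_{1}$ and $F_{2}$ with the relevant witnesses unique. For $F_{1}$, suppose $\langle w,K\rangle \leq^{*} \langle w',K''\rangle$ and $\langle w,K\rangle R^{*} \langle j, K'\rangle$. Unpacking: $K'' = K$, $w \leq_{K} w'$, $j = w$, and $K \succ K'$, with $w \in W_{K'}$ (since $\langle w,K'\rangle \in F^{*}$). I would then invoke Lemma~\ref{lemma:structureofpartialmodels}: from $w \leq_{K} w'$ and $w \in W_{K'}$ (both $K, K'$ in the partially homogeneous set $F_{M}$) we get $w' \in W_{K'}$ and $w \leq_{K'} w'$. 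The required $j'$ is $\langle w', K'\rangle$: it lies in $F^{*}$, we have $\langle w, K'\rangle \leq^{*} \langle w', K'\rangle$ since $w \leq_{K'} w'$, i.e. $j = \langle w,K'\rangle \leq^{*} j'$, and $\langle w', K\rangle R^{*} \langle w', K'\rangle$ since $K \succ K'$, i.e. $\langle w',K''\rangle R^{*} j'$. Uniqueness of $j'$ is forced by the definition of $R^{*}$: any $j'$ with $\langle w',K\rangle R^{*} j'$ must be of the form $\langle w', K'\rangle$ with the same first coordinate $w'$ and the same second coordinate $K'$ (the $K'$ being pinned down since we also require $j \leq^{*} j'$, which forces the second coordinate of $j'$ to equal that of $j$).

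For $F_{2}$, suppose $\langle w,K\rangle R^{*} \langle j,K'\rangle$ and $\langle j,K'\rangle \leq^{*} \langle j',K''\rangle$. Then $j = w$, $K \succ K'$, $w \in W_{K'}$, $K'' = K'$, and $w \leq_{K'} w = j \leq_{K'} j'$ so $w \leq_{K'} j'$ with $j' \in W_{K'}$. Here the witness $w'$ should be $\langle j', K\rangle$, but this requires $j' \in W_{K}$ and $w \leq_{K} j'$; I would obtain these by noting that since $F_{M}$ is partially homogeneous with reference model, say, $K^{\dagger}$, and $w \in W_{K'} \subseteq W_{K^{\dagger}}$, $j' \in W_{K'} \subseteq W_{K^{\dagger}}$ with $w \leq_{K^{\dagger}} j'$ (Clause 3 of Definition~\ref{def:partialcopy}), and then since $w \in W_{K}$ and $w \leq_{K^{\dagger}} j'$, Clause 2 of Definition~\ref{def:partialcopy} applied to $K$ gives $j' \in W_{K}$, whence $w \leq_{K} j'$ by Clause 3. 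Then $\langle w,K\rangle \leq^{*} \langle j',K\rangle$ and $\langle j',K\rangle R^{*} \langle j',K'\rangle$, as desired, with uniqueness again forced by the coordinate constraints built into $\leq^{*}$ and $R^{*}$.

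The main obstacle I anticipate is the bookkeeping around $F_{2}$: one must be careful that the candidate witness $\langle j', K\rangle$ actually lies in $F^{*}$, i.e. that $j' \in W_{K}$, which is not immediate and genuinely uses the "no future states removed" clause of the partial-copy definition via the reference model. Getting the quantifier structure of Lemma~\ref{lemma:structureofpartialmodels} aligned with the coordinates appearing in $F_{1}$ and $F_{2}$ is the only delicate point; the uniqueness assertions, by contrast, are automatic once one observes that $R^{*}$ never changes the world-coordinate and $\leq^{*}$ never changes the model-coordinate, so the pair of constraints "$\cdot R^{*} j'$ and $j \leq^{*} j'$" pins down both coordinates of $j'$ completely.
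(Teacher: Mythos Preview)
Your proposal is correct and follows essentially the same approach as the paper: verify reflexivity/transitivity of $\leq^{*}$ and heredity of $v^{*}$ by reduction to the corresponding properties of each $K$, then establish $F_{1}$ and $F_{2}$ (with uniqueness) by unpacking the coordinate constraints of $\leq^{*}$ and $R^{*}$ and using the partial-copy structure of $F_{M}$. The only cosmetic difference is that for $F_{2}$ the paper invokes Lemma~\ref{lemma:structureofpartialmodels} directly (with the roles of $K$ and $K'$ swapped relative to its use in $F_{1}$) to conclude $j' \in W_{K}$ and $w \leq_{K} j'$, whereas you inline that lemma's proof by passing through the reference model; the content is identical.
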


    \begin{lemma} \label{lemma:correspondenceproofmappingsimilarityintobirelationa}
        $\Gamma \vDash^{M,K}_{w} A$ iff $\Gamma \Vdash^{M^{*}}_{\langle w, K\rangle} A$, for all $w \in W_{K}$ and $K \in F$.
    \end{lemma}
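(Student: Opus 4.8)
The plan is to prove the biconditional $\Gamma \vDash^{M,K}_{w} A$ iff $\Gamma \Vdash^{M^{*}}_{\langle w, K\rangle} A$ by induction on the structure of the formula $A$, simultaneously handling the case of non-empty $\Gamma$ (which involves the universal quantification over $\leq$-successors). The base cases are immediate: for an atom $p$, $\vDash^{M,K}_{w} p$ means $p \in v_K(w)$, which by clause (2) of Definition \ref{def:definitionequivalencebirelationalmodel} equals $v^*(\langle w, K\rangle)$, so this matches $\Vdash^{M^*}_{\langle w, K\rangle} p$; and $\bot$ is never forced on either side. The connectives $\wedge, \vee$ are routine appeals to the induction hypothesis. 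The case $A = B \to C$ and the case of non-empty $\Gamma$ both reduce, via clauses (4) and (8)/(6) of the respective validity definitions, to quantifying over worlds $w'$ with $w \leq_K w'$ on the left versus pairs $\langle w', K'\rangle$ with $\langle w, K\rangle \leq^* \langle w', K'\rangle$ on the right; the key observation is that by the definition of $\leq^*$, such pairs are exactly the $\langle w', K\rangle$ with $w \leq_K w'$ (same second coordinate), so the two quantifications range over canonically matched sets and the induction hypothesis closes the case.

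The interesting cases are the modalities. For $A = \Box B$: on the left, $\vDash^{M,K}_{w} \Box B$ holds iff for all $w'$ with $w \leq_{K'} w'$, whenever $K \succ K'$ and $w' \in W_{K'}$ we have $\vDash^{K',M}_{w'} B$. On the right, $\Vdash^{M^*}_{\langle w, K\rangle} \Box B$ holds iff for all $\langle x, K'\rangle$ with $\langle w, K\rangle \leq^* \langle x, K'\rangle$ and all $\langle x', K''\rangle$ with $\langle x, K'\rangle R^* \langle x', K''\rangle$ we have $\Vdash^{M^*}_{\langle x', K''\rangle} B$. Unpacking $\leq^*$ forces $K' = K$ and $x = w'$ for some $w \leq_K w'$; unpacking $R^*$ forces $x' = x = w'$ and $K \succ K''$; and membership in $F^*$ forces $w' \in W_{K''}$. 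So the right-hand condition says: for all $w'$ with $w \leq_K w'$ and all $K''$ with $K \succ K''$ and $w' \in W_{K''}$, $\Vdash^{M^*}_{\langle w', K''\rangle} B$. I would then show this is equivalent to the left-hand condition. One direction is essentially direct given the induction hypothesis. The subtle point is matching "$w \leq_{K'} w'$ with $w' \in W_{K'}$" (the left clause, where the $\leq$ used is that of the accessed model $K'$) against "$w \leq_K w'$" (the right clause, where $\leq$ is that of $K$): here I invoke Lemma \ref{lemma:structureofpartialmodels}, which tells us that in a partial model, $w \leq_K w'$ together with $w \in W_{K'}$ gives $w' \in W_{K'}$ and $w \leq_{K'} w'$, and conversely Definition \ref{def:partialcopy}(3) gives that $w \leq_{K'} w'$ with both worlds in $K'$ implies $w \leq_K w'$. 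This ensures the two ranges of quantification coincide.

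For $A = \Diamond B$: on the left, $\vDash^{M,K}_{w} \Diamond B$ holds iff there is a $K'$ with $K \succ K'$, $w \in W_{K'}$, and $\vDash^{K',M}_{w} B$. On the right, $\Vdash^{M^*}_{\langle w, K\rangle} \Diamond B$ holds iff there is a $\langle x, K'\rangle$ with $\langle w, K\rangle R^* \langle x, K'\rangle$ and $\Vdash^{M^*}_{\langle x, K'\rangle} B$; unpacking $R^*$ forces $x = w$ and $K \succ K'$, and membership in $F^*$ gives $w \in W_{K'}$. So both sides say: there is $K'$ with $K \succ K'$ and $w \in W_{K'}$ and (via induction hypothesis) $\vDash^{K',M}_{w} B$ iff $\Vdash^{M^*}_{\langle w, K'\rangle} B$. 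This case is a clean match with no frame-condition subtlety.

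I expect the main obstacle to be the bookkeeping in the $\Box$ case — specifically, being careful about \emph{which} model's $\leq$-relation appears in the semantic clause versus in the definition of $\leq^*$, and threading Lemma \ref{lemma:structureofpartialmodels} through to reconcile them. Everything else is routine structural induction; the only place where the specific structure of partial models (as opposed to arbitrary general models) is actually used is in this reconciliation, which is exactly why partial models, and not general models, give the birelational semantics with conditions $F_1, F_2$.
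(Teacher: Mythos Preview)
Your proposal is correct and follows essentially the same approach as the paper: structural induction on $A$, unpacking the definitions of $\leq^{*}$ and $R^{*}$ in each case to see that the quantifiers on the two sides range over matching sets, then applying the induction hypothesis. The reconciliation you flag in the $\Box$ case between $\leq_{K'}$ and $\leq_{K}$ is addressing what is evidently a typo in clause~(6) of Definition~\ref{def:validityinpartialsimilaritymodel} --- the paper's own proof reads that clause with $\leq_{K}$ throughout (as do the monotonicity proof and the later homogeneous-model arguments) and does not invoke Lemma~\ref{lemma:structureofpartialmodels} at that point, so your extra care there is more than the paper itself exercises.
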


The results can also be from partial models to homogeneous models as follows:

\begin{lemma} \label{lemma:existencestrongbirelationamodel}
 If $H$ is a homogeneous model then $H^{*}$ is a strong model.
\end{lemma}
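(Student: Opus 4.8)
The plan is to show that $H^{*}$ satisfies the definition of a strong model, i.e. that it is a birelational model that additionally satisfies condition $F_{3}$ for a unique witness $j$. Since a homogeneous model is in particular a partial model (a homogeneous set of propositional models trivially satisfies the conditions for being at least partially homogeneous, taking any element as the reference model), Lemma \ref{lemma:existencebirelationamodel} already tells us that $H^{*}$ is a birelational model. So the only thing left to verify is condition $F_{3}$ together with uniqueness of the witness.

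First I would unwind what $F_{3}$ says for $H^{*}$: given pairs $\langle w, K\rangle \leq^{*} \langle w', K'\rangle$ and $\langle w', K'\rangle R^{*} \langle w'', K''\rangle$, we must produce a unique $\langle j, L\rangle$ with $\langle w, K\rangle R^{*} \langle j, L\rangle$ and $\langle j, L\rangle \leq^{*} \langle w'', K''\rangle$. By the definition of $\leq^{*}$ the first hypothesis forces $K = K'$ and $w \leq_{K} w'$; by the definition of $R^{*}$ the second forces $w' = w''$ and $K' \succ K''$. The natural candidate for the witness is $\langle w, K''\rangle$: since $K = K'$ we have $K \succ K''$, and because $H$ is homogeneous $W_{K} = W_{K''}$, so from $w \in W_{K}$ we get $w \in W_{K''}$ and hence $\langle w, K''\rangle \in F^{*}$. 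Then $\langle w, K\rangle R^{*} \langle w, K''\rangle$ holds by definition of $R^{*}$ (same world, $K \succ K''$), and $\langle w, K''\rangle \leq^{*} \langle w'', K''\rangle$ holds by definition of $\leq^{*}$, since $K'' = K''$ and, using homogeneity once more so that $\leq_{K} = \leq_{K''}$, we have $w \leq_{K''} w'$ from $w \leq_{K} w'$, and $w' = w''$. For uniqueness, suppose $\langle j, L\rangle$ is any witness: from $\langle w, K\rangle R^{*} \langle j, L\rangle$ we get $j = w$, and from $\langle j, L\rangle \leq^{*} \langle w'', K''\rangle$ we get $L = K''$, so $\langle j, L\rangle = \langle w, K''\rangle$, as required.

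The proof is almost entirely a matter of unwinding definitions, so I do not expect a genuine obstacle; the only place requiring care is making sure homogeneity is invoked at exactly the two spots where it is needed — namely to transport membership $w \in W_{K''}$ and to transport the order $w \leq_{K''} w'$ — which is precisely where the weaker partial-homogeneity condition would not suffice (and indeed must not, since partial models only give $IK$, which lacks $F_{3}$). One subtlety worth a sentence in the write-up: Lemma \ref{lemma:existencebirelationamodel} is stated for partial models, so I would first note explicitly that every homogeneous model is a partial model before appealing to it, rather than re-proving the birelational-model axioms from scratch.

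\begin{proof}
Since every homogeneous set of propositional models is trivially an at least partially homogeneous set (pick any element as the reference model; the three clauses of Definition \ref{def:partialcopy} then hold with $W_{K'} = W_{K}$ and $\leq_{K'} = \leq_{K}$ for all members), every homogeneous model $H$ is in particular a partial model. Hence by Lemma \ref{lemma:existencebirelationamodel}, $H^{*}$ is a birelational model. It remains only to verify that $H^{*}$ satisfies condition $F_{3}$ for a unique $j$.

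Assume $\langle w, K \rangle \leq^{*} \langle w', K' \rangle$ and $\langle w', K' \rangle R^{*} \langle w'', K'' \rangle$. From the definition of $\leq^{*}$ we get $K = K'$ and $w \leq_{K} w'$, and from the definition of $R^{*}$ we get $w' = w''$ and $K' \succ K''$. Consider the pair $\langle w, K'' \rangle$. Since $K = K'$ and $K' \succ K''$ we have $K \succ K''$. Since $\langle w, K \rangle \leq^{*} \langle w', K' \rangle$ we have $\langle w, K \rangle \in F^*$, so $w \in W_{K}$; as $H$ is homogeneous, $W_{K} = W_{K''}$, hence $w \in W_{K''}$ and $\langle w, K'' \rangle \in F^{*}$. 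Thus $\langle w, K \rangle R^{*} \langle w, K'' \rangle$ by the definition of $R^{*}$. Moreover, since $H$ is homogeneous we have $\leq_{K} \ = \ \leq_{K''}$, so from $w \leq_{K} w'$ we obtain $w \leq_{K''} w'$, and since $w' = w''$ also $w \leq_{K''} w''$; hence $\langle w, K'' \rangle \leq^{*} \langle w'', K'' \rangle$ by the definition of $\leq^{*}$. So the desired witness exists. For uniqueness, let $\langle j, L \rangle$ be any pair with $\langle w, K \rangle R^{*} \langle j, L \rangle$ and $\langle j, L \rangle \leq^{*} \langle w'', K'' \rangle$. From $\langle w, K \rangle R^{*} \langle j, L \rangle$ the definition of $R^{*}$ gives $j = w$, and from $\langle j, L \rangle \leq^{*} \langle w'', K'' \rangle$ the definition of $\leq^{*}$ gives $L = K''$, so $\langle j, L \rangle = \langle w, K'' \rangle$.
\end{proof}
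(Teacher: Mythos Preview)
Your proof is correct and follows essentially the same approach as the paper's: reduce to Lemma \ref{lemma:existencebirelationamodel} via the observation that homogeneous models are partial models, then verify $F_{3}$ with unique witness $\langle w, K'' \rangle$ by unwinding the definitions of $\leq^{*}$ and $R^{*}$ and invoking homogeneity for membership and order. The only cosmetic difference is that the paper drops all relational subscripts up front (since $\leq_{K} = \leq_{K'}$ for all $K,K' \in F$), whereas you carry them and transport explicitly; both are fine.
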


The result below is not used in any of our proofs, but it shows that the models obtained through the mapping satisfy condition $F_{4}$:

\begin{proposition}\label{prop:excessivemodels}
    If $\langle w, K \rangle \leq^{*} \langle w', K' \rangle$ and $\langle w'', K'' \rangle R^{*} \langle w', K' \rangle$ then there is a unique $\langle w''', K''' \rangle$ such that $\langle w''', K''' \rangle R^{*} \langle w, K \rangle$ and $\langle w''', K''' \rangle \leq^{*} \langle w'', K'' \rangle$.
\end{proposition}

 \begin{lemma} \label{lemma:correspondenceproofmappingshomogeneous}
        $\Gamma \vDash^{H,K}_{w} A$ iff $\Gamma \Vdash^{M^{*}}_{\langle w, K\rangle} A$, for all $w \in W_{K}$ and $H \in F$.
    \end{lemma}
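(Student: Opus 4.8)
The plan is to mirror the proof of Lemma~\ref{lemma:correspondenceproofmappingsimilarityintobirelationa}, proceeding by induction on the number of logical symbols in $A$ and reading the statement as $\Gamma \vDash^{H,K}_{w} A$ iff $\Gamma \Vdash^{H^{*}}_{\langle w, K\rangle} A$ for every $K \in F_{H}$ and $w \in W_{K}$; by Lemma~\ref{lemma:existencestrongbirelationamodel} the object $H^{*}$ is a strong model, so the right-hand side is governed by Definition~\ref{def:validityinstrong}. The clauses for atoms, $\bot$, $\wedge$, $\vee$, $\to$ and the clause for non-empty $\Gamma$ in Definitions~\ref{def:validityinfullsimilaritymodel} and~\ref{def:validityinstrong} are exactly the clauses used on both sides of Lemma~\ref{lemma:correspondenceproofmappingsimilarityintobirelationa}, and the relation $\leq^{*}$ of $H^{*}$ still satisfies $\langle w, K\rangle \leq^{*} \langle w', K'\rangle \Leftrightarrow (K = K' \text{ and } w \leq_{K} w')$; hence all those induction steps and the $\Gamma \neq \emptyset$ step are verbatim transcriptions of what was done there. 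The only genuinely new cases are $A = \Box B$ and $A = \Diamond B$, and they are in fact shorter than in the partial-model setting, since neither the homogeneous $\Box$-clause nor the strong-model $\Box$-clause mentions $\leq$.

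For $A = \Box B$: $\vDash^{H,K}_{w} \Box B$ holds iff $\vDash^{K',H}_{w} B$ for every $K'$ with $K \succ K'$, while $\Vdash^{H^{*}}_{\langle w, K\rangle} \Box B$ holds iff $\Vdash^{H^{*}}_{\langle w'', K''\rangle} B$ for every $\langle w'', K''\rangle$ with $\langle w, K\rangle R^{*} \langle w'', K''\rangle$. By the definition of $R^{*}$ such pairs are precisely the $\langle w, K'\rangle$ with $K \succ K'$ (and $\langle w, K'\rangle \in F^{*}$ automatically, since $w \in W_{K}$ and homogeneity of $F$ force $w \in W_{K'}$), so the right-hand condition reads ``$\Vdash^{H^{*}}_{\langle w, K'\rangle} B$ for every $K'$ with $K \succ K'$''; applying the induction hypothesis to $B$ replaces each $\Vdash^{H^{*}}_{\langle w, K'\rangle} B$ by $\vDash^{K',H}_{w} B$, and the two statements coincide. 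For $A = \Diamond B$: $\vDash^{H,K}_{w} \Diamond B$ iff there is a $K'$ with $K \succ K'$ and $\vDash^{K',H}_{w} B$, while $\Vdash^{H^{*}}_{\langle w, K\rangle} \Diamond B$ iff there is a pair $\langle w'', K''\rangle$ with $\langle w, K\rangle R^{*} \langle w'', K''\rangle$ and $\Vdash^{H^{*}}_{\langle w'', K''\rangle} B$, i.e.\ (again by the definition of $R^{*}$) iff there is a $K'$ with $K \succ K'$ and $\Vdash^{H^{*}}_{\langle w, K'\rangle} B$; the induction hypothesis turns the latter into $\vDash^{K',H}_{w} B$, so both sides agree.

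I do not anticipate any real obstacle. The two points needing a moment's care are (i) that the right-hand relations are well defined, which is Lemma~\ref{lemma:existencestrongbirelationamodel}, and (ii) that homogeneity of $F$ makes the side conditions ``$w'' \in W_{K''}$'' redundant, exactly as noted after Definition~\ref{def:validityinfullsimilaritymodel}; every other step is already contained in the proof of Lemma~\ref{lemma:correspondenceproofmappingsimilarityintobirelationa}.
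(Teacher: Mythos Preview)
Your proposal is correct and takes essentially the same approach as the paper: the paper likewise declares all non-modal cases and the $\Gamma\neq\emptyset$ case identical to Lemma~\ref{lemma:correspondenceproofmappingsimilarityintobirelationa} and treats only $\Box B$ and $\Diamond B$, unfolding the definition of $R^{*}$ and invoking homogeneity of $F$ to obtain $w\in W_{K'}$ exactly as you do. Your presentation is in fact a bit more streamlined than the paper's, but the argument is the same.
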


\begin{proof}
   Proofs of all five results are in the Appendix.
\end{proof}

\section{Mapping the traditional models into the new models}

This direction is trickier because, in order to build one of the new models, we must fix a single frame and either use it as a reference frame (partial models) or as the only frame (homogeneous models). Thus, in order to provide a uniform mapping for all models, we must fix a frame in which every object $j$ has the greatest possible number of non-modally accessible objects $j \leq j'$. For the sake of simplicity, we restrict our attention only to frames with countably many objects, but our strategy can also be adapted to deal with uncountable sets. Just like before, we start by proving our results for partial models and then show how they can be adapted to homogeneous models.

\begin{definition}
 A \textit{canonical quasi-structure} is a pair $\langle Q, R\ \rangle$ such that:

\begin{enumerate}
    \item $Q_{n}$ and $Q_{m}$ are non-empty disjoint sets of worlds whenever $n, m \geq 0$ and $m \neq n$, and $Q = \bigcup_{n\geq 0}Q_{n} $ (we may specify the $Q_{n}$ to which a particular $j \in Q$ belongs when this is relevant);

    \item $R$ is a relation on the elements of $Q$ such that, for every $j \in Q_{n}$:

    \begin{enumerate}
        \item  There are infinitely many $j' \in Q_{n+1}$ such that $jRj'$;

        \item For all $n > 0$, there is exactly one $j' \in Q_{n - 1}$ such that $j'Rj$;

        \item For no other $j'$ it holds that $jRj'$ or  $j'Rj$.
       \end{enumerate}

\end{enumerate}

\end{definition}

So $R$ induces a tree structure in which every object $j\in Q_{n}$ has infinitely many ``dedicated accessible worlds" in the set $Q_{n +1}$, each coming with their own infinite set of dedicated worlds on $Q_{n+2}$ (and so on). Notice that it follows from the definition that $Q_{n}$ is infinite for every $n \geq 1$, but $Q_{0}$ is only required to be non-empty. As will become clear, this means that every birelational model can also be mapped specifically to a model in which the reference frame and each of its at least partial copies have unique ``roots".

\begin{definition}
    Let $\langle Q, R\rangle$ be a canonical quasi-structure. Then a \textit{full canonical structure} $FC$ is a pair $\langle Q_{FC}, \leq_{FC} \rangle$ where $Q = Q_{FC}$ and $\leq_{FC}$ is the reflexive and transitive closure of $R$.
\end{definition}

\begin{definition}\label{def:PCj}
    Let $PC$ be any at least partial copy of $FC$ and $j$ an arbitrary object with $j \in Q_{PC}$. Then $PC^{j} = \langle Q_{PC^{j}}, \leq_{PC^{j}} \rangle$, where $Q_{PC^j} = \{ j' | j \leq_{PC} j'\}$ and, for all $j', j'' \in Q_{PC^j}$, $j' \leq_{PC} j''$ iff $j' \leq_{PC^j} j''$.
\end{definition}

\begin{lemma}\label{lemma:atleastpartialcopyFC}
    If $PC$ is an at least partial copy of $FC$ then $PC^{j}$ is an at least partial copy of $FC$.
\end{lemma}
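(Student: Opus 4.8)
The plan is to verify directly the three clauses of Definition~\ref{def:partialcopy} for the pair $\langle FC, PC^{j}\rangle$, reducing each to the corresponding clause already available for $\langle FC, PC\rangle$ (which holds by hypothesis) together with reflexivity and transitivity of $\leq_{PC}$. As a preliminary I would record that $\leq_{PC}$ is reflexive and transitive: by Clause~3 for $PC$ it agrees with $\leq_{FC}$ on $Q_{PC}$, and $\leq_{FC}$ is reflexive and transitive since it is the reflexive transitive closure of $R$. Hence $j \leq_{PC} j$, so (recalling $j \in Q_{PC}$ from the notation in Definition~\ref{def:PCj}) $j \in Q_{PC^{j}}$ and $Q_{PC^{j}}$ is non-empty; moreover $Q_{PC^{j}} = \{j' \mid j \leq_{PC} j'\} \subseteq Q_{PC} \subseteq Q_{FC}$, which is Clause~1. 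I would also note in passing that $PC^{j}$ is a genuine Kripke frame, since $\leq_{PC^{j}}$ is the restriction of $\leq_{PC}$ to the non-empty set $Q_{PC^{j}}$ and the restriction of a reflexive transitive relation to a subset is again reflexive and transitive; this is needed for the phrase ``at least a partial copy'' to even apply.

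For Clause~2, suppose $w \leq_{FC} w'$ and $w \in Q_{PC^{j}}$, i.e. $j \leq_{PC} w$ and $w \in Q_{PC}$. Clause~2 for $PC$ gives $w' \in Q_{PC}$; then, since now $w, w' \in Q_{PC}$, Clause~3 for $PC$ upgrades $w \leq_{FC} w'$ to $w \leq_{PC} w'$; transitivity of $\leq_{PC}$ then yields $j \leq_{PC} w'$, i.e. $w' \in Q_{PC^{j}}$.

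For Clause~3, take any $w, w' \in Q_{PC^{j}}$; then $w, w' \in Q_{PC}$, so Clause~3 for $PC$ gives $w \leq_{FC} w' \iff w \leq_{PC} w'$, while the defining condition of $PC^{j}$ in Definition~\ref{def:PCj} gives $w \leq_{PC} w' \iff w \leq_{PC^{j}} w'$; composing the two biconditionals yields $w \leq_{FC} w' \iff w \leq_{PC^{j}} w'$.

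The step most likely to require care is Clause~2: one must not try to chain $j \leq_{PC} w$ with the hypothesis $w \leq_{FC} w'$ directly, since $\leq_{PC}$ and $\leq_{FC}$ are a priori different relations; the correct move is to first place $w'$ inside $Q_{PC}$ using Clause~2 for $PC$, which is exactly what makes Clause~3 for $PC$ applicable so that the $\leq_{FC}$-step can be converted into a $\leq_{PC}$-step before invoking transitivity. Everything else is a routine unwinding of Definitions~\ref{def:partialcopy} and~\ref{def:PCj}.
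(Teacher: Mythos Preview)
Your proof is correct and follows essentially the same approach as the paper: both verify the three clauses of Definition~\ref{def:partialcopy} directly, with Clauses~1 and~3 handled identically. The only cosmetic difference is in Clause~2, where the paper first converts $j \leq_{PC} w$ to $j \leq_{FC} w$ via Clause~3 and then uses transitivity of $\leq_{FC}$ (applying Clause~2 for $PC$ to $j$ rather than to $w$), whereas you first place $w'$ in $Q_{PC}$ via Clause~2 for $PC$ applied to $w$, convert $w \leq_{FC} w'$ to $w \leq_{PC} w'$, and then use transitivity of $\leq_{PC}$; both orderings are equally valid.
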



\begin{definition} \label{def:interpretationfunction}
    Let $B = \langle W, \leq, R, v \rangle$ be a birelational model and $w \in W$. Let $FC$ be any full canonical structure and $PC$ any at least partial copy of $FC$. A interpretation of $w$ in $PC$ is any function $f_{B, w}$ such that:

\begin{enumerate}

  %

  \item There is some $j \in Q_{PC}$ such that $f_{B,w}(j) = v(w)$;

    \item For all $j \in Q_{PC}$, $f_{B, w}(j) = v(w')$ for some $w \leq w'$;

    \item For all $j \in Q_{PC}$, if $f_{B, w}(j) = v(w')$ and $w' \leq w''$, then there is a $j' \in Q_{PC}$ such that $j \leq_{PC} j'$ and $f_{B, w}(j') = v(w'')$;

    \item For all $j,j' \in Q_{PC}$, if $f_{B, w}(j) = v(w')$, $f_{B, w}(j') = v(w'')$ and $j \leq_{PC} j'$ then $w' \leq w''$.
    
\end{enumerate}
    
\end{definition}

Notice that there must be at least one $j \in Q_{n}$ such that $f_{B, w} = v(w)$, where $n$ is the smallest index for which there is at least one $j'$ with $j' \in Q_{n}$ and $j' \in Q_{PC}$.  So an interpretation of $w \in W$ is a function $f_{B, w}$ from elements of $Q_{PC}$ to valuations $v(w')$ for $w' \in W$, with the constraint that if the function assigns to some $j \in Q_{PC}$ the set of atoms assigned to some $w' \in W$ then it must also assign to some $j \leq_{PC} j'$ the set of atoms assigned to $w''$ for every $w' \leq_{PC} w''$. Additionally, if it assigns to some $j' \in Q_{PC}$ the set of atoms assigned to some $w' \in W$ then it must assign to all $j\in Q_{PC}$ with $j \leq j'$ the set of atoms assigned to some $w'' \in W$ with $w'' \leq w'$. Intuitively, this means that if the atoms assigned to $w \in W$ are assigned to $j \in Q_{PC}$ then all the extensions of $w$ are assigned to some extension of $j$ (first condition), and nothing else is assigned to extensions of $j$ (second condition). As will be seen later, the purpose of a function $f_{B,w}$ is to make an entire propositional model behave as if it were the particular world $w$ of $B$, so the relation $\succ$ between frames also behaves as if it was a relation between worlds. 


First we show that such a function exists for every $PC$ and every $w$ occurring in a birelational model:

\begin{lemma} \label{lemma:firstfunctionexistence}
   Let $B = \langle W, \leq, R, v \rangle$ be a birelational model and $w$ some object $w \in W$. Let $FC$ be any full canonical structure and $PC$ an at least partial copy of $FC$. Then there exists a interpretation of $w$ in $PC$.
\end{lemma}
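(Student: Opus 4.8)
The plan is to construct the interpretation function $f_{B,w}$ explicitly by recursion on the ``levels'' $Q_n$ of the canonical quasi-structure, exploiting the fact that each object has infinitely many dedicated successors at the next level. First I would observe that $PC$, being an at least partial copy of $FC$, inherits enough of the tree-like structure of $FC$: each $j' \in Q_{PC} \cap Q_m$ with $m > 0$ has a unique $R$-predecessor, and the $\leq_{PC}$-successors of any $j$ are exactly the objects reachable by finitely many $R$-steps inside $PC$. The key combinatorial resource is that, for any $j \in Q_{PC}$, it has infinitely many immediate $R$-successors in $Q_{PC}$ at the next level — this lets us encode an arbitrarily branching ``unravelling'' of the set of worlds $\{w' \in W \mid w \leq w'\}$ into $PC$.

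The construction would proceed as follows. Let $j_0$ be the root of $PC$ (or more generally fix some $j_0 \in Q_{PC}$; since $PC$ is a partial copy of $FC$ and the closure condition of Definition \ref{def:partialcopy} forces $Q_{PC}$ to be upward-closed, it has minimal elements, and I would pick one). Set $f_{B,w}(j_0) = v(w)$; this secures Clause 1. Now recursively: suppose $f_{B,w}(j)$ has been defined and equals $v(w')$ for some $w' \in W$ with $w \leq w'$. Enumerate (with repetitions if necessary, so the list is infinite) the set $\{w'' \in W \mid w' \leq w''\}$, which is non-empty since $w' \leq w'$; by the defining property of a canonical quasi-structure there are infinitely many immediate successors $j'$ of $j$ in $Q_{PC}$, so assign them bijectively (or surjectively) to this enumeration, setting $f_{B,w}(j') = v(w'')$ for the matched $w''$. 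Since $\leq$ is reflexive this includes an immediate successor $j'$ with $f_{B,w}(j') = v(w')$ again, which is what makes the recursion consistent and total on all of $Q_{PC}$. One then checks: Clause 2 holds because every value is $v(w'')$ for some $w''$ obtained by a finite $\leq$-chain from $w$, hence $w \leq w''$ by transitivity; Clause 3 holds because if $f_{B,w}(j) = v(w')$ and $w' \leq w''$ then $w''$ appears in the enumeration used at the children of $j$, so some child $j'$ (hence some $j \leq_{PC} j'$) gets value $v(w'')$; Clause 4 holds because $j \leq_{PC} j'$ means $j'$ is reached from $j$ by finitely many $R$-steps inside $PC$, and at each such step the new value is $v$ of a $\leq$-successor of the previous world, so transitivity of $\leq$ gives $w' \leq w''$.

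The main obstacle I anticipate is purely bookkeeping: making sure the recursion is well-founded and genuinely defines $f_{B,w}$ on \emph{all} of $Q_{PC}$ rather than only on the sub-tree generated from $j_0$ by the $R$-relation of the quasi-structure. Here one uses that $PC$ is an at least partial copy of $FC$ together with the structural fact (Lemma \ref{lemma:structureofpartialmodels}-style reasoning, or directly from Definition \ref{def:partialcopy}) that every element of $Q_{PC}$ lies above some minimal element of $Q_{PC}$ via finitely many $R$-steps, so the recursion along $R$-successors starting from each root does reach everything; if $Q_{PC}$ has several roots one repeats the construction independently at each. A secondary subtlety is Clause 3 for successors $j'$ with $j \leq_{PC} j'$ that are not \emph{immediate} successors of $j$ — but this follows by iterating the immediate-successor case along the finite $R$-chain from $j$ to $j'$ and using transitivity of $\leq$ once more. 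None of these steps needs anything beyond the infinitude of dedicated successors and the reflexivity–transitivity of the two orderings, so the proof is a routine (if slightly fiddly) recursive construction.
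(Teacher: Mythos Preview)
Your proposal is correct and follows essentially the same approach as the paper: both construct $f_{B,w}$ by recursion on the levels $Q_m$, assigning $v(w)$ at the minimal level and then, at each node $j$ with $f_{B,w}(j)=v(w')$, using a surjective enumeration of $\{w''\mid w'\le w''\}$ together with a bijection onto the infinitely many immediate $R$-successors of $j$ to propagate the definition. Your treatment of multiple roots is in fact slightly more careful than the paper's (which tacitly assumes all minimal elements of $Q_{PC}$ sit at the single least level $n$); note however that your ``secondary subtlety'' about non-immediate successors in Clause~3 is not actually needed, since Clause~3 is existential and an immediate child already witnesses it.
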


We also need to construct one final function which is essential to deal with modalities and that will finally allow us to bridge the gap between birelational models and the new models:

\begin{lemma} \label{lemma:modalexistence}

    Let $B = \langle W, \leq, R, v \rangle$ be a birelational model and $f_{B, w}$ a function interpreting $w \in W$ in a at least partial copy $PC$ of a full canonical structure $FC$. Let $w', w'' \in W$ be arbitrary objects with $w \leq w'$ and $w' R w''$. Let $j$ be any $j \in Q_{PC}$ with $f_{B,w}(j) = v(w')$. Then there is a function $f_{B, w''}$ interpreting $w''$ in $PC^j$ such that, for all $j' \in Q_{PC^{j}}$, if $f_{B, w}(j') = v(s)$ and $f_{B, w''}(j') = v (s')$ then $s R s'$.

\end{lemma}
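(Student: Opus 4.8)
The plan is to obtain $f_{B,w''}$ by pushing $f_{B,w}$ forward along a map on the worlds of $B$ that is read off from the frame conditions of a universal birelational model. Since $w' \leq s$ for every world $s$ represented by a node of $PC^{j}$ under $f_{B,w}$ (this is immediate from clause~4 of Definition~\ref{def:interpretationfunction} applied to the pair $j,j'$, using $f_{B,w}(j)=v(w')$ and $j \leq_{PC} j'$), it is enough to define $h$ on the up-set $\{\, s \in W : w' \leq s \,\}$. Given such an $s$, condition $F_1$ applied to $w' \leq s$ and $w' \mathrel{R} w''$ yields a world $h(s)$ with $s \mathrel{R} h(s)$ and $w'' \leq h(s)$, and this world is unique because the model is universal (Definition~\ref{def:birelational}). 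I would then record four properties of $h$, each a one-line frame manipulation: (i) $s \mathrel{R} h(s)$ and $w'' \leq h(s)$, by construction; (ii) $h(w')=w''$, by uniqueness, since $w' \mathrel{R} w''$ and $w'' \leq w''$; (iii) $h$ is monotone, obtained by applying $F_1$ to $s_1 \leq s_2$ and $s_1 \mathrel{R} h(s_1)$ and invoking uniqueness to identify the resulting witness with $h(s_2)$; (iv) for each $s$ with $w' \leq s$, $h$ carries the up-set $\{\, b : s \leq b \,\}$ onto the up-set $\{\, t : h(s) \leq t \,\}$, obtained by applying $F_2$ to $s \mathrel{R} h(s)$ and $h(s) \leq t$ and again invoking uniqueness. (With $s=w'$, property~(iv) shows $h$ maps $\{\, s : w' \leq s \,\}$ onto $\{\, t : w'' \leq t \,\}$.)

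Next I would define $f_{B,w''}$ on $Q_{PC^{j}}$ by sending each $j'$ to $v\bigl(h(s_{j'})\bigr)$, where $s_{j'}$ is the world of $B$ that $j'$ represents under $f_{B,w}$ (so $f_{B,w}(j') = v(s_{j'})$ and $s_{j} = w'$; this is the same device already used, implicitly, in the construction in the proof of Lemma~\ref{lemma:firstfunctionexistence}). Then I would check that $f_{B,w''}$ is an interpretation of $w''$ in $PC^{j}$, recalling that $PC^{j}$ is an at least partial copy of $FC$ by the preceding lemma. Clause~1 of Definition~\ref{def:interpretationfunction} holds because $f_{B,w''}(j) = v(h(s_{j})) = v(h(w')) = v(w'')$; clause~2 holds by property~(i) (noting $s_{j'} \geq w'$, so $h$ is defined there); clause~4 holds because $s_{j'} \leq s_{j''}$ whenever $j' \leq_{PC^{j}} j''$ (clause~4 of Definition~\ref{def:interpretationfunction} for $f_{B,w}$), combined with monotonicity~(iii). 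Clause~3 is the substantive one: given $f_{B,w''}(j') = v(h(s_{j'}))$ and $h(s_{j'}) \leq t'$, property~(iv) supplies a $b$ with $s_{j'} \leq b$ and $h(b) = t'$, while clause~3 of Definition~\ref{def:interpretationfunction} for $f_{B,w}$ supplies a $j''$ with $j' \leq_{PC} j''$ represented by $b$; since $j \leq_{PC} j' \leq_{PC} j''$ we have $j'' \in Q_{PC^{j}}$ and $j' \leq_{PC^{j}} j''$, and $f_{B,w''}(j'') = v(h(b)) = v(t')$.

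Finally, the $R$-coordination clause holds because, at any $j' \in Q_{PC^{j}}$, the worlds $s = s_{j'}$ and $s' = h(s_{j'})$ satisfy $f_{B,w}(j') = v(s)$, $f_{B,w''}(j') = v(s')$ and $s \mathrel{R} s'$ by property~(i).

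The main obstacle is essentially bookkeeping rather than mathematics: because an interpretation function records only valuations, one has to carry along, for each node of $PC^{j}$, the world of $B$ that it ``stands for'' under $f_{B,w}$, and push that world forward by $h$ — precisely the manoeuvre the construction in Lemma~\ref{lemma:firstfunctionexistence} already relies on. Once that is set up, every step reduces to the $F_1$/$F_2$-with-uniqueness identities (i)--(iv) for $h$, and nothing is required of $B$ beyond the frame conditions $F_1$, $F_2$ and the uniqueness of their witnesses, all of which hold in any (universal) birelational model.
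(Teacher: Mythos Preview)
Your proposal is correct and follows essentially the same route as the paper. The paper defines $f_{B,w''}(j')$ directly as $v(s')$ where $s'$ is the unique $F_1$-witness for $w'\leq s$ and $w'Rw''$ (with $f_{B,w}(j')=v(s)$), and then verifies clauses~3 and~4 by inline applications of $F_2$ and $F_1$ with uniqueness; your version isolates the same construction as the map $h$ and packages exactly those frame manipulations as properties (iii) and (iv), which makes the verification of the interpretation clauses marginally cleaner but is otherwise the same argument. You are also right that the ``bookkeeping'' issue---recovering a specific world $s_{j'}$ from the valuation $f_{B,w}(j')$---is present and is handled tacitly in the paper just as in your sketch.
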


Despite its lenght, we prove the result here (as opposed to in the Appendix) due to the novelty and importance of the proof, which is a new version of the traditional modal existence lemma:

\begin{proof}

Notice that since $Q_{PC^j} = \{j' | j \leq_{PC} j' \}$ then $j' \in Q_{PC^j}$ implies $j \leq_{PC} j'$, and since $f_{B,w}(j) = v(w')$ we may conclude by Clause 4 of Definition \ref{def:interpretationfunction} that if $f_{B,w}(j') = v(s)$ for some $j' \in PC^{j}$ then since $j \leq_{PC} j'$ we also have $w' \leq s$. With this in mind, we define the function $f_{B, w''}$ as follows:



\begin{enumerate}
    \item  $f_{B, w''}(j) = v(w'')$;

\item For all $j' \in Q_{PC^j}$ with $j' \neq j$, let $f_{B, w}(j') = v (s)$. Then $w' \leq s$, and, since $w' R w''$, from $F_{1}$ (cf. Definition \ref{F1andF2}) and witness uniqueness (cf. Clause 5 of Definition \ref{def:birelational}) we conclude that there must be a unique $s'$ such that $s R s'$ and $w'' \leq s'$. Then we define $f_{B,w''}(j') = v(s')$.

\end{enumerate}

Since by definition $f_{B, w''}(j) = v(w'')$ the function clearly satisfies Clause 1 of Definition \ref{def:interpretationfunction}, and since by definition for all $j' \in Q_{PC^j}$ we have $f_{B, w''}(j') = v(s')$ for some $w'' \leq s'$ it clearly also satisfies Clause 2. Therefore, in order to show that $f_{B, w''}$ is indeed an interpretation of $w'' \in W$ in $PC^j$ it suffices to show that it satisfies the two remaining properties.

\begin{enumerate}

    \item For all $j' \in Q_{PC^j}$, if $f_{B, w''}(j') = v(s)$ and $s \leq s'$, then there is a $j'' \in Q_{PC^j}$ such that $j' \leq_{PC^j} j''$ and $f_{B, w''}(j'') = v(s')$.

    Assume $f_{B, w''}(j') = v(s)$ and $s \leq s'$ for some $j' \in Q_{PC^j}$. Since $j' \in Q_{PC^j}$ we conclude $j \leq_{PC} j'$ by Definition \ref{def:PCj}. From the definition of $f_{B, w''}$ we have $f_{B,w}(j') = v(s'')$ for some $s'' R s$. From $F_{2}$ and witness uniqueness we conclude that, since $s'' R s$ and $s \leq s'$, there must be some unique $s'''$ such that $s'' \leq s'''$ and $s''' R s'$. Since $s'' \leq s'''$ and $f_{B, w}(j') = v(s'')$ we conclude by Clause 2 of Definition \ref{def:interpretationfunction} that there is some $j'' \in Q_{PC}$ such that $j' \leq_{PC} j''$ and $f_{B,w}(j'') = v(s''')$. Since $j \leq_{PC} j'$ and $j' \leq_{PC} j''$ by transitivity of $\leq_{PC}$ we conclude $j \leq_{PC} j''$, so from Definition \ref{def:PCj} we get $j'' \in Q_{PC^j}$. Since $f_{B,w}(j'') = v(s''')$ then $f_{B,w''}(j'') = v(s'''')$, where $s''''$ is the unique object with $s''' R s''''$ and $w'' \leq s''''$. Since  $f_{B, w''}(j') = v(s)$ and $f_{B,w''}$ satisfies Clause 2 of Definition \ref{def:interpretationfunction} we conclude $w'' \leq s$, and since $s \leq s'$ by transitivity of $\leq$ we conclude $w'' \leq s'$. So we have $w'' \leq s'$, $s''' R s'$, $w'' \leq s''''$ and $s''' R s''''$, so from uniqueness of $s''''$ we conclude $s' = s''''$. Hence since  $f_{B,w''}(j'') = v(s'''')$ we have $f_{B,w''}(j'') = v(s')$, and since $j' \leq_{PC} j''$ we also have $j' \leq_{PC^j} j''$ by Definition \ref{def:PCj}, so $j''$ is the desired witness.


\item For all $j',j'' \in Q_{PC^j}$, if $f_{B, w''}(j') = v(s)$, $f_{B, w''}(j'') = v(s')$ and $j' \leq_{PC^j} j''$ then $s \leq s'$.

  Pick any two objects $j', j'' \in Q_{PC^j}$ with $f_{B, w''}(j') = v(s)$, $f_{B, w''}(j'') = v(s')$ and $j' \leq_{PC^j} j''$. By the definition of $f_{B,w''}$ we have $f_{B,w}(j') = v(s'')$ for  $s''$ with $s''Rs$ and $f_{B,w}(j'') = v(s''')$ for $s'''$ with $s'''Rs'$. Due to how $f_{B,w''}$ is defined we also have that $s'$ is the unique object such that $w'' \leq s'$ and $s''' R s'$. Since $j' \leq_{PC^j} j''$, by Definition \ref{def:PCj} we conclude $j' \leq_{PC} j''$, so since $f_{B,w}(j') = v(s'')$ and $f_{B,w}(j'') = v(s''')$ we conclude $s'' \leq s'''$ by Clause 4 of Definition \ref{def:interpretationfunction}. Since $s'' \leq s'''$ and $s''Rs$, by $F_{1}$ and witness uniqueness there must be a unique $s''''$ such that $s \leq s''''$ and $s'''Rs''''$. Since $f_{B, w''}(j') = v(s)$ and $f_{B,w''}$ satisfies Clause 2 of Definition \ref{def:interpretationfunction} we conclude $w'' \leq s$, so since $s \leq s''''$ we conclude $w'' \leq s''''$ by transitivity of $\leq$. Since $w'' \leq s'$, $s'''Rs'$, $w'' \leq s''''$ and $s'''Rs'''''$ and the object $s'$ is unique we conclude $s' = s''''$, hence since $s \leq s''''$ we conclude $s \leq s'$, as desired.

\end{enumerate}

\end{proof}

Notice that, since birelational models do not always satisfy $F_{3}$, use of limited frames such as $PC^j$ is absolutely essential for the proof to follow through. If we could not delete objects below $j$ (which happens when we consider homogeneous frames), consider what would happen if we had $f_{B,w}(j) = v(s)$ and $f_{B,w}(j') = v(s')$ for $j' \leq j$, as well as $f_{B,w''}(j) = v(s'')$ for some $sRs''$. Due to the definition of $f_{B,w}$ we have $s' \leq s$ Then, if we wanted to show that $f_{B,w}(j'') = v(w''')$ and $f_{B,w''}(j'') = v(w'''')$ implies $w'''Rw''''$ for all $j''$, we would have to find some $s'''$ such that $s'Rs'''$ in order to put $f_{B,w''}(j') = v(s''')$, and due to the conditions imposed on $f_{B,w''}$ we would also have $s''' \leq s''$. But this just means that whenever we have $s' \leq s$ and $sRs''$ we also require some $s'''$ with $s' R s'''$ and $s''' \leq s''$, which is precisely $F_{3}$. This problem is dealt with by frames such as $PC^j$ precisely because we remove all objects with $j' \leq j$ from the frame.

We are finally ready to define the partial model correspondent of a birelational model:

\begin{definition} \label{def:equivalentbirelationalmodel}
    Let $B = \langle W, \leq, R, v \rangle$ be a birelational model. Let $FC$ be any fixed full canonical structure. The partial model $M^B = \langle F, \succ \rangle$ is defined as follows:

    \begin{enumerate}
        \item $F = \{  \langle Q_{PC}, \leq_{PC}, f_{B, w} \rangle  | PC$ is an at least partial copy of $FC$ and $f_{B,w}$ is a interpretation of $w$ in $PC$ for some $w \in W\}$
        
        \item $\langle Q_{PC}, \leq_{PC},f_{B, w} \rangle \succ \langle Q_{PC'}, \leq_{PC'},f'_{B, w'} \rangle$ iff, for all $j \in Q_{PC'}$ with $j \in Q_{PC}$, if $f_{B, w}(j) = v(w'')$ and $f'_{B, w'}(j) = v(w''')$ then $w'' R w'''$.
        
    \end{enumerate}

\end{definition}

So our propositional models are given by all pairs comprised of a partial copy $PC$ of $FC$ and an interpretation function for a world of the original model in $PC$, and the accessibility relation $K \succ K'$ holds between two models of $F$ only when, if the function of $K$ assigns $s$ to some object in $W_{K}$ which is also in $W_{K'}$, then the function of $K'$ assigns some world $s'$ modally accessible from $s$ (that is, some $s'$ with $sRs'$) to the same object. Notice that whenever for some model $\langle Q_{PC}, \leq_{PC}, f_{B,w} \rangle$ we have $f_{B,w}(j) = v(w')$ and $w' R w''$, Lemma \ref{lemma:modalexistence} yields a model $\langle Q_{PC^j}, \leq_{PC^j}, f_{B,w'} \rangle$ such that $\langle Q_{PC}, \leq_{PC}, f_{B,w} \rangle \succ \langle Q_{PC^j}, \leq_{PC^j}, f_{B,w'} \rangle$, which is essential for dealing with the modal case of the inductive proofs that we will soon present.


\begin{corollary}\label{cor:MBisapartialmodel}
    $M^B$ is a partial model.
\end{corollary}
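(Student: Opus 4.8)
The plan is to unfold the definitions of a partial model (Definition~\ref{def:partialmodel}), of an at least partially homogeneous set (Definition~\ref{definition:atleastpartiallyhomogeneous}) and of a propositional Kripke model (Definition~\ref{minimalpropositionalmodels}), and to check each clause against the construction of $M^B = \langle F, \succ \rangle$ in Definition~\ref{def:equivalentbirelationalmodel}, invoking Lemma~\ref{lemma:firstfunctionexistence} only where existence is genuinely needed.

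First I would check that every element $\langle Q_{PC}, \leq_{PC}, f_{B,w} \rangle$ of $F$ is a genuine propositional model. Since $f_{B,w}$ is an interpretation of $w$ in $PC$, Clause~1 of Definition~\ref{def:interpretationfunction} supplies some $j \in Q_{PC}$, so $Q_{PC}$ is non-empty; and $\leq_{PC}$ is reflexive and transitive because, by Clause~3 of Definition~\ref{def:partialcopy}, it is exactly the restriction of $\leq_{FC}$ to $Q_{PC}$ (both relations only ever relating elements of $Q_{PC}$), and $\leq_{FC}$ is reflexive and transitive by construction. For the valuation, $f_{B,w}$ is total on $Q_{PC}$ by Clause~2 of Definition~\ref{def:interpretationfunction}, and it is hereditary: if $j \leq_{PC} j'$, write $f_{B,w}(j) = v(w')$ and $f_{B,w}(j') = v(w'')$ (these have that shape by Clause~2 of Definition~\ref{def:interpretationfunction}); Clause~4 of Definition~\ref{def:interpretationfunction} then gives $w' \leq w''$, hence $v(w') \subseteq v(w'')$ by heredity of $v$ in $B$ (Definition~\ref{def:birelational}), i.e. $f_{B,w}(j) \subseteq f_{B,w}(j')$.

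Next I would exhibit the reference model. As $B$ is a birelational model its set $W$ of worlds is non-empty, so fix $w_0 \in W$; since $FC$ is (trivially) an at least partial copy of itself, Lemma~\ref{lemma:firstfunctionexistence} yields an interpretation $f_{B,w_0}$ of $w_0$ in $FC$, so $K_0 := \langle Q_{FC}, \leq_{FC}, f_{B,w_0} \rangle \in F$ and in particular $F \neq \emptyset$. For any $K' = \langle Q_{PC}, \leq_{PC}, f_{B,w} \rangle \in F$, the three clauses of Definition~\ref{def:partialcopy} asserting that $K'$ is an at least partial copy of $K_0$ (noting $\leq_{K_0} = \leq_{FC}$) are literally the three clauses witnessing that $PC$ is an at least partial copy of $FC$. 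Hence $F$ is at least partially homogeneous with reference model $K_0$. Finally, $\succ$ is a relation on $F$ by the very form of Clause~2 of Definition~\ref{def:equivalentbirelationalmodel}. Thus $M^B$ satisfies both clauses of Definition~\ref{def:partialmodel}.

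This argument is essentially bookkeeping; the only places where something substantive enters are Lemma~\ref{lemma:firstfunctionexistence} (needed precisely to ensure $F$ is non-empty and to obtain the reference interpretation on all of $FC$) and the heredity of $v$ in $B$ together with Clause~4 of Definition~\ref{def:interpretationfunction} (needed for heredity of the new valuations). The one point deserving a moment's care, which I would flag explicitly, is that $\subset$ in Definition~\ref{def:partialcopy} is read as $\subseteq$, so that $FC$ itself counts as an at least partial copy of itself and $F$ therefore contains a frame large enough to serve as reference model.
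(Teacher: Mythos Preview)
Your argument is correct and follows essentially the same route as the paper's proof: both observe that $FC$ is an at least partial copy of itself, obtain an interpretation on $FC$ (the paper says ``for some $f_{B,w}$'', which tacitly invokes Lemma~\ref{lemma:firstfunctionexistence} just as you do), and take the resulting model as reference. Your version is considerably more explicit---in particular you verify that each $\langle Q_{PC}, \leq_{PC}, f_{B,w}\rangle$ is a genuine propositional Kripke model (heredity of the valuation via Clause~4 of Definition~\ref{def:interpretationfunction}), a point the paper passes over entirely---and you rightly flag that the $\subset$ in Definition~\ref{def:partialcopy} must be read as $\subseteq$ for the argument to go through, which the paper's own phrase ``$FC$ is an at least partial copy of itself'' presupposes without comment.
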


\begin{proof}
Since $FC$ is an at least partial copy of itself we can conclude that $\langle Q_{FC}, \leq_{FC}, f_{B,w} \rangle \in F$ for some $f_{B,w}$, so it is straightforward to check that $F$ is an at least partially homogeneous set having $\langle Q_{FC}, \leq_{FC}, f_{B,w} \rangle \in F$ as its reference model, so all conditions of Definitions \ref{def:partialcopy} and \ref{definition:atleastpartiallyhomogeneous} are satisfied.
    
\end{proof}



\begin{lemma} \label{lemma:proofofequalityforworlds}
 Let $B$ be an arbitrary birelational model. Let $K$ be an arbitrary $K \in F_{M^{B}}$ with $K = \langle Q_{PC}, \leq_{PC}, f_{B,w} \rangle$ and $f_{B,w}(j) = v(w')$ for some $j \in Q_{PC}$. Then $\Gamma \vDash_{j}^{M^B, K} A$ iff $\Gamma \Vdash^{B}_{w'} A$.
\end{lemma}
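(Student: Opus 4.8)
The plan is to prove the biconditional by induction on the complexity of $A$, mirroring the structure of the proof of Lemma~\ref{lemma:correspondenceproofmappingsimilarityintobirelationa} but now reading the correspondence in the opposite direction: a model $K = \langle Q_{PC}, \leq_{PC}, f_{B,w}\rangle$ of $M^B$ together with an object $j \in Q_{PC}$ such that $f_{B,w}(j) = v(w')$ should ``behave exactly like'' the world $w'$ of $B$. The base case $A = p$ is immediate: $\vDash^{M^B,K}_{j} p$ iff $p \in f_{B,w}(j) = v(w')$ iff $\Vdash^{B}_{w'} p$. The propositional connectives $\land$ and $\lor$ are routine. The cases $A = B \to C$ and $\Gamma \neq \emptyset$ need the observation that $\leq$-steps inside $K$ correspond, via Clauses 2, 3 and 4 of Definition~\ref{def:interpretationfunction}, to $\leq$-steps from $w'$ in $B$ and conversely: if $j \leq_{PC} j''$ then $f_{B,w}(j'') = v(w'')$ with $w' \leq w''$ (Clause 4), and conversely for every $w' \leq w''$ there is a $j \leq_{PC} j''$ with $f_{B,w}(j'') = v(w'')$ (Clause 3, applied with Clause 1 to reach $j$ first). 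So the set of ``$f_{B,w}$-values reachable from $j$'' is exactly $\{v(w'') \mid w' \leq w''\}$, and the $\to$ and $\Gamma$ clauses transfer just as in the earlier lemma.

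The modal cases are the substance of the proof. For $A = \Diamond B$: if $\vDash^{M^B,K}_{j} \Diamond B$ there is $K \succ K'$ with $j \in Q_{PC'}$ and $\vDash^{M^B,K'}_{j} B$; writing $K' = \langle Q_{PC'}, \leq_{PC'}, f'_{B,w_1}\rangle$ and $f'_{B,w_1}(j) = v(s')$, the definition of $\succ$ in Definition~\ref{def:equivalentbirelationalmodel} gives $w' R s'$, and the induction hypothesis applied to $K'$ and $j$ gives $\Vdash^{B}_{s'} B$, hence $\Vdash^{B}_{w'} \Diamond B$. Conversely, if $\Vdash^{B}_{w'} \Diamond B$ there is $w' R w''$ with $\Vdash^{B}_{w''} B$; here I would invoke Lemma~\ref{lemma:modalexistence} (with the role of ``$w'$'' there played by our $w'$, which satisfies $w \leq w'$ since $f_{B,w}(j) = v(w')$ forces $w \leq w'$ by Clause 2 of Definition~\ref{def:interpretationfunction}) to produce an interpretation $f_{B,w''}$ of $w''$ in $PC^j$ with $f_{B,w''}(j) = v(w'')$ and the matching property $f_{B,w}(j') = v(s) \Rightarrow f_{B,w}(j')$-value is $R$-related to $f_{B,w''}(j')$-value on all of $Q_{PC^j}$; this makes $K' := \langle Q_{PC^j}, \leq_{PC^j}, f_{B,w''}\rangle$ satisfy $K \succ K'$, and the induction hypothesis applied to $K'$ at $j$ gives $\vDash^{M^B,K'}_{j} B$, whence $\vDash^{M^B,K}_{j} \Diamond B$.

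For $A = \Box B$ the argument is similar but uses the $\leq$-then-$R$ form of the $\Box$-clause. Assume $\Vdash^{B}_{w'} \Box B$, and let $j \leq_{PC} j''$ with $K \succ K'$ and $j'' \in Q_{PC'}$, where $K' = \langle Q_{PC'}, \leq_{PC'}, f'_{B,w_1}\rangle$; by Clause 4, $f_{B,w}(j'') = v(w'')$ with $w' \leq w''$, and by the $\succ$ condition $w'' R w_2$ where $f'_{B,w_1}(j'') = v(w_2)$, so $\Vdash^{B}_{w_2} B$ because $\Vdash^{B}_{w'} \Box B$ gives $\Vdash^{B}_{w''} \Box B$ by monotonicity and then the birelational $\Box$-clause (with $w'' \leq w''$, $w'' R w_2$) applies; the induction hypothesis on $K'$ at $j''$ then gives $\vDash^{M^B,K'}_{j''} B$, so $\vDash^{M^B,K}_{j} \Box B$. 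Conversely, assume $\vDash^{M^B,K}_{j} \Box B$ and let $w' \leq w''$ and $w'' R w'''$ in $B$; by Clause 3 (with Clause 1 to locate $j$) there is $j \leq_{PC} j''$ in $Q_{PC}$ with $f_{B,w}(j'') = v(w'')$, and Lemma~\ref{lemma:modalexistence} again supplies $K' = \langle Q_{(PC)^{j''}}, \leq, f_{B,w'''}\rangle$ with $K \succ K'$, $j'' \in Q_{PC'}$ and $f_{B,w'''}(j'') = v(w''')$; the hypothesis gives $\vDash^{M^B,K'}_{j''} B$ and the induction hypothesis gives $\Vdash^{B}_{w'''} B$, whence $\Vdash^{B}_{w'} \Box B$. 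The main obstacle I anticipate is the bookkeeping in the $\Box$-cases: one has to be careful that the partial copy produced by Lemma~\ref{lemma:modalexistence} really is of the form $PC^{j''}$ so that $j'' \in Q_{PC'}$ and the ``same object $j''$'' is used on both sides of $\succ$, and that the direction of the $R$-relation recorded by $\succ$ matches the direction needed in the birelational $\Box$-clause; monotonicity (the Monotonicity Lemma for birelational models) is needed to bridge the $\leq$-step that the $M^B$-side of the $\Box$-clause does not explicitly mention.
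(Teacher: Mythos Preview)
Your proposal is correct and follows essentially the same route as the paper's proof: induction on $A$, using Clauses~3 and~4 of Definition~\ref{def:interpretationfunction} for the $\to$ and $\Gamma\neq\emptyset$ cases, and Lemma~\ref{lemma:modalexistence} together with the definition of $\succ$ for the modal cases. The only cosmetic differences are that the paper unfolds the birelational $\Box$-clause directly (from $w'\leq w''$ and $w''Rw_2$) rather than going via monotonicity to $\Vdash^{B}_{w''}\Box B$ as you do, and your appeals to Clause~1 ``to locate $j$'' are unnecessary since the hypothesis already gives $f_{B,w}(j)=v(w')$.
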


    \begin{corollary} \label{cor:firstcorollary}
    Let $B$ be an arbitrary birelational model and pick a $K \in F_{M^{B}}$ with $K = \langle Q_{PC}, \leq_{PC}, f_{B,w} \rangle$. Then $\Gamma \vDash^{M^B, K} A$ iff $\Gamma \Vdash^{B}_{w} A$.
    \end{corollary}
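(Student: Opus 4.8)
The plan is to read the corollary off from Lemma~\ref{lemma:proofofequalityforworlds} by unwinding the definition of model-wide satisfaction $\vDash^{K,M}$ and invoking monotonicity of birelational satisfaction; no new induction is required, since all the inductive work already lives in Lemma~\ref{lemma:proofofequalityforworlds}.

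For the left-to-right direction, suppose $\Gamma \vDash^{M^B, K} A$, i.e. $\Gamma \vDash_{j}^{M^B, K} A$ for every $j \in Q_{PC}$. Since $K = \langle Q_{PC}, \leq_{PC}, f_{B,w} \rangle$ lies in $F_{M^B}$, the function $f_{B,w}$ is an interpretation of $w$ in $PC$, so by Clause~1 of Definition~\ref{def:interpretationfunction} there is some $j_0 \in Q_{PC}$ with $f_{B,w}(j_0) = v(w)$. Instantiating the universally quantified hypothesis at $j_0$ gives $\Gamma \vDash_{j_0}^{M^B, K} A$, and Lemma~\ref{lemma:proofofequalityforworlds} applied at $j_0$ (with $w' = w$) turns this into $\Gamma \Vdash^{B}_{w} A$.

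For the right-to-left direction, suppose $\Gamma \Vdash^{B}_{w} A$ and fix an arbitrary $j \in Q_{PC}$; it suffices to show $\Gamma \vDash_{j}^{M^B, K} A$. By Clause~2 of Definition~\ref{def:interpretationfunction} there is a $w' \in W$ with $w \leq w'$ and $f_{B,w}(j) = v(w')$. By monotonicity of birelational satisfaction --- which holds for all formulas, as discussed after Definition~\ref{def:validityinbirelational}, and extends to non-empty $\Gamma$ in the usual way --- from $\Gamma \Vdash^{B}_{w} A$ and $w \leq w'$ we obtain $\Gamma \Vdash^{B}_{w'} A$. Lemma~\ref{lemma:proofofequalityforworlds} applied at $j$ then yields $\Gamma \vDash_{j}^{M^B, K} A$, and since $j$ was arbitrary, $\Gamma \vDash^{M^B, K} A$ by the definition of $\vDash^{K,M}$.

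The one delicate point --- the closest thing to an obstacle --- is the asymmetry between the two sides: the left-hand side quantifies over \emph{all} objects of $K$, whereas the right-hand side fixes the single world $w$ of $B$. This is exactly reconciled by the two structural features of interpretation functions used above: Clause~1 guarantees that the valuation $v(w)$ itself is realised at some node of $PC$ (needed for the forward direction, so that the chosen witness genuinely corresponds to $w$ and not merely to a proper $\leq$-extension), while Clause~2 guarantees that every node of $PC$ realises the valuation of some $\leq$-successor of $w$, so that monotonicity upgrades validity at $w$ to validity at all such nodes (needed for the backward direction). Once these are in place the argument is purely definitional.
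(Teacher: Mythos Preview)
Your proof is correct and follows essentially the same route as the paper: both directions use Lemma~\ref{lemma:proofofequalityforworlds} together with Clause~1 of Definition~\ref{def:interpretationfunction} (to find a node realising $v(w)$) for the forward direction, and Clause~2 plus monotonicity of $\Vdash^{B}$ for the backward direction. The additional paragraph explaining the asymmetry is a nice clarification but adds nothing new to the argument itself.
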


\begin{proof}
    Assume $\Gamma \vDash^{M^B, K} A$. Then for all $j \in Q_{PC}$ we have $\Gamma \vDash_{j}^{M^B, K} A$. By Clause 1 of Definition \ref{def:interpretationfunction} we have that there must be some $j' \in Q_{PC}$ such that $f_{B,w}(j') = v(w)$. Then by putting $j = j'$ we have $\Gamma \vDash_{j'}^{M^B, K} A$, and by Lemma \ref{lemma:proofofequalityforworlds} we conclude $\Gamma \Vdash^{B}_{w} A$. For the converse, assume $\Gamma \Vdash^{B}_{w} A$. Then by monotonicity we have $\Gamma \Vdash^{B}_{w'} A$ for all $w \leq w'$. Pick any $j \in Q_{PC}$ and, without loss of generality, let $f_{B,w}(j) = v(w')$. By Clause 2 of Definition \ref{def:interpretationfunction} we have $w \leq w'$, so $\Gamma \Vdash^{B}_{w'} A$. Then by Lemma \ref{lemma:proofofequalityforworlds} we have $\Gamma \vDash_{j}^{M^B, K} A$, and since $j$ was arbitrary we conclude $\Gamma \vDash^{M^B, K} A$.
\end{proof}

\begin{corollary} \label{cor:secondcorollary}
    $\Gamma \vDash^{M^B} A$ iff $\Gamma \Vdash^{B} A$.
\end{corollary}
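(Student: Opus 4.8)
\emph{Plan.} The statement is a direct consequence of Corollary~\ref{cor:firstcorollary} together with the existence result of Lemma~\ref{lemma:firstfunctionexistence}, so the proof is essentially an unfolding of the definitions of $\vDash^{M^B}$ and $\Vdash^{B}$. First I would recall that, by Clauses~9--10 of Definition~\ref{def:validityinpartialsimilaritymodel}, $\Gamma \vDash^{M^B} A$ holds iff $\Gamma \vDash^{M^B, K} A$ for every $K \in F_{M^B}$, and that by Clause~7 of Definition~\ref{def:validityinbirelational} (inherited from Definition~\ref{def:validityinpropositional}), $\Gamma \Vdash^{B} A$ holds iff $\Gamma \Vdash^{B}_{w} A$ for every $w \in W$. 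So the goal reduces to showing that these two universally quantified statements are equivalent.

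For the left-to-right direction, I would assume $\Gamma \vDash^{M^B} A$ and fix an arbitrary $w \in W$. By Corollary~\ref{cor:MBisapartialmodel} (or rather its proof) the full canonical structure $FC$ is an at least partial copy of itself, so Lemma~\ref{lemma:firstfunctionexistence} provides an interpretation $f_{B,w}$ of $w$ in $FC$; hence the propositional model $K = \langle Q_{FC}, \leq_{FC}, f_{B,w}\rangle$ belongs to $F_{M^B}$ by Clause~1 of Definition~\ref{def:equivalentbirelationalmodel}. Applying the hypothesis to this $K$ gives $\Gamma \vDash^{M^B, K} A$, and Corollary~\ref{cor:firstcorollary} then yields $\Gamma \Vdash^{B}_{w} A$. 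Since $w$ was arbitrary, $\Gamma \Vdash^{B} A$.

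For the converse, I would assume $\Gamma \Vdash^{B} A$ and fix an arbitrary $K \in F_{M^B}$. By Clause~1 of Definition~\ref{def:equivalentbirelationalmodel}, $K$ has the form $\langle Q_{PC}, \leq_{PC}, f_{B,w}\rangle$ where $PC$ is an at least partial copy of $FC$ and $f_{B,w}$ is an interpretation of some $w \in W$ in $PC$. The hypothesis gives in particular $\Gamma \Vdash^{B}_{w} A$, so Corollary~\ref{cor:firstcorollary} yields $\Gamma \vDash^{M^B, K} A$. Since $K$ was arbitrary, $\Gamma \vDash^{M^B} A$.

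There is no real obstacle here: the only point requiring care is that the left-to-right direction needs \emph{some} element of $F_{M^B}$ associated with each world $w$ of $B$, and this is exactly what Lemma~\ref{lemma:firstfunctionexistence} guarantees (taking $PC = FC$). Everything else is bookkeeping over the quantifier layers in the two notions of validity.
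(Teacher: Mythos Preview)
Your proof is correct and follows essentially the same approach as the paper's own proof: both directions unfold the quantifiers in $\vDash^{M^B}$ and $\Vdash^{B}$, invoke Lemma~\ref{lemma:firstfunctionexistence} to produce a model in $F_{M^B}$ associated with a given $w$, and appeal to Corollary~\ref{cor:firstcorollary} for the pointwise equivalence. The only cosmetic difference is that you explicitly take $PC = FC$ in the left-to-right direction, whereas the paper leaves the choice of partial copy unspecified.
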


\begin{proof}
    Assume $\Gamma \vDash^{M^B} A$. Then $\Gamma \vDash^{M^B, K} A$ for all $K \in F$. Pick any $w \in W_{B}$. From Lemma \ref{lemma:firstfunctionexistence} we conclude that there is a $K' \in F$ such that $K' = \langle Q_{PC}, \leq_{PC}, f_{B,w} \rangle$. Then by putting $K = K'$ we have $\Gamma \vDash^{M^B, K'} A$. Since $\Gamma \vDash^{M^B, K'} A$ and $K' = \langle Q_{PC}, \leq_{PC}, f_{B,w} \rangle$ it follows from Corollary \ref{cor:firstcorollary} that $\Gamma \Vdash^{B}_{w} A$. But $w$ is an arbitrary object with $w \in W_{B}$, so we conclude $\Gamma \Vdash^{B} A$. For the converse, assume $\Gamma \Vdash^{B} A$. So $\Gamma \Vdash_{w}^{B} A$ for every $w \in W$. Pick any $K \in F$. By Definition \ref{def:equivalentbirelationalmodel} we have $K = \langle Q_{PC}, \leq_{PC}, f_{B,w'} \rangle$ for some $w' \in W_{B}$. Since $\Gamma \Vdash_{w'}^{B} A$ then it follows from Corollary \ref{cor:firstcorollary} that $\Gamma \vDash^{M^B, K} A$. But $K$ was an arbitrary object with $K \in F$, so we conclude  $\Gamma \vDash^{M^B} A$.
\end{proof}

We now turn to the task of proving similar results for strong models with respect to homogeneous models. The full canonical structures $FC$ that were taken as reference frames of our partial models can now be taken as providing the fixed frame that we use to build homogeneous models. Notice that, since $FC$ is already an at least partial copy of itself and every strong model is also a birelational model, Definition \ref{def:interpretationfunction} and Lemma \ref{lemma:firstfunctionexistence} can still be used. For greater clarity, we denote functions $f_{B,w}$ defined for strong models $S$ by $f_{S,w}$. On the other hand, since it uses partial copies $PC^{j}$ of $FC$ which are no longer acceptable, Lemma \ref{lemma:modalexistence} cannot be used in the new mapping. We prove an adapted, stronger version of it instead:

\begin{lemma} \label{lemma:modalexistencehomogeneous}

   Let $S = \langle W, \leq, R, v \rangle$ be a strong model and $f_{S, w}$ a function interpreting $w \in W$ in a full canonical structure $FC$. Let $w', w'' \in W$ be arbitrary objects with $f_{S,w}(j) = v(w')$ for arbitrary $j \in Q_{FC}$ and $w'Rw''$. Then there is a function $f_{S, w'''}$ interpreting some $w'''$ in $FC$ such that $f_{S,w'''}(j) = v(w'')$ and, for all $j' \in Q_{FC}$, if $f_{S, w}(j') = v(s)$ and $f_{S, w'''}(j') = v (s')$ then $s R s'$.



\end{lemma}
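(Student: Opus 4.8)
The plan is to follow the construction in the proof of Lemma~\ref{lemma:modalexistence}, exploiting the fact that $S$, being a strong model, satisfies $F_3$ with unique witnesses: this is exactly what lets us work with the whole frame $FC$ instead of a truncation such as $PC^j$, since (as explained in the paragraph following Lemma~\ref{lemma:modalexistence}) the only obstruction to retaining the objects lying $\leq_{FC}$-below $j$ was the need for $F_3$, which is now available.

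First I would take $FC$ to be single-rooted --- legitimate, and natural here, since $Q_0$ need only be non-empty --- so that $\langle Q_{FC},\leq_{FC}\rangle$ is a tree with root $r_0$ in which any two nodes have a meet. Writing $s_{j'}$ for the world with $f_{S,w}(j')=v(s_{j'})$, I define $f_{S,w'''}(j')=v(t_{j'})$ by recursion along the tree, maintaining the invariant $s_{j'}\,R\,t_{j'}$ (which is precisely the last clause of the statement). Along the spine $r_0\leq_{FC}\cdots\leq_{FC}j$ I descend from $j$: put $t_j=w''$ (valid since $s_j=w'$ and $w'\,R\,w''$), and whenever $j''\,R\,j'''$ are consecutive spine nodes with $t_{j'''}$ already defined, apply $F_3$ with unique witness to $s_{j''}\leq s_{j'''}$ (Clause~4 of Definition~\ref{def:interpretationfunction}) and $s_{j'''}\,R\,t_{j'''}$ to obtain the unique $t_{j''}$ with $s_{j''}\,R\,t_{j''}$ and $t_{j''}\leq t_{j'''}$. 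Off the spine I ascend: every off-spine node $j'$ has a parent whose $t$-value is already defined (its parent chain leads back to the spine node $j\wedge j'$), and one application of $F_1$ with unique witness to $s_{p}\leq s_{j'}$ and $s_{p}\,R\,t_{p}$ yields $t_{j'}$, preserving the invariant and making $t$ non-decreasing along that edge. Finally I set $w'''=t_{r_0}$; then $f_{S,w'''}(j)=v(w'')$ by construction, and Clause~1 of Definition~\ref{def:interpretationfunction} holds at $r_0$.

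It remains to check that $f_{S,w'''}$ really is an interpretation of $w'''$ in $FC$. Clause~4 is immediate, since by construction $t$ is non-decreasing along every upward $R$-edge, so $j_1\leq_{FC}j_2$ forces $t_{j_1}\leq t_{j_2}$; Clause~2 then follows from $t_{r_0}\leq t_{j'}$ for all $j'$. For Clause~3 I would first establish the auxiliary fact that, for $j_1\leq_{FC}j_2$, the value $f_{S,w'''}(j_2)$ is \emph{the} unique world $t$ with $s_{j_2}\,R\,t$ and $t_{j_1}\leq t$ --- immediate from the invariant, Clause~4, and the uniqueness of the $F_1$-witness for $(s_{j_1}\leq s_{j_2},\,s_{j_1}\,R\,t_{j_1})$. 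Given then $f_{S,w'''}(j')=v(t_{j'})$ and $t_{j'}\leq s'$, apply $F_2$ (unique witness) to $s_{j'}\,R\,t_{j'}$ and $t_{j'}\leq s'$ to get $\sigma'$ with $s_{j'}\leq\sigma'$ and $\sigma'\,R\,s'$; Clause~3 of $f_{S,w}$ supplies $j''\geq_{FC}j'$ with $f_{S,w}(j'')=v(\sigma')$; and since $s'$ satisfies $\sigma'\,R\,s'$ and $t_{j'}\leq s'$, the auxiliary fact forces $f_{S,w'''}(j'')=v(s')$, so $j''$ is the required witness. The $\Gamma\neq\emptyset$ and propositional bookkeeping transfer verbatim from the earlier proofs.

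The main obstacle is Clause~3 of Definition~\ref{def:interpretationfunction}: the value $f_{S,w'''}(j')$ is assembled from several applications of $F_1$ and $F_3$, so it is not transparent that the range of $f_{S,w'''}$ above $j'$ hits exactly the prescribed $\leq$-extension of that value; it is the uniqueness of the frame-condition witnesses in a strong model --- together with careful bookkeeping of spines, parents and meets to make the recursion well-defined and monotone --- that makes the argument go through. Secondary care is needed to ensure the single-rooted $FC$ is harmless for the surrounding mapping, but this is exactly the point of the earlier remark that every model can be sent to one whose reference frame has a unique root.
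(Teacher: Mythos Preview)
Your construction is correct, but it takes a genuinely different route from the paper's. The paper invokes $F_3$ exactly \emph{once}: from $w\leq w'$ (Clause~2 applied to $f_{S,w}(j)=v(w')$) and $w'Rw''$ it extracts the unique $w'''$ with $wRw'''$ and $w'''\leq w''$, and then defines $f_{S,w'''}$ \emph{uniformly} on all of $Q_{FC}$ by applying $F_1$ to the global datum $wRw'''$ together with $w\leq s_{j'}$; the equality $f_{S,w'''}(j)=v(w'')$ is then \emph{verified} afterwards by a uniqueness argument. Your approach instead anchors the construction at the distinguished node $j$ (setting $t_j=w''$ directly), descends the spine to the root by repeated applications of $F_3$, and ascends off-spine by $F_1$. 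This buys you an immediate guarantee that $f_{S,w'''}(j)=v(w'')$ and a transparent node-by-node invariant, at the cost of needing a single-rooted $FC$ (so that meets exist and the spine is well-defined) and a somewhat more elaborate recursion. The paper's argument is shorter and works for arbitrary $Q_0$, but is less explicit about why the invariant holds at level $0$. One small remark: your closing sentence about ``$\Gamma\neq\emptyset$ and propositional bookkeeping'' belongs to the later correspondence lemma (Lemma~\ref{lemma:proofofequalityforworldshomogeneous}), not to this one; the present lemma is purely about constructing $f_{S,w'''}$ and involves no formulas or contexts.
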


As such, satisfaction of $F_{3}$ means that we can always pick some $w'''$ for which the desired function $f_{S,w'''}$ can be created even in the full frame. Then we proceed to prove the analogues of Lemma \ref{lemma:proofofequalityforworlds} and Corollaries \ref{cor:firstcorollary} and \ref{cor:secondcorollary} for homogeneous models and strong models.

\begin{definition} \label{def:equivalenstronghomogeneousmodel}
    Let $S = \langle W, \leq, R, v \rangle$ be a strong model. Let $FC$ be any fixed full canonical structure. The homogeneous model $M^S = \langle F, \succ \rangle$ is defined as follows:

    \begin{enumerate}
        \item $F = \{  \langle Q_{FC}, \leq_{FC}, f_{S, w} \rangle  |$ $f_{S,w}$ is a interpretation of $w$ in $FC$ for some $w \in W\}$

        \item $\langle Q_{FC}, \leq_{FC},f_{S, w} \rangle \succ \langle Q_{FC'}, \leq_{FC'},f'_{S, w'} \rangle$ iff, for all $j \in Q_{FC}$, if $f_{S, w}(j) = v(w'')$ and $f'_{S, w'}(j) = v(w''')$ then $w'' R w'''$.
        
    \end{enumerate}

\end{definition}

\begin{corollary}\label{cor:MSishomogeneous}
   $M^S$ is a homogeneous model.
\end{corollary}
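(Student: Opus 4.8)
The plan is to verify directly that $M^S = \langle F, \succ \rangle$ meets the two clauses of Definition \ref{def:homogeneousmodel}: that $F$ is a non-empty homogeneous set of propositional Kripke models, and that $\succ$ is a relation on $F$. The latter is immediate from Clause 2 of Definition \ref{def:equivalenstronghomogeneousmodel}, which defines $\succ$ as a subset of $F \times F$. So essentially all the work is in the former clause, and it mirrors the bookkeeping already carried out for partial models in Corollary \ref{cor:MBisapartialmodel} (in fact it simplifies it), since here no partial copies other than $FC$ itself are involved.

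First I would check non-emptiness: a strong model is in particular a birelational model, so its set of worlds $W$ is non-empty; fixing any $w \in W$ and noting that $FC$ is trivially an at least partial copy of itself, Lemma \ref{lemma:firstfunctionexistence} supplies an interpretation $f_{S,w}$ of $w$ in $FC$, whence $\langle Q_{FC}, \leq_{FC}, f_{S,w} \rangle \in F$. Next I would confirm that each element of $F$, which by construction has the shape $\langle Q_{FC}, \leq_{FC}, f_{S,w} \rangle$, really is a propositional Kripke model in the sense of Definition \ref{minimalpropositionalmodels}. The pair $\langle Q_{FC}, \leq_{FC} \rangle$ is a Kripke frame: $Q_{FC}$ is non-empty and $\leq_{FC}$ is reflexive and transitive, being by definition the reflexive and transitive closure of $R$. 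For the heredity condition on the valuation $f_{S,w}$, suppose $j \leq_{FC} j'$; by Clause 2 of Definition \ref{def:interpretationfunction} we may write $f_{S,w}(j) = v(s)$ and $f_{S,w}(j') = v(s')$ for suitable $s, s' \in W$, by Clause 4 of the same definition the hypothesis $j \leq_{FC} j'$ gives $s \leq s'$, and then the heredity of $v$ in $S$ (Clause 4 of Definition \ref{def:birelational}) yields $f_{S,w}(j) = v(s) \subseteq v(s') = f_{S,w}(j')$.

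Finally I would observe that $F$ is homogeneous in the sense of Definition \ref{def:homogeneouset}: by Definition \ref{def:equivalenstronghomogeneousmodel} every member of $F$ carries $Q_{FC}$ as its set of objects and $\leq_{FC}$ as its ordering, so for any $K, K' \in F$ we trivially have $W_K = Q_{FC} = W_{K'}$ and $\leq_K = \leq_{FC} = \leq_{K'}$. Combined with the observation that $\succ$ is a relation on $F$, this shows $M^S$ is a homogeneous model. I do not expect any genuine obstacle; the only step needing a moment's care is the heredity of the valuation functions, which, as above, is exactly what Clause 4 of Definition \ref{def:interpretationfunction} together with the heredity of $v$ in $S$ delivers.
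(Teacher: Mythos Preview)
Your proposal is correct and follows the same approach as the paper: the core observation in both is that every element of $F$ carries the same frame $\langle Q_{FC}, \leq_{FC} \rangle$, so $F$ is homogeneous by Definition~\ref{def:homogeneouset} and $M^S$ is a homogeneous model by Definition~\ref{def:homogeneousmodel}. The paper's proof is a single sentence to this effect, whereas you additionally spell out non-emptiness of $F$ (via Lemma~\ref{lemma:firstfunctionexistence}) and the heredity of each $f_{S,w}$ (via Clause~4 of Definition~\ref{def:interpretationfunction} together with heredity of $v$ in $S$); these are details the paper leaves implicit, not a different route.
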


\begin{proof}
The result follows directly from the fact that all propositional models use the frame $\langle Q_{FC}, \leq_{FC} \rangle$, thus satisfying Definitions \ref{def:homogeneouset} and \ref{def:homogeneousmodel}.
    
\end{proof}

\begin{lemma} \label{lemma:proofofequalityforworldshomogeneous}
 Let $S$ be an arbitrary strong model. Let $K$ be an arbitrary $K \in F_{M^{S}}$ with $K = \langle Q_{FC}, \leq_{FC}, f_{B,w} \rangle$ and $f_{S,w}(j) = v(w')$ for some $j \in Q_{FC}$. Then $\Gamma \vDash_{j}^{M^S, K} A$ iff $\Gamma \Vdash^{S}_{w'} A$.
\end{lemma}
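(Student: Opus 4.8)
The plan is to prove the biconditional by induction on the complexity of $A$, following the proof of Lemma~\ref{lemma:proofofequalityforworlds} almost verbatim. For the atomic case, the Boolean connectives $\wedge$, $\vee$, $\to$, and the case $\Gamma \neq \{\emptyset\}$, nothing needs to be changed: those arguments used only the clauses of Definition~\ref{def:interpretationfunction} (which apply since $FC$ is an at least partial copy of itself) together with the non-modal semantic clauses, which are literally the same in Definitions~\ref{def:validityinpartialsimilaritymodel} and~\ref{def:validityinfullsimilaritymodel}. So I would dispatch all of those with a short remark and focus on $A = \Box B$ and $A = \Diamond B$ under $\Gamma = \{\emptyset\}$, now reading the modal clauses off Definition~\ref{def:validityinfullsimilaritymodel} (where the $\leq$-quantifier has disappeared) and comparing against the strong-model clauses of Definition~\ref{def:validityinstrong}.

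For $A = \Box B$: in the forward direction, $\vDash_{j}^{M^S, K} \Box B$ gives $\vDash_{j}^{M^S, K'} B$ for every $K'$ with $K \succ K'$; given arbitrary $w''$ with $w' R w''$, I invoke Lemma~\ref{lemma:modalexistencehomogeneous} with this $j$ and $w'$ to obtain an interpretation $f_{S,w'''}$ of some $w''' \in W$ in $FC$ with $f_{S,w'''}(j) = v(w'')$ and with the property that $f_{S,w}(j') = v(s)$ and $f_{S,w'''}(j') = v(s')$ imply $sRs'$ for all $j' \in Q_{FC}$. That property is exactly clause~2 of Definition~\ref{def:equivalenstronghomogeneousmodel}, so $K \succ K'$ for $K' = \langle Q_{FC}, \leq_{FC}, f_{S,w'''} \rangle$, which belongs to $F$ by clause~1; hence $\vDash_{j}^{M^S, K'} B$, and the induction hypothesis (applied at $j$, using $f_{S,w'''}(j) = v(w'')$) yields $\Vdash_{w''}^{S} B$. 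Arbitrariness of $w''$ gives $\Vdash_{w'}^{S} \Box B$. Conversely, from $\Vdash_{w'}^{S} \Box B$ I take any $K' = \langle Q_{FC}, \leq_{FC}, f_{S,s} \rangle$ with $K \succ K'$ and specialize clause~2 of Definition~\ref{def:equivalenstronghomogeneousmodel} to the object $j$: since $f_{S,w}(j) = v(w')$, writing $f_{S,s}(j) = v(w'')$ gives $w' R w''$, so $\Vdash_{w''}^{S} B$, and the induction hypothesis gives $\vDash_{j}^{M^S, K'} B$; as $K'$ was arbitrary, $\vDash_{j}^{M^S, K} \Box B$.

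For $A = \Diamond B$: the forward direction is immediate --- a witness $K' = \langle Q_{FC}, \leq_{FC}, f_{S,s} \rangle$ with $K \succ K'$ and $\vDash_{j}^{M^S, K'} B$ forces $w' R w''$ for $f_{S,s}(j) = v(w'')$ by clause~2 of Definition~\ref{def:equivalenstronghomogeneousmodel}, and the induction hypothesis gives $\Vdash_{w''}^{S} B$, hence $\Vdash_{w'}^{S} \Diamond B$. For the converse, from $\Vdash_{w'}^{S} \Diamond B$ pick $w''$ with $w' R w''$ and $\Vdash_{w''}^{S} B$, feed $w'$, $j$, $w''$ into Lemma~\ref{lemma:modalexistencehomogeneous} to get $f_{S,w'''}$ with $f_{S,w'''}(j) = v(w'')$ such that $K' := \langle Q_{FC}, \leq_{FC}, f_{S,w'''} \rangle$ satisfies $K \succ K'$; the induction hypothesis on $w''$ then gives $\vDash_{j}^{M^S, K'} B$, hence $\vDash_{j}^{M^S, K} \Diamond B$. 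Afterwards I would record the obvious analogues of Corollaries~\ref{cor:firstcorollary} and~\ref{cor:secondcorollary}.

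The main obstacle is really just the bookkeeping point of certifying that the witness propositional model handed back by Lemma~\ref{lemma:modalexistencehomogeneous} genuinely lies in $F_{M^S}$ and stands in $\succ$ to $K$; this is precisely where the homogeneous setting pays off. Because Lemma~\ref{lemma:modalexistencehomogeneous} (which rests on $F_3$ and witness uniqueness in the strong model $S$) interprets a world of $S$ in the \emph{whole} frame $FC$ rather than in a truncation $PC^{j}$, the resulting pair automatically has the shape demanded by clause~1 of Definition~\ref{def:equivalenstronghomogeneousmodel}, and the ``$sRs'$ for all $j'$'' conclusion of that lemma is verbatim the defining condition of $\succ$; moreover, since every model of $M^S$ shares the frame $Q_{FC}$, all the side-conditions of the form ``$j' \in Q_{K'}$'' that appear in the partial-model modal clauses simply vanish. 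Everything else is a transcription of the corresponding passage of Lemma~\ref{lemma:proofofequalityforworlds}, replacing $PC$ by $FC$, Lemma~\ref{lemma:modalexistence} by Lemma~\ref{lemma:modalexistencehomogeneous}, and the partial modal clauses by the homogeneous ones.
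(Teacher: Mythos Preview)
Your proposal is correct and matches the paper's own proof essentially line for line: the paper too dispatches all non-modal cases by analogy with Lemma~\ref{lemma:proofofequalityforworlds} and handles $\Box B$ and $\Diamond B$ exactly as you do, invoking Lemma~\ref{lemma:modalexistencehomogeneous} to manufacture the $\succ$-successor in the existential direction and reading off $w'Rw''$ from the definition of $\succ$ in the universal direction. Your closing paragraph about why the witness from Lemma~\ref{lemma:modalexistencehomogeneous} automatically lies in $F_{M^S}$ and satisfies the $\succ$-condition is a helpful gloss that the paper leaves implicit.
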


    \begin{corollary} \label{cor:firstcorollaryhomogeneous}
    Let $S$ be an arbitrary strong model and pick a $K \in F_{M^{S}}$ with $K = \langle Q_{FC}, \leq_{FC}, f_{S,w} \rangle$. Then $\Gamma \vDash^{M^S, K} A$ iff $\Gamma \Vdash^{S}_{w} A$.
    \end{corollary}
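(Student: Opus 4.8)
The plan is to mirror the proof of Corollary \ref{cor:firstcorollary} almost verbatim, replacing every appeal to Lemma \ref{lemma:proofofequalityforworlds} by an appeal to its homogeneous analogue Lemma \ref{lemma:proofofequalityforworldshomogeneous}. Two background facts are needed and both come for free: first, the interpretation function $f_{S,w}$ carried by $K$ still satisfies all four clauses of Definition \ref{def:interpretationfunction}, since every strong model is a birelational model and $FC$ is an at least partial copy of itself, so Definition \ref{def:interpretationfunction} and Lemma \ref{lemma:firstfunctionexistence} apply unchanged; second, monotonicity holds for strong models in the form ``$\Gamma \Vdash^{S}_{w} A$ and $w \leq w'$ imply $\Gamma \Vdash^{S}_{w'} A$'', which is the standard intuitionistic argument for the non-modal connectives, is built into the clause for $\Box$ in Definition \ref{def:validityinstrong}, and follows from $F_2$ for $\Diamond$, with the extension to non-empty $\Gamma$ being immediate from the way entailment is defined.

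For the left-to-right direction I would assume $\Gamma \vDash^{M^S, K} A$, so $\Gamma \vDash^{M^S, K}_{j} A$ for every $j \in Q_{FC}$. By Clause 1 of Definition \ref{def:interpretationfunction} there is some $j_0 \in Q_{FC}$ with $f_{S,w}(j_0) = v(w)$. Instantiating at $j_0$ gives $\Gamma \vDash^{M^S, K}_{j_0} A$, and applying Lemma \ref{lemma:proofofequalityforworldshomogeneous} with this $j_0$ (so the ``$w'$'' of that lemma is $w$ itself) yields $\Gamma \Vdash^{S}_{w} A$.

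For the right-to-left direction I would assume $\Gamma \Vdash^{S}_{w} A$; by monotonicity $\Gamma \Vdash^{S}_{w'} A$ for every $w \leq w'$. Fix an arbitrary $j \in Q_{FC}$ and, without loss of generality, let $f_{S,w}(j) = v(w')$. By Clause 2 of Definition \ref{def:interpretationfunction} we have $w \leq w'$, hence $\Gamma \Vdash^{S}_{w'} A$. Lemma \ref{lemma:proofofequalityforworldshomogeneous} then gives $\Gamma \vDash^{M^S, K}_{j} A$, and since $j$ was arbitrary we conclude $\Gamma \vDash^{M^S, K} A$.

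The argument is entirely routine once Lemma \ref{lemma:proofofequalityforworldshomogeneous} is available; the only point worth a second's attention is making explicit that the monotonicity property is in force for strong models (it is not separately restated in the excerpt but is exactly the adaptation of the Monotonicity proposition of Section 3 to the clauses of Definition \ref{def:validityinstrong}). I therefore do not expect any genuine obstacle, and this corollary in turn feeds into the homogeneous analogue of Corollary \ref{cor:secondcorollary}.
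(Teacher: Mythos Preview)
Your proposal is correct and follows exactly the approach the paper takes: the paper's own proof simply says the argument is analogous to Corollary~\ref{cor:firstcorollary} with Lemma~\ref{lemma:proofofequalityforworldshomogeneous} in place of Lemma~\ref{lemma:proofofequalityforworlds}, and you have spelled that out faithfully. One tiny inaccuracy in your parenthetical remark: for strong models monotonicity of $\Box$ is \emph{not} built into the clause of Definition~\ref{def:validityinstrong} (that clause is the simple one without $\leq$) but rather follows from condition $F_3$; this does not affect the argument, since all you actually use is that monotonicity holds.
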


\begin{proof} The proof is analogous to that of Corollary \ref{cor:firstcorollary}, the main difference being that we use Lemma \ref{lemma:proofofequalityforworldshomogeneous} instead of Lemma \ref{lemma:proofofequalityforworlds}.

\end{proof}

\begin{corollary} \label{cor:secondcorollaryhomogeneous}
    $\Gamma \vDash^{M^S} A$ iff $\Gamma \Vdash^{S} A$.
\end{corollary}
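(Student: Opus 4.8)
The plan is to prove Corollary \ref{cor:secondcorollaryhomogeneous} by exactly mirroring the proof of Corollary \ref{cor:secondcorollary}, substituting the homogeneous constructions for the partial ones at each step. Recall that $\Gamma \vDash^{M^S} A$ means $\Gamma \vDash^{M^S, K} A$ for every $K \in F_{M^S}$, and $\Gamma \Vdash^{S} A$ means $\Gamma \Vdash^{S}_{w} A$ for every $w \in W$. So the two directions amount to translating a universal statement over worlds of $S$ into a universal statement over propositional models of $M^S$, and back.

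For the forward direction, assume $\Gamma \vDash^{M^S} A$, so $\Gamma \vDash^{M^S, K} A$ for all $K \in F$. Fix an arbitrary $w \in W$. Since every strong model is a birelational model and $FC$ is an at least partial copy of itself, Lemma \ref{lemma:firstfunctionexistence} (with $PC = FC$) gives an interpretation $f_{S,w}$ of $w$ in $FC$; hence $K' = \langle Q_{FC}, \leq_{FC}, f_{S,w} \rangle \in F$ by Definition \ref{def:equivalenstronghomogeneousmodel}. Instantiating the hypothesis at $K'$ gives $\Gamma \vDash^{M^S, K'} A$, and then Corollary \ref{cor:firstcorollaryhomogeneous} yields $\Gamma \Vdash^{S}_{w} A$. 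Since $w$ was arbitrary, $\Gamma \Vdash^{S} A$.

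For the converse, assume $\Gamma \Vdash^{S} A$, so $\Gamma \Vdash^{S}_{w} A$ for every $w \in W$. Fix an arbitrary $K \in F$. By Definition \ref{def:equivalenstronghomogeneousmodel}, $K = \langle Q_{FC}, \leq_{FC}, f_{S,w'} \rangle$ for some $w' \in W$. Since $\Gamma \Vdash^{S}_{w'} A$, Corollary \ref{cor:firstcorollaryhomogeneous} gives $\Gamma \vDash^{M^S, K} A$. As $K$ was arbitrary, $\Gamma \vDash^{M^S} A$.

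The only subtlety — and it is minor — is making sure the enabling lemmas really do apply in the homogeneous setting: Lemma \ref{lemma:firstfunctionexistence} applies because $FC$ is a partial copy of itself and $S$ is in particular a birelational model, and Corollary \ref{cor:firstcorollaryhomogeneous} has already been established for exactly this situation. No new combinatorial work is needed; the proof is a verbatim adaptation of Corollary \ref{cor:secondcorollary}'s, replacing ``$M^B$'' by ``$M^S$'', ``Lemma \ref{lemma:firstfunctionexistence}'' remains as is, ``Corollary \ref{cor:firstcorollary}'' by ``Corollary \ref{cor:firstcorollaryhomogeneous}'', and using that each $K \in F_{M^S}$ shares the frame $\langle Q_{FC}, \leq_{FC}\rangle$. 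I therefore expect no real obstacle; the proof can simply be stated as ``analogous to that of Corollary \ref{cor:secondcorollary}, using Corollary \ref{cor:firstcorollaryhomogeneous} in place of Corollary \ref{cor:firstcorollary}.''
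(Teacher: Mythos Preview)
Your proposal is correct and takes essentially the same approach as the paper, which simply states that the proof is analogous to that of Corollary \ref{cor:secondcorollary} using Corollary \ref{cor:firstcorollaryhomogeneous} in place of Corollary \ref{cor:firstcorollary}. Your unpacking of both directions matches the original argument in Corollary \ref{cor:secondcorollary} line by line, with the appropriate substitutions for the homogeneous setting.
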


\begin{proof}

The proof is analogous to that of Corollary \ref{cor:secondcorollary}, the main difference being that we use Corollary \ref{cor:firstcorollaryhomogeneous} instead of Corollary \ref{cor:firstcorollary}.
\end{proof}


We are finally ready to prove our main results for $IK$ and $MK$:

\begin{theorem}[Soundness and Completeness]  The following hold:

\begin{enumerate}
    \item $\Gamma \vDash A$ iff $\Gamma \Vdash A$;

    \item $\Gamma \vDash^{*} A$ iff $\Gamma^{*} \Vdash A$.
\end{enumerate}
\end{theorem}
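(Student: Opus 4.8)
The plan is to obtain the theorem by combining the two translations built above, treating the correspondence lemmas as black boxes; no new induction is needed. Statement~(1) asserts that partial models and universal birelational models validate exactly the same consequences, and statement~(2) the same for homogeneous models and strong models; composing these with the known soundness and completeness of birelational models for $IK$ and of strong models for $MK$ (\cite{Sim94}) then shows that $\vDash$ characterizes $IK$ and $\vDash^{*}$ characterizes $MK$.

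For the direction ``$\Gamma \Vdash A$ implies $\Gamma \vDash A$'' of~(1), I would take an arbitrary partial model $M$; by Lemma~\ref{lemma:existencebirelationamodel}, $M^{*}$ is a birelational model, so $\Gamma \Vdash A$ gives $\Gamma \Vdash^{M^{*}} A$, that is, $\Gamma \Vdash^{M^{*}}_{\langle w,K\rangle} A$ for every $K \in F_{M}$ and $w \in W_{K}$. Lemma~\ref{lemma:correspondenceproofmappingsimilarityintobirelationa} converts each of these into $\Gamma \vDash^{M,K}_{w} A$, and clauses~9 and~10 of Definition~\ref{def:validityinpartialsimilaritymodel} lift this to $\Gamma \vDash^{M} A$; arbitrariness of $M$ yields $\Gamma \vDash A$. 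For ``$\Gamma \vDash A$ implies $\Gamma \Vdash A$'', I would take an arbitrary (countable) birelational model $B$; by Corollary~\ref{cor:MBisapartialmodel}, $M^{B}$ is a partial model, so $\Gamma \vDash A$ gives $\Gamma \vDash^{M^{B}} A$, and Corollary~\ref{cor:secondcorollary} yields $\Gamma \Vdash^{B} A$; arbitrariness of $B$ gives $\Gamma \Vdash A$.

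Statement~(2) I would prove by the identical argument, with ``homogeneous model'' in place of ``partial model'', ``strong model'' in place of ``birelational model'', $\vDash^{*}$ in place of $\vDash$ and $\Vdash^{*}$ in place of $\Vdash$, invoking Lemma~\ref{lemma:existencestrongbirelationamodel} and Lemma~\ref{lemma:correspondenceproofmappingshomogeneous} instead of Lemmas~\ref{lemma:existencebirelationamodel} and~\ref{lemma:correspondenceproofmappingsimilarityintobirelationa} for the first direction, and Corollary~\ref{cor:MSishomogeneous} and Corollary~\ref{cor:secondcorollaryhomogeneous} instead of Corollaries~\ref{cor:MBisapartialmodel} and~\ref{cor:secondcorollary} for the second.

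The one point requiring real care, as opposed to the routine bookkeeping above, is the cardinality restriction inherent in the $B \mapsto M^{B}$ construction: a full canonical structure is built from a fixed \emph{countable} $FC$, and the interpretation functions of Lemma~\ref{lemma:firstfunctionexistence} are produced via surjections from $\mathbb{N}$, so $M^{B}$ is directly available only when $W_{B}$ is countable. Hence the ``$\vDash$ implies $\Vdash$'' directions, stated for \emph{all} birelational (respectively strong) models, should either appeal to the remark that the canonical quasi-structure adapts to any cardinality, or be routed through the countable (finite, canonical, or filtration-based) model property of $IK$ and $MK$ --- it suffices to refute an underivable sequent in a countable model. The converse directions carry no such restriction, since Lemmas~\ref{lemma:existencebirelationamodel} and~\ref{lemma:correspondenceproofmappingsimilarityintobirelationa} bound neither the size of $M$ nor that of $M^{*}$. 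Apart from that I would only double-check that the ``for all $\langle K,w\rangle$'' quantification defining $\vDash^{M}$ matches the ``for all worlds'' quantification defining $\Vdash^{M^{*}}$ --- immediate from the definition of $M^{*}$ --- and note that $M \mapsto M^{*}$ and $B \mapsto M^{B}$ need not be mutually inverse: we use only that each preserves and reflects validity, which the correspondence lemmas already guarantee.
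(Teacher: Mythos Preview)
Your proposal is correct and follows essentially the same route as the paper: both directions of each equivalence are obtained by instantiating the universal quantifier over models with the image of the other kind under the appropriate translation ($M\mapsto M^{*}$ via Lemmas~\ref{lemma:existencebirelationamodel}/\ref{lemma:correspondenceproofmappingsimilarityintobirelationa} and $B\mapsto M^{B}$ via Corollaries~\ref{cor:MBisapartialmodel}/\ref{cor:secondcorollary}, and the analogous results for the homogeneous/strong case). Your explicit flagging of the countability restriction in the $B\mapsto M^{B}$ construction is a point the paper only acknowledges in passing (``our strategy can also be adapted to deal with uncountable sets'') without working it into the proof, so your care there is well placed.
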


\begin{proof}
    For the first statement, assume  $\Gamma \vDash A$. Then $\Gamma \vDash^{M} A$ for all $M$. Let $B$ be any birelational model. Then by Definition \ref{def:equivalentbirelationalmodel} and Corollaries \ref{cor:MBisapartialmodel} and \ref{cor:secondcorollary} there is a partial model $M^{B}$ such that $\Gamma \vDash^{M^B} A$ iff $\Gamma \Vdash^{B} A$. By putting $M = M^{B}$ we have $\Gamma \vDash^{M^{B}} A$, so we conclude $\Gamma \Vdash^{{B}} A$, and since $B$ was arbitrary we conclude $\Gamma \Vdash A$. For the converse, assume $\Gamma \Vdash A$. Then $\Gamma \Vdash^{B} A$ for all $B$. Let $M$ be any partial model. Then by Definition \ref{def:definitionequivalencebirelationalmodel}, Lemma \ref{lemma:existencebirelationamodel} and Lemma \ref{lemma:correspondenceproofmappingsimilarityintobirelationa} there is a birelational model $M^{*}$ such that $\Gamma \vDash^{M} A$ iff $\Gamma \Vdash^{M^*} A$.  By putting $M^{*} = B$ we have $\Gamma \Vdash^{M^{*}} A$, so we conclude $\Gamma \vDash^{{M}} A$, and since $M$ was arbitrary we conclude $\Gamma \vDash A$. The proof of the second statement is analogous, the main difference being that it uses Definition \ref{def:equivalenstronghomogeneousmodel} and Corollaries \ref{cor:MSishomogeneous} and \ref{cor:secondcorollaryhomogeneous} instead of Definition \ref{def:equivalentbirelationalmodel} and Corollaries \ref{cor:MSishomogeneous} and \ref{cor:secondcorollary}, as well as Lemmas \ref{lemma:existencestrongbirelationamodel} and \ref{lemma:correspondenceproofmappingshomogeneous} instead of Lemmas \ref{lemma:existencebirelationamodel} and Lemma \ref{lemma:correspondenceproofmappingsimilarityintobirelationa}.

\end{proof}

Consider also Definition \ref{excessivemodels} and Proposition \ref{prop:excessivemodels}. If the definition is combined with the clauses from Definition \ref{def:validityinstrong}, the mapping from strong models into homogeneous models (and vice-versa) can still be used, meaning that all of the results of this section still hold. This shows that, as previously claimed, $MK$ is indeed maximal with respect to the listed frame conditions. On a final note, since the full canonical structure can be rooted (that is, its set $Q_{0}$ can be a singleton) and the partial structures $PC^j$ we use during the proofs for $IK$ are also rooted, the results of this section and of the one preceding it could also be shown if we considered only rooted partial and homogeneous models.

\section{Extension to other non-classical logics} 

The logic $MK$ holds a distinguished position in this framework due to the fact that it allows a straightforward generalization of the modal semantics to other logics. To see why, consider Definition \ref{def:validityinpartialsimilaritymodel} and the shape of the clause for $\Box$ in $IK$:


\begin{itemize}
    \item[] $\vDash_{w}^{M, K} \Box A \Longleftrightarrow $ for all $w'$ with $w\leq_{K'} w'$, if $ K \succ K'$ and $w' \in K'$ then $\vDash_{w'}^{M, K'} A$;
\end{itemize}

Generalization of this clause is precluded by the fact that it relies on the intuitionistic modal relation $\leq$, without which monotonicity is lost. On the other hand, consider the modal clauses for $MK$ in Definition  \ref{def:validityinfullsimilaritymodel}:

\begin{enumerate}
  \item[] $\vDash_{w}^{H, K} \Box A \Longleftrightarrow$ for all $K'$, $K \succ K'$ implies $\vDash_{w}^{H, K'} A$;

    \smallskip

  \item[]   $\vDash_{w}^{H, K} \Diamond A \Longleftrightarrow $ there is a $K'$ such that $ K \succ K'$ and $\vDash_{w}^{H, K'} A$;
\end{enumerate}

There is nothing inherently intuitionistic about those clauses, so they only yield semantics for $MK$ because the propositional models $K$ themselves are intuitionistic. As such, we can obtain the $MK$ analogue of any logic characterizable through Kripke models by using these clauses and only taking into account homogeneous models of that logic.

The maximality of $MK$ with respect to the listed frame conditions suggest that every modal logic obtained through this method will be quite strong. In case the chosen logic does not satisfy modal categoricity, it may be possible to obtain distinct modal extensions by changing either the semantic clauses, the conditions imposed on frames or the class of models, but since the required changes depend on the particular logic being considered there can be no modular approach to such extensions. Just like $IK$ is obtained from $MK$ by allowing partially copied frames but changing the clause for $\Box$ in order to retain monotonicity, the shape of the modal clauses and of the propositional models will be dictated by which properties we want to induce in the modal logics.

In this paper we have only dealt with intuitionistic analogues of the classical modal logic $K$, but this is only so that the results are as general as possible. It can be easily shown that by imposing the traditional conditions on the modal relation $\succ$ we can obtain the analogues of stronger classical modal logics. Although in the case of $IK$ the proof of this fact for formulas containing $\Box$ relies on reasoning with the intuitionistic relation $\leq$, this can be shown to hold in general for all the analogues of $MK$ obtained through the modular modal clauses. We exemplify by considering two cases:

\begin{proposition}

 Given the modal clauses in Definition \ref{def:validityinfullsimilaritymodel} and any $K$, the following hold:

\begin{enumerate}
    \item  If $\vDash^{H, K}_{w} \Box A$ and $\succ$ is reflexive then $\vDash^{H, K}_{w} A$.

\smallskip
    
     \item  If $\vDash^{H, K}_{w} \Box A$ and $\succ $ is transitive then $\vDash^{H, K}_{w} \Box \Box A$.
\end{enumerate}
\end{proposition}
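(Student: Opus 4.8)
The plan is to unwind the modal clause from Definition~\ref{def:validityinfullsimilaritymodel} in each case and apply the relevant property of $\succ$ directly; neither statement needs induction, since the $\Box$-clause for homogeneous models keeps the world $w$ fixed and quantifies only over $\succ$-related propositional models.

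For the first statement, I would assume $\vDash^{K,H}_{w} \Box A$, which by the $\Box$-clause of Definition~\ref{def:validityinfullsimilaritymodel} means that $K \succ K'$ implies $\vDash^{K',H}_{w} A$ for every $K'$. Since $\succ$ is reflexive we have $K \succ K$, so instantiating $K' := K$ yields $\vDash^{K,H}_{w} A$, as desired.

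For the second statement, I would assume $\vDash^{K,H}_{w} \Box A$ together with transitivity of $\succ$ and aim to derive $\vDash^{K,H}_{w} \Box \Box A$. Applying the $\Box$-clause twice, this amounts to showing that for every $K'$ with $K \succ K'$ and every $K''$ with $K' \succ K''$ we have $\vDash^{K'',H}_{w} A$. Given such $K'$ and $K''$, transitivity of $\succ$ gives $K \succ K''$, and then the hypothesis $\vDash^{K,H}_{w} \Box A$ yields $\vDash^{K'',H}_{w} A$. Since $K'$ and $K''$ were arbitrary, we conclude $\vDash^{K',H}_{w} \Box A$ for every $K' $ with $K \succ K'$, and hence $\vDash^{K,H}_{w} \Box \Box A$.

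I do not expect any genuine obstacle here: unlike the corresponding facts for $IK$, whose $\Box$-clause quantifies over $\leq$-successors and therefore forces arguments that interleave reasoning about $\leq$ and about the modal relation, the modular $MK$-style clauses reduce these properties to the textbook correspondence arguments for the axioms $T$ and $4$. The only point to keep track of is that the fixed world $w$ must belong to $W_{K'}$ and $W_{K''}$, but this is automatic because all propositional models of a homogeneous model share the same frame, as noted immediately after Definition~\ref{def:validityinfullsimilaritymodel}.
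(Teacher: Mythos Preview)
Your proposal is correct and essentially the same as the paper's proof: both simply unfold the $\Box$-clause and apply reflexivity or transitivity of $\succ$. The only cosmetic difference is that for the second statement the paper argues by contradiction (assuming $\nvDash^{K,H}_{w} \Box\Box A$ and deriving a clash), whereas you give the equivalent direct argument.
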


\begin{proof}
    For the first statement, assume $\vDash^{H, K}_{w} \Box A$. By reflexivity of $K$ we conclude $K \succ K$, so by the semantic clause for $\Box$ we obtain $\vDash^{H, K}_{w} A$. For the second, assume $\vDash^{H, K}_{w} \Box A$. In order to show a contradiction, further assume $\nvDash^{H, K}_{w} \Box \Box A$. So there must be a $K'$ with $K \succ K'$ such that $\nvDash^{H, K'}_{w} \Box A$, hence there must also be a $K''$ with $K' \succ K''$ such that $\nvDash^{H, K''}_{w} A$. Since $K \succ K'$, $K' \succ K''$ and $\succ$ is transitive we have $K \succ K''$, so the semantic clause for $\Box$ yields $\vDash^{M, K''}_{w} A$. Contradiction. Therefore, $\vDash^{H, K}_{w} \Box \Box A$.
\end{proof}

The proof of this proposition only relies on properties of $\succ$ and on the shape of our modal clauses, meaning that this holds for any non-classical modal logic obtained through this approach.

On a final note, notice that Tarskian truth-functional models can be visualized as degenerate Kripke models (usually models $K$ in which $W_{K}$ is a singleton), whence it follows that our approach also applies to logics which Tarskian semantics. In particular, intuitionistic propositional models with a singleton $W_{K}$ can easily be shown to correspond to classical propositional models, and it is equally easy to show that the logic obtained by applying this method to such models is precisely the classical modal logic $K$.


\section{Higher-order Kripke models}  The approach to modal logics presented here generalizes the Kripkean approach by applying its ideas in a broader setting. We still start with a set of models, define an accessibility relation and use the resulting structure to define modalities, but now we are allowed to start with Kripke models instead of being limited to truth-assignment functions. Since our relational structures now contain relational structures themselves, this also leads to a \textit{mathematical} generalization of Kripke models, as the freshly obtained relational structure can then be used as an object in yet another relational structure. Following this idea to its natural conclusions leads us to \textit{higher-order Kripke models}, a generalization of Kripke models which is closer to Kripke's original idea (both conceptually and mathematically) than either neighborhood semantics \cite{Pacuit2017-PACNSF} or general (algebraic) Kripke models \cite{Chagrov1997-CHAML-2}, its other main generalizations.

Inasmuch they use more than one relation, birelational models themselves are generalizations of Kripke models. We can generalize this further by considering an arbitrary \textit{set} $\mathcal{R}$ of relations. After this is done, (multirelational) higher-order Kripke models can be defined as follows:

\begin{definition}
 Let $\mathbb{L}$ be a language and $\mathcal{V}$ a set of truth values. A \textit{$0$-ary Kripke model $K^{0}$} is a sequence $K^{0} = \langle W, \mathcal{R}, \textit{f} \rangle$, where $W$ is a non-empty set of objects, $\mathcal{R}$ a non-empty set of relations on $W$ and \textit{f} a function $f : W \times \mathbb{L} \mapsto \mathcal{V}$.
\end{definition}

\begin{definition}
 Let $\mathbb{L}$ be a language and $\mathcal{V}$ a set of truth values. For $n > 0$, a \textit{$n$-ary Kripke model $K^{n}$} is a sequence $K^{n} = \langle W, R, \textit{f} \rangle$, where $W$ is a non-empty set of $(n-1)$-ary Kripke models, $R$ a non-empty set of relations on $W$ and \textit{f} is a function $f : W \times \mathbb{L} \mapsto \mathcal{V}$.
\end{definition}

If we want to remain even closer to the original formulation, we can also define unirelational higher-order Kripke models as follows:

\begin{definition}
    A $0$-ary Kripke model is \textit{unirelational} if its set $\mathcal{R}$ is a singleton.
\end{definition}

\begin{definition}
    A $n$-ary Kripke models is \textit{unirelational} if its set $\mathcal{R}$ is a singleton and all elements of $W$ are unirelational $(n-1)$-ary Kripke models.
\end{definition}

Traditional Kripke models are then equivalent to $0$-ary unirelational Kripke models, and birelational models to $0$-ary Kripke models in which $\mathcal{R}$ has two elements. The function $f$ could alternatively been defined as assigning truth values only to pairs comprised of an element of $W$ and an atomic formula $\mathbb{L}$ if we also defined semantic clauses extending the valuations from atoms to the remaining formulas in $\mathbb{L}$, similar to what is done in Definitions \ref{def:validityinpartialsimilaritymodel} and \ref{def:validityinfullsimilaritymodel}. We could also loosen the definition of models by stating that the set of objects of a $n$-ary model must be models of level \textit{at most} $(n-1)$, but it is not clear whether this brings any benefit.




Although this will not be investigated here, there are reasons to believe that the switch from $0$-ary to $n$-ary Kripke models leads to increased expressivity, in the sense that some Kripke-incomplete logics (thus not complete for $0$-ary models) might be complete with respect to some class of $n$-ary models for $n > 0$. Every $n$-ary Kripke model with $n > 0$ has at their disposal the functions used to define the $n-1$ ary models, meaning that validity in a $n$-ary model can be defined as a function of validity (or invalidity) in any given collection of $(n-1)$-ary models. For instance, we can define $1$-ary models $K^{1} = \langle W^{1}, \mathcal{R}^{1}, \textit{f}^{1}  \rangle$ for a logic $\mathcal{L}^{1}$ as containing only $0$-ary models $K = \langle W_{K}, \mathcal{R}_{K}, \textit{f}_{K}  \rangle$ sound and complete w.r.t. any other Kripke-complete logic $\mathcal{L}^{0}$ and then define its function $f^{1}$ as assigning the value true to a pair $\langle K, A \rangle$ if either $f_{K}$ assigns true to $\langle w, A \rangle$ for every $w \in W_{K}$ or some additional condition is satisfied, a procedure that guarantees that every theorem of $\mathcal{L}^{0}$ is still a theorem of $\mathcal{L}^{1}$ but may also allow the inclusion of new theorems through the additional condition of $f^{1}$.

Precisely stated, the conjecture is as follows:

\begin{conjecture} \label{conj:atleastone}
    There is at least one Kripke-incomplete logic $\mathcal{L}$ which is complete w.r.t. to a class of $n$-ary Kripke models for some $n > 0$.
\end{conjecture}

Such a result would be remarkable because, as mentioned before, higher-order Kripke models are much closer to the original idea behind Kripke models than neighborhood semantics and general Kripke models. Those generalizations often have to be used because, unlike traditional Kripke semantics, neighborhood semantics and general models can provide semantics for any normal modal logic. Due to its additional structure, we conjecture that this is also the case for the new semantics:

\begin{conjecture} \label{conj:allnormalmodal}
    All normal modal logics are complete w.r.t. some class of $n$-ary models for $n \geq 0$.
\end{conjecture}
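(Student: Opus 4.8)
The plan is to reduce the conjecture to a classical completeness theorem and then carry out a structural encoding. Although not every normal modal logic is Kripke-complete, every normal modal logic $L$ \emph{is} complete with respect to general frames — equivalently, is algebraically complete. So the first step is to fix a general frame $\mathfrak{F}_L = \langle W, R, \mathcal{P}\rangle$ validating exactly $L$ (for instance the canonical descriptive general frame, the Jónsson--Tarski dual of the Lindenbaum algebra of $L$): here $\mathcal{P}\subseteq\wp(W)$ is a modal subalgebra, and $\Vdash_L A$ iff $A$ is true at every world under every \emph{admissible} valuation $v$, i.e. every $v$ sending atoms into $\mathcal{P}$. Since $\mathcal{P}$ is closed under the Booleans and under the modal operation, the family $V$ of admissible valuations is closed under uniform substitution; this is precisely what will guarantee that the logic we eventually read off is substitution-closed, hence a genuine normal modal logic.

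The second, and main, step is to turn $\mathfrak{F}_L$ into a higher-order Kripke model, using the \emph{second layer} to carry exactly the information that general frames pack into $\mathcal{P}$ but that no single $0$-ary Kripke model can. Concretely, I would build a $1$-ary model $K^1_L = \langle W^1, \{\succ\}, f^1\rangle$ whose objects are (suitably tagged) $0$-ary models $M_v$, one for each $v\in V$, each $M_v$ recording at its worlds the atomic facts determined by $v$; the relation $\succ$ is then defined between the $M_v$'s so that the higher-order necessity clause — the evident generalization of Definition~\ref{def:validityinfullsimilaritymodel}, reading $\Box$ as a quantifier over $\succ$-accessible \emph{models} rather than over accessible worlds — mimics the modal operation of the algebra. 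The key lemma is then a Truth Lemma of the familiar shape: for every formula $A$, the top-level valuation $f^1$ at the relevant point of $M_v$ agrees with truth of $A$ at the corresponding world of $\langle W,R,v\rangle$, proved by induction on the complexity of $A$, with the propositional cases routine and the $\Box$/$\Diamond$ cases the place where the design of $\succ$ must be checked — exactly as in the $IK\leftrightarrow$ birelational correspondences established earlier (cf. Lemmas~\ref{lemma:correspondenceproofmappingsimilarityintobirelationa} and~\ref{lemma:proofofequalityforworlds}), only with the extra algebraic bookkeeping coming from $\mathcal{P}$. Granting the Truth Lemma, the set of formulas valid in $K^1_L$ — or, to be safe about what a ``class'' should be, in the class of all such $K^1_L$ obtained from general frames for $L$ — is exactly $L$, and it is substitution-closed because $V$ is; hence $L$ is complete with respect to that class of $1$-ary (a fortiori, $n$-ary for $n\geq 0$) Kripke models.

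The hard part will be the construction of $\succ$ and the verification that the model-indexed necessity clause simulates the algebra's modal operation \emph{faithfully} — that one recovers precisely $L$ and not some proper extension of it. The higher-order $\Box$ quantifies ``sideways'' over alternative models rather than ``forward'' over accessible worlds, so, just as in the $IK$/$MK$ translations, the relation between second-layer objects has to be engineered with care, and, if a single layer does not suffice for a given $L$, iterated — the conjecture's allowance of arbitrary $n\geq 0$, and the paper's willingness to admit $\omega$-many layers, is exactly the slack one expects to need. A secondary, more foundational, obstacle is that the satisfaction relation for $n$-ary models is only sketched here, so a rigorous argument must first commit to a precise semantics (the natural lifting of Definitions~\ref{def:validityinpartialsimilaritymodel}--\ref{def:validityinfullsimilaritymodel}) and check it is well-posed: no circularity in the layered valuations, the analogue of monotonicity where it is needed, and so on. Once the semantics is pinned down and the $\succ$-construction is in place, the remainder is essentially bookkeeping around general-frame completeness; Conjecture~\ref{conj:atleastone} would then drop out as the special case in which $L$ is chosen Kripke-incomplete.
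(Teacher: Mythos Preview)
The paper does not prove this statement: it is explicitly labeled a \emph{conjecture}, and the surrounding text offers only informal motivation for why it should hold. There is therefore no proof in the paper against which your proposal can be compared.

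That said, your strategy is in the spirit of the paper's own heuristic justification. The paper suggests that what would make the conjecture go through is precisely the freedom, at level $n$, to define $f^{n}$ so that it does not merely reduce to the valuations of the $(n{-}1)$-ary constituents --- allowing ``additional conditions'' to be imposed. Your plan to encode the admissibility restriction of a general frame via the higher-order structure is a concrete instance of that idea, and starting from general-frame completeness is the natural move.

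You have also correctly identified the two genuine obstacles. First, the semantics of $n$-ary models is left deliberately open in the paper: the function $f$ at each level is an arbitrary map $W\times\mathbb{L}\to\mathcal{V}$, and no recursive satisfaction clauses are fixed beyond the intuitionistic special case. Until a precise definition of ``validity in a class of $n$-ary models'' is committed to, the conjecture is not a sharp mathematical statement, and any argument must begin by making that commitment and defending it as the intended one. Second, your construction of $\succ$ has a real tension to resolve: in the modal clauses of Definitions~\ref{def:validityinpartialsimilaritymodel}--\ref{def:validityinfullsimilaritymodel}, $\Box$ quantifies over $\succ$-accessible \emph{models} while keeping the world-coordinate fixed, whereas the target classical semantics quantifies over $R$-accessible \emph{worlds} under a fixed valuation. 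Indexing the $0$-ary models by admissible valuations $v$ alone does not reconcile these; you will likely need to index by world--valuation pairs (mirroring the $\langle w,K\rangle$ encoding of Definition~\ref{def:definitionequivalencebirelationalmodel}) or otherwise transport $R$ into $\succ$. Your proposal is a reasonable outline, but it remains a plan rather than a proof --- which, since the paper itself leaves the statement open, is the appropriate status.
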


Of course, this can only be the case if the innermost Kripke models of a given $n$-ary model (that is, the set $W$ of $(n-1)$-ary models) are used in an essential way by the definitions. This is not the case in our definitions for intuitionistic logic. In fact, notice that the specification of abstract models in Definition \ref{def:generalmodel} does not include a function at all, which does not bring about issues because Clause 9 of Definition \ref{def:validityinpartialsimilaritymodel} is essentially defining a function which assigns a designated truth value to $\langle K, A \rangle$ whenever the designated value is also assigned to $\langle w, A \rangle$ for every $w \in W_{K}$. Definitions in which the truth assignments of each $n$-ary model are fully reducible to the truth assignments at their $(n-1)$-ary objects cannot reasonably be expected to be more expressive than multirelational Kripke models because they ultimately reduce to valuations of $0$-ary models, meaning that in all likelihood it will be possible to provide a mapping transforming such $n$-ary models into multirelational $0$-ary models. On the other hand, our proofs for $IK$ and $MK$ essentially show that their $0$-ary birelational models can be transformed into $1$-ary unirelational models by pushing one of the relations upwards and making the function of the $1$-ary model agree with the functions of its $0$-ary models. This is likely a general phenomenon for finite sets $\mathcal{R}$, which leads us to some definitions plus a conjecture:

\begin{definition}
    A $0$-ary Kripke model is \textit{finitely relational} if its set $\mathcal{R}$ is finite.
\end{definition}

\begin{definition}
    A $n$-ary Kripke models is \textit{finitely relational} if its set $\mathcal{R}$ is finite and all elements of $W$ are finitely relational $(n-1)$-ary Kripke models.
\end{definition}

\begin{conjecture}\label{conj:pushup}
     If a logic is complete w.r.t. a class of $n$-ary finitely relational models then it is complete w.r.t. a class of unirelational $m$-ary models for $m \geq n$.
\end{conjecture}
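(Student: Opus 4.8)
The plan is to reduce Conjecture \ref{conj:pushup} to a single ``push one relation up one level'' operation and then iterate it. Suppose $L$ is complete with respect to a class $\mathcal{C}$ of $n$-ary finitely relational models. I would exhibit, for each $K \in \mathcal{C}$, a unirelational model $K^{\flat}$ of some finite arity $m_K \geq n$ together with a level-preserving, validity-preserving correspondence between the ``points'' of $K$ and those of $K^{\flat}$, so that $K$ and $K^{\flat}$ validate exactly the same formulas. Then $\mathcal{C}^{\flat} = \{ K^{\flat} : K \in \mathcal{C} \}$ is a class of unirelational models with the same valid formulas as $\mathcal{C}$, hence $L$ is sound and complete with respect to it; padding each $K^{\flat}$ with trivial one-object layers lets one take a common arity $m$ (this uses the tacit assumption that a fixed finite modal language only ever needs relation sets of bounded size --- one relation per primitive box --- so the $m_K$ are uniformly bounded; without it one only gets, per model, a unirelational model of its own finite arity, which still establishes a weak form of the statement). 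The operation $K \mapsto K^{\flat}$ is obtained by repeatedly applying the following: \emph{given a model whose relation set $\mathcal{R}$ at some level contains a relation $R$ together with at least one other relation, produce a model one level deeper whose new top relation $\succ$ encodes $R$ and whose objects one level down carry the relation set $\mathcal{R}\setminus\{R\}$}, until every relation set everywhere is a singleton.

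The single push-up step is a direct generalization of the mapping $B \mapsto M^{B}$ of Definition \ref{def:equivalentbirelationalmodel}, where $R$ is the modal relation and the kept relation is $\leq$; at higher levels one uses the evident recursive analogue. Fix a ``universal'' inner structure in the style of the canonical quasi-structures of Section 6, large enough that every point of every model at the level being processed injects into it. For each model $M$ at that level and each world $w$ of $M$ one builds an \emph{interpretation} of $w$ into that structure (the analogue of Definition \ref{def:interpretationfunction} and Lemma \ref{lemma:firstfunctionexistence}), i.e. a way of making a whole copy of the universal structure ``behave like'' $w$ as far as the kept relations $\mathcal{R}\setminus\{R\}$ are concerned; the pushed-up relation $\succ$ is declared to hold between two such copies exactly when, on their shared points, the worlds they represent are $R$-related (the analogue of clause 2 of Definition \ref{def:equivalentbirelationalmodel}); and one proves the continuation lemma --- the analogue of Lemma \ref{lemma:modalexistence} --- that every $R$-step of the original model is realised by a $\succ$-step between suitable copies. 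The correspondence lemma (``a formula holds at a point of a copy iff it holds at the world that point represents'') is then an induction on formula complexity parallel to Lemma \ref{lemma:proofofequalityforworlds} and its homogeneous variant Lemma \ref{lemma:proofofequalityforworldshomogeneous}, the modal cases invoking the continuation lemma, followed by the bridging corollaries in the style of Corollaries \ref{cor:firstcorollary} and \ref{cor:secondcorollary}. Iterating over all relations of all levels (processing levels in a fixed order and padding shorter branches with trivial layers so all branches stay the same length) produces $K^{\flat}$.

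The hard part --- and, I suspect, the real content of the conjecture --- is the continuation lemma in the absence of frame conditions. In the $IK$/$MK$ proofs, Lemma \ref{lemma:modalexistence} and its strengthening Lemma \ref{lemma:modalexistencehomogeneous} lean essentially on conditions $F_1$, $F_2$ (and, in the homogeneous case, $F_3$), which hold by fiat for birelational and strong models; the remark after Lemma \ref{lemma:modalexistence} shows that the ``delete everything below $j$'' device $PC^{j}$ only substitutes for $F_3$, not for $F_1$ or $F_2$. When one traces the argument for an arbitrary model with two relations $R_1$ (kept) and $R_2$ (pushed up), the step ``extend the interpretation past a $\succ$-step while keeping the $R_1$-structure faithful'' requires, given $w\,R_1\,s$ and $w\,R_2\,u$, some $s'$ with $u\,R_1\,s'$ and $s\,R_2\,s'$ --- precisely $F_1$ for the pair $(R_1,R_2)$ --- and a generic pair of relations does not satisfy it. So the strategy as stated goes through only for classes of finitely relational models whose relations pairwise satisfy the relevant $F_1$/$F_2$-type conditions. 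The residual task, and the point where I expect genuine difficulty (perhaps even a real obstruction), is either to show that every logic complete with respect to \emph{some} class of $n$-ary finitely relational models is already complete with respect to such a well-behaved class, or to replace the thickening mechanism by one that does not demand this; failing both, Conjecture \ref{conj:pushup} should be weakened to models satisfying the appropriate confluence conditions.
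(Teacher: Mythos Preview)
The statement you are attempting to prove is explicitly a \emph{conjecture} in the paper, not a theorem: the paper does not prove it. The paper's own ``justification'' is a single informal paragraph immediately after Conjecture~\ref{conj:atleastonefinitary}, saying only that one can ``likely'' always push relations upwards because the function at the new level can be defined in terms of the functions at the lower level; no construction is given, no continuation lemma is stated, and no obstacle is discussed. So there is no proof in the paper for you to compare against.

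Your proposal is therefore not so much a competing proof as a serious research analysis that goes well beyond the paper's own treatment. Your strategy---iterate a single ``push one relation up one level'' step modelled on the $B \mapsto M^{B}$ construction of Definition~\ref{def:equivalentbirelationalmodel}---is exactly the mechanism the paper has in mind when it speaks of defining $(n+1)$-ary models ``whose functions could have been expressed solely in terms of the $n$-ary models.'' Your identification of the key difficulty is also correct and is \emph{not} acknowledged in the paper: Lemmas~\ref{lemma:modalexistence} and~\ref{lemma:modalexistencehomogeneous} really do lean on $F_1$, $F_2$ (and $F_3$ in the homogeneous case), and the $PC^{j}$ device only circumvents $F_3$. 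Your observation that the continuation step for an arbitrary pair $(R_1,R_2)$ demands precisely an $F_1$-type confluence condition between them is a genuine obstruction to the na\"ive generalization. The paper's informal optimism does not address this; you have put your finger on exactly why the statement remains a conjecture rather than a theorem.

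In short: you have not proved the conjecture, but neither has the paper; your write-up is a more honest assessment of what would be required, and your suggested weakening (restrict to classes whose relations satisfy pairwise confluence conditions) is a reasonable provisional formulation.
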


\begin{conjecture} \label{conj:atleastonefinitary}
    There is at least one Kripke-incomplete logic $\mathcal{L}$ which is complete w.r.t. to a class of unirelational $n$-ary Kripke models for some $n > 0$.
\end{conjecture}

Meaning that it is likely the case that we cannot always push a relation or set of relations down in the model (which would make any $n$-ary model equivalent to some multirelational $0$-ary model) because it would also be necessary to push the valuation function down, and it seems that this can only be done in cases where the $n$-ary function could already have been expressed solely in terms of the $(n-1)$-ary functions (as in the intuitionistic case). On the other hand, it is likely that we can always push a relation or set of relations \textit{upwards}, precisely because if we start with $n$-ary models we can define $(n+1)$-ary models whose functions could have been expressed solely in terms of the $n$-ary models and use them just to put the relations at different levels. Since finitely relational models already might be more expressive than traditional models, this leads to the following conjecture:


\begin{conjecture} \label{ref:unirelationalallmodal}
     All normal modal logics are complete w.r.t. some class of unirelational $n$-ary models for $n \geq 0$.
\end{conjecture}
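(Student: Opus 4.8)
The plan is to derive the conjecture from the classical completeness of normal modal logics with respect to general (descriptive) frames, a coding of general frames as $1$-ary Kripke models in the sense of the final section, and Conjecture~\ref{conj:pushup} for the polymodal bookkeeping.

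First I would isolate the only external input: for every normal modal logic $L$ with modalities $\Box_1,\dots,\Box_k$, the canonical general frame $\mathfrak{F}_L=\langle W_L,R_1,\dots,R_k,P_L\rangle$ has $L$ as exactly its logic, where $W_L$ is the set of maximal $L$-consistent sets, $R_i$ is the canonical relation for $\Box_i$, and $P_L$ is the modal subalgebra of $\mathcal{P}(W_L)$ generated by the truth-sets of formulas. Soundness holds because the dual algebra of $\mathfrak{F}_L$ is the Lindenbaum--Tarski algebra of $L$, which validates precisely the theorems of $L$; completeness holds because the canonical valuation is itself admissible and refutes every non-theorem at some world. This is standard and available from the modal-logic literature already cited.

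Second, I would turn any general frame $\mathfrak{F}=\langle W,R_1,\dots,R_k,P\rangle$ validating $L$ into a higher-order Kripke model. Take the underlying set $W^{1}=\{\langle w,V\rangle : w\in W,\ V\in P\}$, let the $0$-ary model attached to $\langle w,V\rangle$ be the singleton classical model whose unique world $\langle w,V\rangle$ makes an atom $p$ true exactly when $w\in V(p)$, and set $\langle w,V\rangle\succ_i\langle w',V'\rangle$ iff $V=V'$ and $w R_i w'$. Reading the modal clauses of Definition~\ref{def:validityinfullsimilaritymodel} at the $1$-ary level, one proves by induction on formulas that $\langle w,V\rangle$ forces $A$ in this model iff $w\in\|A\|_{V}$ in $\mathfrak{F}$; the modal step goes through because $\succ_i$ never crosses between distinct valuation-copies, so $\vDash_{\langle w,V\rangle}\Box_iA$ unfolds to $\forall w'\,(w R_i w'\Rightarrow w'\in\|A\|_V)$, and $\|\Box_iA\|_V\in P$ since $P$ is a modal subalgebra. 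Hence the logic of $\mathfrak{F}$ under admissible valuations equals the logic of this single $1$-ary model, so the class of $1$-ary models obtained from the canonical general frames $\mathfrak{F}_L$ is complete for $L$.

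Finally, in the unimodal case ($k=1$) the set of relations is the singleton $\{\succ\}$ and the $0$-ary pieces are strict, so these are already unirelational $1$-ary models and we are done; for $k>1$ the construction produces a finitely relational $1$-ary model, and invoking Conjecture~\ref{conj:pushup} converts it into a unirelational $m$-ary model ($m\le k+1$) with the same theorems. I expect the main obstacle to be precisely this last, polymodal step: distributing $k$ relations over successive layers while keeping the evaluation of formulas that mix different modalities faithful is exactly the unproved Conjecture~\ref{conj:pushup}, not something available for free. A subsidiary point needing care is fidelity of the coding when the propositional base is not classical --- if $L$ extends a Kripke-complete non-classical base, the $0$-ary pieces must be models of that base and one must recheck that general-frame completeness still holds there --- together with the routine but necessary verification that the recursively defined forcing of higher-order models agrees with the general-frame truth definition at $\bot$ and under nesting of modalities.
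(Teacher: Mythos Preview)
The paper does not prove this statement: it is explicitly labelled a conjecture and is accompanied only by heuristic motivation (the paragraph about ``pushing relations upwards'' after Conjecture~\ref{conj:pushup}). So there is no paper-proof to compare against; the question is whether your proposal actually establishes the conjecture.

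There are two genuine gaps. First, for the polymodal case you explicitly invoke Conjecture~\ref{conj:pushup}, which is itself unproved in the paper. A proof of one conjecture that rests on another conjecture is not a proof; you acknowledge this yourself at the end, and it is indeed the fatal point for $k>1$.

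Second, and more seriously, your unimodal construction does not use the higher-order structure in any essential way. Your $1$-ary model has singleton $0$-ary pieces and a forcing relation computed entirely by the clauses of Definition~\ref{def:validityinfullsimilaritymodel} from those $0$-ary valuations. This is exactly the situation the paper warns about in the paragraph preceding Conjecture~\ref{conj:atleastonefinitary}: when the $n$-ary valuation is fully reducible to the $(n{-}1)$-ary ones, the model collapses to a multirelational $0$-ary model. Concretely, your $1$-ary model is isomorphic, as an ordinary Kripke model, to the disjoint union $\bigsqcup_{V}\langle W_L,R,V\rangle$ over all admissible valuations $V$ of the canonical general frame, with the obvious valuation. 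That is already a single $0$-ary unirelational Kripke model whose validities are exactly $L$. So either (i) the conjecture is read as completeness with respect to a class of \emph{models} (valuations fixed), in which case it is trivially true already at $n=0$ by this disjoint-union observation and your detour through $1$-ary models adds nothing; or (ii) the conjecture is read in the frame sense the paper evidently intends (cf.\ the phrase ``Kripke-incomplete logics (thus not complete for $0$-ary models)''), in which case your construction fails, since the admissibility restriction on valuations lives in the choice of $0$-ary pieces, not in any $1$-ary frame, and once you quantify over all valuations at the top level you are back to ordinary Kripke frame semantics for $\langle W_L,R\rangle$ --- which need not axiomatise $L$. In short, your coding reproduces general-frame semantics, but general-frame semantics is precisely what the paper regards as \emph{not} being $0$-ary-Kripke, and your $1$-ary wrapper does not change that.
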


Which would be a refinement of Conjecture \ref{conj:allnormalmodal}, as it would show that even it the version closest to Kripke's original formulation the new models would be sufficient for providing semantics for normal modal logics.

Higher-order Kripke models can be further generalized to infinitely layered structures in which we pick models for all $n \in \mathbb{N}$ instead of stopping at some $n$:

\begin{definition}
     A \textit{infinitary higher-order Kripke model} is a infinite sequence $\langle \mathcal{K}^{0}, \mathcal{K}^{1}, ... \rangle$ such that $\mathcal{K}^{0}$ is a set of $0$-ary Kripke models and, for all $n > 0$, $\mathcal{K}^{n}$ is a set of $n$-ary Kripke models such that $K \in \mathcal{K}^{n}$ implies $W_{K} \subseteq \mathcal{K}^{(n-1)}$.
\end{definition}

This concept of model could be paired, for instance, with definitions stating that a formula $A$ is valid in a model $K^{\infty}$ iff $\langle K, A \rangle$ receives the designated value for every $K$ which is an element of a set in occuring in $K^{\infty}$, or even if all models in the sequence satisfy some other condition. We can still expect Conjectures \ref{conj:atleastone}, \ref{conj:allnormalmodal}, \ref{conj:atleastonefinitary} and \ref{ref:unirelationalallmodal} to hold for the same reasons they are expected to hold for non-infinitary models, and the following infinitary variation of Conjecture \ref{conj:pushup} can also be stated:

\begin{conjecture}\label{conj:unirelationalinfinitary}
     If a logic is complete w.r.t. a class of infinitary models in which only finitely relational models occur then it is complete w.r.t. a class of infinitary models in which only unirelational models occur.
\end{conjecture}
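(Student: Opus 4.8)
\medskip\noindent The plan is to mirror, in the infinitary setting, the ``push a relation upwards'' construction that already underlies the equivalence between birelational models and partial (resp.\ homogeneous) models. Assume the logic is complete with respect to a class $\mathcal{C}$ of infinitary models in which only finitely relational models occur; it suffices to produce, for each $M = \langle \mathcal{K}^{0}, \mathcal{K}^{1}, \dots \rangle \in \mathcal{C}$, an infinitary model $M^{u}$ in which only unirelational models occur and which validates exactly the same formulas, and then take $\{\, M^{u} \mid M \in \mathcal{C} \,\}$ as the witnessing class. As a preliminary normalisation I would arrange that within each layer $\mathcal{K}^{n}$ all models carry the same finite number $k_{n}$ of relations, padding any model with fewer by duplicating one of its relations; as in the finite case this is harmless, since a duplicated relation merely duplicates a modal operator already present, which one checks leaves validity unchanged (alternatively one avoids padding by using a refined gadget that unbundles a variable-width layer directly).

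The core of the argument is a single \emph{unbundling} step, which is essentially the content of Definitions~\ref{def:interpretationfunction}, \ref{def:equivalentbirelationalmodel} and~\ref{def:equivalenstronghomogeneousmodel} together with Lemmas~\ref{lemma:modalexistence} and~\ref{lemma:modalexistencehomogeneous}: a structure with two relations on its carrier is replaced by a single relation holding \emph{between} gadgets of the form ``an at least partial copy $PC$ of a full canonical structure (or a truncation $PC^{j}$ of such), together with an interpretation function of a point of the original structure'', so that the second relation reappears as an accessibility relation one level up rather than as a relation between worlds. I would apply this repeatedly, organised by a fixed monotone surjection $r : \mathbb{N} \to \mathbb{N}$ whose fibre over $n$ has exactly $k_{n}$ elements: the new layer at level $\ell$ is charged with exactly one relation of old layer $r(\ell)$, the lowest such $\ell$ keeping a designated relation of that layer on the carrier and the remaining $k_{n}-1$ levels realizing the other relations of the layer as $\succ$-relations, each one level above the previous. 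The models populating $\mathcal{L}^{\ell}$ and the single relation $\succ_{\ell}$ on them are then built just as in Definitions~\ref{def:equivalentbirelationalmodel} and~\ref{def:equivalenstronghomogeneousmodel}, using the interpretation functions to witness the relevant old relation. Monotonicity of $r$ makes the nesting constraint ``$W_{K} \subseteq \mathcal{L}^{(\ell-1)}$ for $K \in \mathcal{L}^{\ell}$'' come out right, and the ``dedicated accessible worlds'' clause built into canonical quasi-structures supplies disjoint fresh frame material at every depth, so the whole tower is assembled without circularity.

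Finally one verifies that $M^{u}$ and $M$ validate the same formulas by a simultaneous induction on the complexity of the formula and on the level, which is the nested version of Lemmas~\ref{lemma:proofofequalityforworlds} and~\ref{lemma:proofofequalityforworldshomogeneous} and Corollaries~\ref{cor:secondcorollary} and~\ref{cor:secondcorollaryhomogeneous}: at each modal step one uses the interpretation-function lemmas to pass between a world of an old-level model and the gadget that simulates it in the new layer. The main obstacle is the \emph{simultaneity}: in the finite case, corresponding to Conjecture~\ref{conj:pushup} and realised concretely by our $IK$ and $MK$ mappings, the unbundling is iterated only finitely often and each step is self-contained, whereas here insertions happen at infinitely many depths at once, so the reindexing $r$, the canonical quasi-structures at every depth, and all the interpretation functions must be chosen coherently before any single layer can even be described, and the well-foundedness of the resulting tower (each level built from the one below) must be argued uniformly rather than step by step. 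A secondary difficulty is the non-uniformity of relation counts within a layer, handled by the padding above. As in the finite case, if $M$ is rooted the construction can be kept rooted, since each $Q_{0}$ and each truncation $PC^{j}$ may be taken to be a singleton.
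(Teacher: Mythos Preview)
The statement you are attempting to prove is presented in the paper as a \emph{conjecture}, not a theorem: the paper gives no proof, only a one-paragraph heuristic immediately following the statement. That heuristic is precisely the ``push the relations upward and reindex'' idea you formalise with your monotone surjection $r:\mathbb{N}\to\mathbb{N}$: the paper says one would put the $m$ relations of old level $0$ at new levels $1,\dots,m$, then the relations of old level $1$ at new levels $m+1,\dots,m+j$, and so on, remarking that ``it seems that this could be repeated for all $n$''. Your proposal is therefore not to be compared against a proof in the paper but rather understood as an attempt to turn the paper's heuristic into an actual argument.

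Viewed in that light, your sketch is faithful to the paper's intended strategy and goes considerably further in technical detail: you correctly identify the unbundling step with the $IK$/$MK$ constructions (Definitions~\ref{def:equivalentbirelationalmodel}, \ref{def:equivalenstronghomogeneousmodel} and Lemmas~\ref{lemma:modalexistence}, \ref{lemma:modalexistencehomogeneous}), you name the coherence/simultaneity problem that the paper's ``it seems that this could be repeated'' glosses over, and you flag the non-uniformity of relation counts within a layer, which the paper does not mention at all. What remains genuinely open---and is the reason the paper leaves this as a conjecture---is whether the interpretation-function machinery, which in the paper is worked out only for the specific intuitionistic frame conditions $F_{1}$--$F_{3}$ and for a single pair of relations, can be made to go through uniformly for arbitrary finitely relational models at every level and for arbitrary valuation functions $f$; your proposal asserts this by analogy but does not supply the general construction, so it should be read as a plausible proof \emph{plan} rather than a proof.
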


This is expected to hold because if there are $m$ relations in a model $\mathcal{K}^{0}$ that need to be pushed up we could push one to the level $\mathcal{K}^{1}$, one to the level $\mathcal{K}^{2}$ and so on until the last one is put in $\mathcal{K}^{m}$. Then, we could put one of the relations originally occurring at the level $\mathcal{K}^{1}$ in the level $\mathcal{K}^{(m + 1)}$, one in $\mathcal{K}^{(m + 2)}$ and so on, eventually putting the last one in the level $\mathcal{K}^{(m + j)}$ where $j$ is the number of relations originally at the level $\mathcal{K}^{1}$. It seems that this could be repeated for all $n$, yielding a model with the desired property.

It would not be entirely unreasonable to expect that infinitary models, especially ones containing infinitary sets of relations, could be more expressive than neighborhood semantics and general Kripke models, which are not capable of providing semantics for all propositional modal logics \cite{GersonInadequacy51cfe23e-c1e6-3725-ade2-e8600362b7a3}\cite{Dahn1976-DAHNSA}. We thus finish the section by stating the boldest of our conjectures:

\begin{conjecture}
    Every propositional modal logic is complete with respect to some class of infinitary higher-order Kripke models.
\end{conjecture}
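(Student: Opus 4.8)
\medskip

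\noindent\textbf{Towards a proof.} The plan is to route through algebraic semantics. Let $\mathcal{L}$ be an arbitrary propositional modal logic, presented so that the Lindenbaum--Tarski construction applies. Form the free $\mathcal{L}$-algebra $\mathcal{A}_{\mathcal{L}}$ on countably many generators: its elements are formulas modulo $\mathcal{L}$-provable equivalence, its operations are induced by the connectives and modalities, and $A$ is a theorem of $\mathcal{L}$ iff $[A] = \top$ in $\mathcal{A}_{\mathcal{L}}$. By the usual Lindenbaum argument $\mathcal{L}$ is sound and complete with respect to $\mathcal{A}_{\mathcal{L}}$ alone (equivalently, with respect to the class of all $\mathcal{L}$-algebras). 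It therefore suffices to produce, for this fixed algebra, an infinitary higher-order Kripke model $K^{\infty}$ in the intended class together with a correspondence showing that $\langle K, A\rangle$ receives the designated value for every $K$ occurring in $K^{\infty}$ exactly when $[A] = \top$ in $\mathcal{A}_{\mathcal{L}}$.

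I would build the tower layer by layer, iterating the maneuver that the mapping constructions above carry out once, for $IK$ and $MK$. At level $0$, take the $0$-ary models to be the classical single-world valuations indexed by the ultrafilters of the Boolean reduct of $\mathcal{A}_{\mathcal{L}}$. At level $n+1$, the worlds are the $n$-ary models chosen at level $n$, equipped with a set $\mathcal{R}^{n+1}$ of relations and a valuation function $f^{n+1}$ restricted to \emph{admissible} values, namely to the image of $\mathcal{A}_{\mathcal{L}}$ under Stone representation --- the device that makes general frames work. The first relation of $\mathcal{R}^{n+1}$ is a J\'onsson--Tarski-style accessibility relation for the additive part of the modal operator; the remaining relations, read off the admissible sets, correct the non-normal behaviour that no single relation (and, for pathological logics, no single neighbourhood function) captures. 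The central technical claim --- the infinitary analogue of Conjecture \ref{conj:unirelationalinfinitary} and of the ``pushing relations upward'' discussion --- is that each additional layer strictly and cumulatively enlarges the class of representable operators, so that the union over all finite levels exhausts the operators arising on $\mathcal{L}$-algebras.

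With the tower in place, soundness and completeness would follow from correspondence lemmas analogous to Lemmas \ref{lemma:correspondenceproofmappingsimilarityintobirelationa} and \ref{lemma:proofofequalityforworlds}: soundness, by checking that the recursive semantic clauses imposed on the $f^{n}$ are precisely those validated by $\mathcal{A}_{\mathcal{L}}$, which holds by construction of the relations and admissible sets; completeness, by transporting the ultrafilter refutation of any $A$ with $[A] \neq \top$ upward through the levels, so that $A$ fails at the top-level root of $K^{\infty}$. One also has to fix the definition of validity in an infinitary model --- ``designated at every occurring $K$'' versus ``every layer meets some closure condition'' --- so that both directions go through without trivialising the semantics.

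The main obstacle, and the reason this is only a conjecture, is the inductive step claiming that one more relational layer strictly and cumulatively extends the stock of representable modal operators and that the cumulative union over all finite levels really is exhaustive. For \emph{normal} modal logics this collapses to the known completeness of general frames, so a single extra layer suffices; the genuine difficulty lies with the pathological non-normal logics for which even neighbourhood semantics fails (cf.\ \cite{GersonInadequacy51cfe23e-c1e6-3725-ade2-e8600362b7a3}, \cite{Dahn1976-DAHNSA}), where one must show that the failure of neighbourhood completeness is exactly a failure to ``see far enough up the tower'', and that the admissible-set restriction together with infinitely many layers restores the missing expressive power. Whether the requisite stratification of operators by ``non-normality rank'' actually exists, and is of finite rather than merely countable type, is precisely where a full proof would have to do real work.
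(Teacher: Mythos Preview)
The paper does not prove this statement: it is explicitly labelled a \emph{conjecture} (indeed ``the boldest of our conjectures'') and is left entirely open. The only accompanying text is the one-line remark that, if true, ``the new generalization fixes the main model-theoretic issue observed in standard generalizations of Kripke semantics.'' There is therefore no paper proof to compare your proposal against.

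Your write-up is consistent with this: you frame it as ``Towards a proof'' and you yourself flag the unresolved inductive step as ``the reason this is only a conjecture.'' That is the honest status. As a heuristic plan the Lindenbaum--Tarski route is sensible, but note that the crucial step---showing that each added layer strictly enlarges the class of representable modal operators and that the $\omega$-union is exhaustive---is doing essentially all the work, and nothing in the paper (which only treats the single passage from $0$-ary to $1$-ary models for $IK$/$MK$) gives leverage on it. In particular, the paper's mappings move relations between levels while keeping the valuation functions fully reducible to the level-$0$ data, and the paper itself observes that such reducible constructions ``cannot reasonably be expected to be more expressive than multirelational Kripke models.'' Your plan needs the opposite: genuinely new expressive power at higher levels via the admissible-set restriction. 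That may well be achievable, but it is not something the paper's techniques supply, so what you have is a programme, not a proof.
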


Which would mean that the new generalization fixes the main model-theoretic issue observed in standard generalizations of Kripke semantics.

\section{Conclusion} The mathematical and conceptual elegance of Kripke semantics led many to believe it would be the definitive semantics for normal modal logics, which turned out not to be the case due to the existence of many Kripke-incomplete logics \cite{KTBeab764a2-304d-3126-97ce-0631fd61611b}. In this paper we have shown that an equally elegant modal semantics (which is even closer to Kripke's original idea than neighborhood semantics and general Kripke semantics) can be provided for all non-classical modal logics whose propositional fragment is sound and complete with respect to some class of Kripke models. This is done by presenting such semantics for the intuitionistic modal logics $IK$ and $MK$, which are interesting in their own right. Higher-order Kripke models (both in their finitary and infinitary versions) are then obtained as a natural generalization of the semantics for $IK$ and $MK$, and the structure of some mappings defined between the new and the traditional (birelational) models for those logics hint at some general properties that can be reasonably expected to hold for all higher-order Kripke models.

Since this paper presents both results for modal logics in general and intuitionistic modal logics in particular, it will be interesting to further develop our investigations in both directions in future works. We are currently working on showing that, if semantics clauses with built-in monotonicity for both $\Box$ and $\Diamond$ are used and all propositional models are required to be intuitionistic, then abstract models (cf. Definition \ref{def:generalmodel}) are sound and complete with respect to the intuitionistic modal logic $WK$ \cite{Wijesekera1990-WIJCML}, as well as that a semantics for the logic
$CK$ \cite{bellin2001extended}\cite{degroot2025semanticalanalysisintuitionisticmodal} can be obtained through some additional adaptations. This would connect nicely with other studies focused on $CK$ and $WK$, such as the one in \cite{degroot2025semanticalanalysisintuitionisticmodal}. As for the other direction, a natural starting point would be to further investigate our conjectures for higher-order Kripke models, especially the ones about the expressive power of the framework. If the framework is found to be more expressive than neighbourhood or algebraic semantics (which is more likely in the case of models with infinitely many layers), this would be a major breakthrough in the study of semantics for non-normal modal logics.

\appendix

\section{Formal proofs}

The proofs are itemized, presented by section, and organised in order of appearance.

\medskip

\subsection{Proofs of Section 3}

\begin{itemize}

\item \textbf{Proposition \ref{prop:differenceIKMKfirst}}

\begin{proof}

For the first statement, let $S$ be a strong model with objects $W$. Pick any $w \in W$ and any $w \leq w'$ with $\Vdash^{S}_{w'} \neg \Box \bot$. Since $\Vdash^{S}_{w'} \neg \Box \bot$ we conclude that $w'\leq w''$ implies $\nVdash^{S}_{w''}  \Box \bot$, and since $w' \leq w'$ we have $\nVdash^{S}_{w'}  \Box \bot$, hence there must be a $j$ with $w' R j$ and $\nVdash^{S}_{j} \bot$. Since $\top$ is a theorem of $MK$ we have $\Vdash^{*} \top$ and thus $\Vdash^{S} \top$ and $\Vdash^{S}_{j} \top$, so since $w' R j$ we conclude $\Vdash^{S}_{w'} \Diamond\top$. Since we picked arbitrary $w'$ with $w \leq w'$ we conclude $\Vdash^{S}_{w} (\Box \bot ) \to (\Diamond \top)$, hence from arbitrariness of $w$ and $S$ we conclude $\Vdash^* (\Box \bot ) \to (\Diamond \top)$.


For the second statement, let $S$ be a strong model with objects $W$. Pick any $w \in W$ and any $w \leq w'$ with $\Vdash^{S}_{w'} (\Box (A \lor \neg A) \land \neg \Box A)$. Then both $\Vdash^{S}_{w'} \Box (A \lor \neg A)$ and $\Vdash^{S}_{w'} \neg \Box A$ hold. Since $\Vdash^{S}_{w'} \neg \Box A$ we conclude $\nVdash^{S}_{w'} \Box A$, so there must be a $j$ such that $w' R j$ and $\nVdash^{S}_{j} A$. Since $\Vdash^{S}_{w'} \Box (A \lor \neg A)$ and $w'Rj$ we have $\Vdash^{S}_{j} A \lor \neg A$, thus either $\Vdash^{S}_{j} A$ or $\Vdash^{S}_{j} \neg A$, hence since $\nVdash^{S}_{j} A$ we conclude $\Vdash^{S}_{j} \neg A$. Since $w'Rj$ and $\Vdash^{S}_{j} \neg A$ we conclude $\Vdash^{S}_{w'} \Diamond \neg A$, hence by arbitrariness of $w'$ we conclude $\Vdash^{S}_{w} (\Box( A \lor \neg A) \land \neg \Box A) \to (\Diamond \neg A)$ and by arbitrariness of $w$ and $S$ we conclude $\Vdash^{*} (\Box( A \lor \neg A) \land \neg \Box A) \to (\Diamond \neg A)$.

\end{proof}

\item \textbf{Proposition \ref{prop:IKMKdifferencesecondpart}}

\begin{proof} For the first statement, pick a birelational model $B$ with three worlds $w$, $w'$ and $w''$ such that $w \leq w'$ and $w' R w''$ but such that no other relation holds besides those induced by the properties of $\leq$. It is straightforward to check that this frame satisfies conditions $F_{1}$ and $F_{2}$ (although it does not satisfy $F_{3})$ and that $w \leq w''$ does not hold. We have $w \leq w'$, $w' Rw''$ and $\nVdash^{B}_{w''} \bot$, so we conclude $\nVdash^{B}_{w} \Box \bot$. Likewise, we have $w' \leq w'$, $w' Rw''$ and $\nVdash^{B}_{w''} \bot$, so we conclude $\nVdash^{B}_{w'} \Box \bot$. Now pick any $j \in W$ with $w \leq j$. Then either $j = w''$ or $j = w$, whence $\nVdash^{B}_{j} \Box \bot$, so $\Vdash^{B}_{w} \neg \Box \bot$. Since for no $j \in W$ we have $wRj$ we conclude $\nVdash^{B}_{w} \Diamond \top$, hence $\nVdash^{B}_{w} (\neg \Box \bot) \to (\Diamond \top)$, thus $\nVdash^{B} (\neg \Box \bot) \to (\Diamond \top)$ and $\nVdash (\neg \Box \bot) \to (\Diamond \top)$.  
    

    For the second statement, let $A$ be an atom $p$. Pick a birelational model $B$ with three worlds $w$, $w'$ and $w''$ and $p \notin v(w'')$ with $w \leq w'$, $w' R w''$ but such that no other relation holds besides those induced by the properties of $\leq$. It is straightforward to check that this frame satisfies conditions $F_{1}$ and $F_{2}$ (although once again it does not satisfy $F_{3})$ and that $w \leq w''$ does not hold. Since $p \notin v(w'')$ we conclude $\nVdash^{B}_{w''} p$. From reflexivity of $\leq$ we have $w'' \leq w''$, and since transitivity does not induce any other relation for $w''$ we conclude $\Vdash^{B}_{w''} \neg p$, so also $\Vdash^{B}_{w''} p \lor \neg p$. Now pick any $w'''$ with $w \leq w'''$.  Then either $w''' = w$ or $w''' = w'$. By construction for no $j \in W$ we have $wRj$ and if $w'Rj$ then $j = w''$, hence since $\Vdash^{B}_{w''} p \lor \neg p$ we conclude $\Vdash^{B}_{w} \Box (p \lor \neg p)$. Since  $w \leq w'$, $w 'Rw''$ and $\nVdash^{B}_{w''} p$ we conclude $\nVdash^{B}_{w} \Box p$, and since $w' \leq w'$, $w 'Rw''$ and $\nVdash^{B}_{w''} p$ we also conclude $\nVdash^{B}_{w'} \Box p$, thence $\Vdash^{B}_{w} \neg \Box p$. Since $\Vdash^{B}_{w} \Box (p \lor \neg p)$ and $\Vdash^{B}_{w} \neg \Box p$ we conclude $\Vdash^{B}_{w} (\Box( p \lor \neg p) \land \neg \Box p)$, but since for no $j \in W$ we have $wRj$ we also conclude $\nVdash^{B}_{w} \Diamond \neg p$, hence $\nVdash^{B}_{w} (\Box( p \lor \neg p) \land \neg \Box p) \to (\Diamond \neg p)$ and thus $\nVdash^{B} (\Box( p \lor \neg p) \land \neg \Box p) \to (\Diamond \neg p)$, whence finally $\nVdash (\Box( p \lor \neg p) \land \neg \Box p) \to (\Diamond \neg p)$.

\end{proof}

\end{itemize}

\subsection{Proofs of Section 4}

\begin{itemize}

\item \textbf{Lemma \ref{lemma:monotonicity}}

\begin{proof}
The proof is just a straightforward adaptation of the proof for intuitionistic logic in the base case, the cases for non-modal connectives and the case with $\Gamma \neq \varnothing$, so we deal only with the modal cases and $\Gamma = \varnothing$.

\begin{enumerate}

        \item ($\Gamma = \varnothing$, $A = \Box B$, partial models): Assume $\vDash^{M, K}_{w} \Box B$. Then, for all $w \leq_{K} w'$, if $K \succ K'$and $w' \in W_{K'}$ then $\vDash^{M, K'}_{w'} B$. Let $w'$ be an arbitrary object with $w \leq_{K} w'$ and $w''$ an arbitrary object with $w' \leq_{K} w''$. Then by transitivity of $\leq_{K}$ we conclude $w \leq_{K} w''$. Now let $K''$ be an arbitrary model with $w'' \in W_{K''}$. Then since $w \leq_{K} w''$, $K \succ K''$ and $w'' \in W_{K''}$ we conclude $\vDash^{M, K''}_{w''} B$. But then for arbitrary $w' \leq_{K} w''$ we have that if $K \succ K''$ and $w'' \in W_{K''}$ then $\vDash^{M, K'}_{w''} B$, so we conclude $\vDash^{M, K}_{w'} \Box B$.


    \item  $A = \Box B$, homogeneous models): Assume $\vDash^{H, K}_{w} \Box B$. Then $K \succ K'$ implies $\vDash^{H, K'}_{w} B$. Now pick any $w'$ such that $w \leq w'$ and any $K'$ such that $K \succ K'$ (if any). Since $\vDash^{H, K'}_{w} B$ and $w \leq w'$ the induction hypothesis yields $\vDash^{H, K'}_{w'} B$. But this means that, for arbitrary $K'$, $K \succ K'$ implies $\vDash^{H, K'}_{w'} B$, so we conclude $\vDash^{H, K}_{w'} \Box B$.

    \item ($A = \Diamond B$, partial models): Assume $\vDash^{M, K}_{w} \Diamond B$. Then there is a $K \succ K'$ with $w \in W_{K'}$ and $\vDash^{M, K'}_{w} B$. Now pick any $w'$ such that $w \leq_{K} w'$ and the same $K'$ such that $\vDash^{M, K'}_{w} B$. Since $w \leq_{K} w'$ and $w \in W_{K'}$ by Lemma \ref{lemma:structureofpartialmodels} we have $w \leq_{K'} w'$. Since $\vDash^{M, K'}_{w} B$ and $w \leq_{K'} w'$ the induction hypothesis yields $\vDash^{M, K'}_{w'} B$, and since we already have that $K'$ is a model with $K \succ K'$ and $w' \in W_{K'}$ we conclude $\vDash^{M, K}_{w'} \Diamond B$.


    \item  ($A = \Diamond B$, homogeneous models):  Assume $\vDash^{H, K}_{w} \Diamond B$. Then there is a $K \succ K'$ with $\vDash^{H, K'}_{w} B$. Now pick any $w'$ such that $w \leq w'$ and the $K'$ such that $\vDash^{H, K'}_{w} B$. Since $\vDash^{H, K'}_{w} B$ and $w \leq w'$ the induction hypothesis yields $\vDash^{M, K'}_{w'} B$, and since we already have that $K'$ is a model with $K \succ K'$ we conclude $\vDash^{H, M}_{w'} \Diamond B$.

\end{enumerate}

\end{proof}

\end{itemize}

\subsection{Proofs of Section 5}

\begin{itemize}

\item \textbf{Lemma \ref{lemma:existencebirelationamodel}}

\begin{proof}
    We have to show that $M^{*}$ satisfies all properties of birelational models:

    \begin{enumerate}
        \item $\leq^{*}$ is reflexive and transitive;

        \item $\langle w, K \rangle \leq^{*} \langle w', K' \rangle$ implies $v^{*}(\langle w, K \rangle) \subseteq v^{*}(\langle w', K' \rangle)$;

        \item If $\langle w, K \rangle \leq^{*} \langle w', K' \rangle$ and $\langle w, K \rangle R^{*} \langle w'', K'' \rangle$ then there is a unique $\langle w''', K''' \rangle$ with $\langle w'', K'' \rangle \leq^{*} \langle w''', K''' \rangle$ and $\langle w', K' \rangle R^{*} \langle w''', K''' \rangle$

        \item If $\langle w, K \rangle R^{*} \langle w', K' \rangle$ and $\langle w', K' \rangle \leq^{*} \langle w'', K'' \rangle$ then there is a unique $\langle w''', K''' \rangle$ with $\langle w, K \rangle \leq^{*} \langle w''', K''' \rangle$ and $\langle w''', K''' \rangle R^{*} \langle w'', K'' \rangle$

      %

    \end{enumerate}

\begin{enumerate}
    \item[(1)] Since for every $w$ and $K$ of $M$ we have $w \leq_{K} w$ clearly $\langle w, K \rangle \leq^{*} \langle w, K \rangle$ for every $\langle w, K \rangle$ by the definition of $\leq^{*}$, so $\leq^{*}$ is reflexive. For transitivity, assume $\langle w, K \rangle \leq^{*} \langle w', K' \rangle$ and $\langle w', K' \rangle \leq^{*} \langle w'', K'' \rangle$. From the definition of $\leq^{*}$ it follows that $K = K' = K''$, $w \leq_{K} w'$ and $w' \leq_{K} w''$. But $\leq_{K}$ is transitive, so we conclude $w \leq_{K} w''$, hence by the definition of $\leq^{*}$ also $\langle w, K \rangle \leq^{*} \langle w'', K'' \rangle$, so $\leq^{*}$ is transitive.

    \item[(2)] Assume $\langle w, K \rangle \leq^{*} \langle w', K' \rangle$. It follows from the definition of $\leq^{*}$ that $K = K'$ and $w \leq_{K} w'$, so from Definition \ref{def:birelational} we get $v_{K}(w) \subseteq v_{K}(w')$. From the definition of $\textit{v}$ we have $v^{*}(\langle w, K \rangle) = v_{K}(w)$ and $v^{*}(\langle w', K' \rangle) = v_{K'}(w')$ and since $K = K'$ also $v^{*}(\langle w', K' \rangle) = v_{K}(w')$, so since $ v_{K}(w) \subseteq  v_{K}(w')$ we have $v^{*}(\langle w, K \rangle) \subseteq v^{*}(\langle w', K' \rangle)$.

    \item[(3)] Assume $\langle w, K \rangle \leq^{*} \langle w', K' \rangle$ and $\langle w, K \rangle R^{*} \langle w'', K'' \rangle$. By definition of $\leq^{*}$ we get $K = K'$ and $w \leq_{K} w'$, and from the definition of $R^{*}$ we get $K \succ K''$ and $w = w''$. Since $K = K'$ and $K \succ K''$ we also get $K' \succ K''$. Since $\langle w, K \rangle R^{*} \langle w'', K'' \rangle$ we have $\langle w'', K'' \rangle \in F^{*}$, so since $w = w''$ we obtain $w \in W_{K''}$ and $K'' \in F$ from the definition of $F^{*}$. Since $w \in W_{K''}$, $w \leq_{K}w'$ and both $K$ and $K''$ are in the same partial model, by Lemma \ref{lemma:structureofpartialmodels} we get $w' \in W_{K''}$ and $w \leq_{K''} w'$. Since $w' \in W_{K''}$ and $K'' \in F$ we conclude $\langle w', K'' \rangle \in F^*$. Now consider the pair $\langle w', K'' \rangle$. Since $w = w''$ and $w \leq_{K''} w'$ we have $w'' \leq_{K''} w'$, so clearly $\langle w'', K'' \rangle \leq^{*} \langle w', K'' \rangle$ by the definition of $\leq^{*}$. Since $K' \succ K''$ we also have $\langle w', K' \rangle R^{*} \langle w', K'' \rangle$ by the definition of $R^{*}$. We conclude that, if $\langle w, K \rangle \leq^{*} \langle w', K' \rangle$ and $\langle w, K \rangle R^{*} \langle w'', K'' \rangle$, then for $\langle w', K'' \rangle$ in particular we have $\langle w'', K'' \rangle \leq^{*} \langle w', K'' \rangle$ and $\langle w', K' \rangle R^{*} \langle w', K'' \rangle$, so the desired witness exists.

    For uniqueness, let $\langle w''', K '''\rangle$ be a pair with  $\langle w'', K'' \rangle \leq^{*} \langle w''', K''' \rangle$ and $\langle w', K' \rangle R^{*} \langle w''', K''' \rangle$. Since  $\langle w'', K'' \rangle \leq^{*} \langle w''', K''' \rangle$ by the definition of $\leq^{*}$ we have $K'' = K'''$, and since $\langle w', K' \rangle R^{*} \langle w''', K''' \rangle$ by the definition of $R^{*}$ we have $w' = w'''$, so $\langle w''', K''' \rangle = \langle w', K'' \rangle$.

    \item[(4)] Assume $\langle w, K \rangle R^{*} \langle w', K' \rangle$ and $\langle w', K' \rangle \leq^{*} \langle w'', K'' \rangle$. Then from the definition of $R^{*}$ we get $K \succ K'$ and $w = w'$, and from the definition of $\leq^{*}$ we get $K' = K''$ and $w' \leq_{K'} w''$. Since $w' \leq_{K'} w''$ and $w = w'$ we get $w \leq_{K'} w''$. Since $K' = K''$ and $K \succ K'$ we also get $K \succ K''$. Since $\langle w, K \rangle R^{*} \langle w', K' \rangle$ then $\langle w, K \rangle \in F^*$, hence by the definition of $F^*$ we have $w \in W_{K}$ and $K \in F$. Since $w \in W_{K}$, $w \leq_{K'} w''$ and both $K$ and $K'$ are in the same partial model, by Lemma \ref{lemma:structureofpartialmodels} we conclude $w'' \in W_{K}$ and $w \leq_{K} w''$. Since $w'' \in W_{K}$ and $K \in F$ we conclude $\langle w'', K \rangle \in F^*$. Now consider the pair $\langle w'', K \rangle$. Since $w \leq_{K} w''$ by the definition of $\leq^{*}$ we have $\langle w, K \rangle \leq^{*} \langle w'', K \rangle$. Since $K \succ K''$ we also have $\langle w'', K \rangle R^{*} \langle w'', K'' \rangle$ by the definition of $R^{*}$.  We conclude that, if  $\langle w, K \rangle R^{*} \langle w', K' \rangle$ and $\langle w', K' \rangle \leq^{*} \langle w'', K'' \rangle$, then for the pair $\langle w'', K \rangle$ in particular we have $\langle w, K \rangle \leq^{*} \langle w'', K \rangle$ and $\langle w'', K \rangle R^{*} \langle w'', K'' \rangle$ so the desired witness exists.


    To show uniqueness, let $\langle w''', K''' \rangle$ be a pair with  $\langle w, K \rangle \leq^{*} \langle w''', K''' \rangle$ and $\langle w''', K''' \rangle R^{*} \langle w'', K'' \rangle$. Since  $\langle w, K \rangle \leq^{*} \langle w''', K''' \rangle$ by the definition of $\leq^{*}$ we have $K = K'''$, and since $\langle w''', K''' \rangle R^{*} \langle w'', K'' \rangle$ by the definition of $R^{*}$ we have $w''' = w''$, so  $\langle w''', K''' \rangle = \langle w'', K \rangle$

%


%


\end{enumerate}

\end{proof}

\item \textbf{Lemma \ref{lemma:correspondenceproofmappingsimilarityintobirelationa}}

   \begin{proof}
        We prove the result by induction on the complexity of $A$, understood as the number of logical symbols occurring on it.

        \begin{enumerate}
            \item $\Gamma = \{ \varnothing\}$:

        \begin{enumerate}
            \item $A = p $ for some atomic $p$. The result follows immediately from the definition of validity for atoms and the fact that $v(\langle w, K \rangle) = v_{K}(w)$ by construction.

            \item $A = B \land C$.  Assume $\vDash^{M,K}_{w} B \land C$. Then $\vDash^{M,K}_{w} B$ and $\vDash^{M,K}_{w} C$. Induction hypothesis: $\Vdash^{M^{*}}_{\langle w, K\rangle} B$ and $\Vdash^{M^{*}}_{\langle w, K\rangle} C$. Then $\Vdash^{M^{*}}_{\langle w, K\rangle} B \land C$. For the converse, assume $\Vdash^{M^{*}}_{\langle w, K\rangle} B \land C$. Then $\Vdash^{M^{*}}_{\langle w, K\rangle} B$ and $\Vdash^{M^{*}}_{\langle w, K\rangle} C$. Induction hypothesis: $\vDash^{M,K}_{w} B$ and $\vDash^{M,K}_{w} C$. Then $\vDash^{M,K}_{w} B \land C$.

            \item $A = B \lor C$.  Assume $\vDash^{M,K}_{w} B \lor C$. Then $\vDash^{M,K}_{w} B$ or $\vDash^{M,K}_{w} C$. Induction hypothesis: $\Vdash^{M^{*}}_{\langle w, K\rangle} B$ or $\Vdash^{M^{*}}_{\langle w, K\rangle} C$. Then $\Vdash^{M^{*}}_{\langle w, K\rangle} B \lor C$. For the converse, assume $\Vdash^{M^{*}}_{\langle w, K\rangle} B \lor C$. Then $\Vdash^{M^{*}}_{\langle w, K\rangle} B$ or $\Vdash^{M^{*}}_{\langle w, K\rangle} C$. Induction hypothesis: $\vDash^{M,K}_{w} B$ or$\vDash^{M,K}_{w} C$. Then $\vDash^{M,K}_{w} B \lor C$.

             \item $A = B \to C$.  Assume $\vDash^{M,K}_{w} B \to C$. Pick any $\langle w', K' \rangle$ with $\langle w, K \rangle \leq^{*} \langle w', K' \rangle$ and $\Vdash^{M^{*}}_{\langle w', K' \rangle} B$. By the definition of $\leq^{*}$ we have $K = K'$ and $w \leq_{K} w'$, so $\Vdash^{M^{*}}_{\langle w', K \rangle} B$.  Induction hypothesis: $\vDash^{M,K}_{w'} B$. Then since $\vDash^{M,K}_{w} B \to C$ and $w \leq_{K} w'$ we conclude $\vDash^{M,K}_{w'} C$, so another use of the induction hypothesis yields $\Vdash^{M^{*}}_{\langle w', K \rangle} C$, and since $K = K'$ also $\Vdash^{M^{*}}_{\langle w', K' \rangle} C$. But then for every $\langle w', K' \rangle$ with  $\langle w, K \rangle \leq^{*} \langle w', K' \rangle$ and $\Vdash^{M^{*}}_{\langle w', K' \rangle} B$ we have $\Vdash^{M^{*}}_{\langle w', K' \rangle} C$, so we conclude $\Vdash^{M^{*}}_{\langle w, K \rangle} B \to C$. For the converse, assume $\Vdash^{M^{*}}_{\langle w, K \rangle} B \to C$. Pick any $w \leq_{K} w'$ with $\vDash^{M,K}_{w'} B$. Induction hypothesis: $\Vdash^{M^{*}}_{\langle w' , K \rangle} B$. Since $w \leq_{K} w'$ by the definition of $\leq^{*}$ we have $\langle w, K \rangle \leq^{*} \langle w', K \rangle$, so since $\Vdash^{M^{*}}_{\langle w, K \rangle} B \to C$ we conclude $\Vdash^{M^{*}}_{\langle w', K \rangle} C$, hence another use of the induction hypothesis yields $\vDash^{M,K}_{w'} C$. But then for every $w \leq_{K} w'$ we have that if $\vDash^{M,K}_{w'} B$ then $\vDash^{M,K}_{w'} C$, hence we conclude $\vDash^{M,K}_{w} B \to C$.


            \item $A = \Box B$. Assume $\vDash^{M,K}_{w} \Box B$. Then $w \leq_{K} w'$, $K \succ K'$ and $w' \in W_{K'}$ implies  $\vDash^{M,K'}_{w'} B$. Induction hypothesis: $w \leq_{K} w'$, $K \succ K'$ and $w' \in W_{K'}$ implies $\Vdash^{M^{*}}_{\langle w', K'\rangle} B$. Now pick any $\langle w'', K'' \rangle$ with $\langle w, K \rangle \leq^{*} \langle w'', K'' \rangle$ and any $\langle w''', K''' \rangle$ with $\langle w'', K'' \rangle R^{*} \langle w''', K''' \rangle$. From the definition of $R^{*}$ we obtain that $w''' = w''$ and $K'' \succ K'''$, and from the definition of $\leq^{*}$ we obtain that $K = K''$ and $w \leq_{K} w''$. Since $w \leq_{K} w''$ and $w'' = w'''$ we have $w \leq_{K} w'''$. Since $\langle w'', K'' \rangle R^{*} \langle w''', K''' \rangle$ we have $\langle w''', K'''\rangle \in F^*$, so from the definition of $F^*$ we conclude $w''' \in W_{K'''}$. Since $K'' \succ K'''$ and $K = K''$ we conclude $K \succ K'''$. Since $w \leq_{K} w'''$, $K \succ K'''$ and $w''' \in W_{K'''}$ the induction hypothesis yields $\Vdash^{M^{*}}_{\langle w''', K'''\rangle} B$. But then we have that, for arbitrary $\langle w'', K''\rangle$ and $\langle w''', K''' \rangle$, if $\langle w, K \rangle \leq^{*} \langle w'', K'' \rangle$ and $\langle w'', K'' \rangle R^{*} \langle w''', K''' \rangle$ then $\Vdash^{M^{*}}_{\langle w''', K'''\rangle} B$, so we conclude $\Vdash^{M^*}_{\langle w, K \rangle} \Box B$. For the converse, assume $\Vdash^{M^*}_{\langle w, K\rangle} \Box B$. Then, for any $\langle w'', K''\rangle$ and $\langle w''', K''' \rangle$, if $\langle w, K \rangle \leq^{*} \langle w'', K'' \rangle$ and $\langle w'', K'' \rangle R^{*} \langle w''', K''' \rangle$ then $\Vdash^{M^{*}}_{\langle w''', K'''\rangle} B$. Ind. hypothesis: for arbitrary $\langle w'', K''\rangle$ and $\langle w''', K''' \rangle$, if $\langle w, K \rangle \leq^{*} \langle w'', K'' \rangle$ and $\langle w'', K'' \rangle R^{*} \langle w''', K''' \rangle$ then $\Vdash^{M, K'''}_{w'''} B$. Now let $w'$ be an arbitrary object with $w \leq_{K} w'$ and $K'$ an arbitrary propositional model with $K \succ K'$ and $w' \in W_{K'}$. Clearly by the definition of $F^*$ we have $\langle w', K \rangle \in F^*$ and $\langle w', K' \rangle \in F^*$. Since $w \leq_{K} w'$ by the definition of $\leq^{*}$ we have $\langle w, K \rangle \leq^*  \langle w', K \rangle$, and since $K \succ K'$ by the definition of $R^{*}$ we have $\langle w', K \rangle R^{*} \langle w', K' \rangle$. Then since $\langle w, K \rangle \leq^*  \langle w', K \rangle$ and  $\langle w', K \rangle R^{*} \langle w', K' \rangle$ the induction hypothesis yields $\Vdash^{M, K'}_{w'} B$. But then we have that, for arbitrary $w'$ and $K'$, if $w \leq_{K} w'$, $K \succ K'$ and $w' \in W_{K'}$ then $\Vdash^{M, K'}_{w'} B$, so we conclude $\Vdash^{M, K}_{w} \Box B$

            \item $A = \Diamond B$. Assume $\vDash^{M,K}_{w} \Diamond B$. Then there is a $K \succ K'$ such that $w \in W_{K'}$ and $\vDash^{M,K'}_{w} B$. Induction hypothesis: $\Vdash^{M^{*}}_{\langle w, K' \rangle} B$. Since $K \succ K'$ by the definition of $R^{*}$ we get $\langle w, K \rangle R^{*} \langle w, K' \rangle$, so we conclude  $\Vdash^{M^{*}}_{\langle w, K \rangle} \Diamond B$. For the converse, assume $\Vdash^{M^{*}}_{\langle w, K \rangle} \Diamond B$. Then there is some $\langle w', K' \rangle$ such that $\langle w, K \rangle R^{*} \langle w', K' \rangle$ and $\Vdash^{M^{*}}_{\langle w', K' \rangle} B$. From the definition of $R^{*}$ we conclude $w' = w$ and $K \succ K'$, so $\Vdash^{M^{*}}_{\langle w, K' \rangle} B$. Induction hypothesis: $\vDash^{M, K'}_{w} B$. Then since $K \succ K'$ from the semantic clause for $\Diamond$ we get $\vDash^{M, K}_{w} \Diamond B$.

                    \end{enumerate}

      \item $\Gamma \neq \{ \varnothing\}$:

      Assume $\Gamma \vDash^{M, K}_{w} A$. Pick any $\langle w', K' \rangle$ with $\langle w, K \rangle \leq^{*} \langle w', K' \rangle$ and $\Vdash^{M^{*}}_{\langle w', K' \rangle} C$ for all $C \in \Gamma$. From the definition of $\leq^{*}$ we obtain $w \leq_{K} w'$ and $K = K'$, so $\Vdash^{M^{*}}_{\langle w', K \rangle} C$ for all $C \in \Gamma$. From the proof for $\Gamma = \{\varnothing\}$ we get $\vDash^{M, K}_{w'} C$ for all $C \in \Gamma$, hence since $\Gamma \vDash^{M, K}_{w} A$ and $w \leq_{K} w'$ we conclude $\vDash^{M, K}_{w'} A$. Then from the proof for $\Gamma = \{\varnothing\}$ we get $\Vdash^{M^{*}}_{\langle w', K \rangle} A$, and since $K' = K$ also $\vDash^{M^{*}}_{\langle w', K' \rangle} A$. Then for any $\langle w', K' \rangle$ we have that if $\langle w, K \rangle \leq^{*} \langle w', K' \rangle$ and $\vDash^{M^{*}}_{\langle w', K' \rangle} C$ for all $C \in \Gamma$ then $\vDash^{M^{*}}_{\langle w', K' \rangle} A$, so we conclude $\Gamma \Vdash^{M^{*}}_{\langle w, K\rangle} A$. For the converse, assume $\Gamma \Vdash^{M^{*}}_{\langle w, K\rangle} A$, and let $w'$ be any $w \leq_{K} w'$ with  $\vDash^{M, K}_{w'} C$ for all $C \in \Gamma$.  From the proof for $\Gamma = \{\varnothing\}$ we get $\Vdash^{M^{*}}_{\langle w', K \rangle} C$ for all $C \in \Gamma$. Since $w \leq_{K} w'$, from the definition of $\leq^{*}$ we obtain $\langle w, K \rangle \leq^{*} \langle w', K \rangle$, and since $\Gamma \Vdash^{M^{*}}_{\langle w, K\rangle} A$ we conclude $\Vdash^{M^{*}}_{\langle w', K \rangle} A$, hence from the proof for $\Gamma = \{\varnothing \}$ also $\vDash^{M, K}_{w'} A$. Then for any $w \leq w'$ we have that if $\vDash^{M, K}_{w'} C$ for all $C \in \Gamma$ then $\vDash^{M, K}_{w'} A$, so we conclude $\Gamma \vDash^{M, K}_{w} A$.
     
        \end{enumerate}
    \end{proof}

\item \textbf{Lemma \ref{lemma:existencestrongbirelationamodel}}

\begin{proof}
    We have to show that $H^{*}$ satisfies all properties of strong models. Notice that, since $F$ is homogeneous (cf. Definition \ref{def:homogeneouset}), for every $K, K' \in F$ we have $w \in W_{K}$ iff  $w \in W_{K'}$ and $w \leq_{K} w'$ iff $w \leq_{K'} w'$ for all $w$ and $w'$, so relational subscripts can be dropped altogether.  Notice also that, since all homogeneous models are also partial models (cf. Definitions \ref{def:partialmodel} and \ref{def:homogeneousmodel}), satisfaction of all properties of birelational models already follows Lemma \ref{lemma:existencebirelationamodel}, which means that we only have to show the following:

    \begin{enumerate}
        \item If $\langle w, K \rangle \leq^{*} \langle w', K' \rangle$ and $\langle w', K' \rangle R^{*} \langle w'', K'' \rangle$ then there is a unique $\langle w''', K''' \rangle$ such that $\langle w, K \rangle R^{*} \langle w''', K''' \rangle$ and $\langle w''', K''' \rangle \leq^{*} \langle w'', K'' \rangle$.

\begin{enumerate}
    \item  Assume $\langle w, K \rangle \leq^{*} \langle w', K' \rangle$ and $\langle w', K' \rangle R^{*} \langle w'', K'' \rangle$.  Then from the definition of $\leq^{*}$ we get $K = K'$ and $w \leq w'$, and from the definition of $R^{*}$ we get $K' \succ K''$ and $w' = w''$. Since we have $\langle w, K \rangle \leq^{*} \langle w', K' \rangle$ and $\langle w', K' \rangle R^{*} \langle w'', K'' \rangle$ we have both $\langle w, K \rangle \in F^{*}$ and $\langle w'', K'' \rangle \in F^{*}$, so also $w \in W_{K}$, $w'' \in W_{K''}$, $K \in F$ and $K'' \in F$. Since both $K$ and $K''$ are elements of $F$, $w \in W_{K}$ and $F$ is homogeneous we conclude $w \in W_{K''}$, so $\langle w, K'' \rangle \in F^{*}$. Now consider the pair $\langle w, K'' \rangle$. Since $K = K'$ and $K' \succ K''$ we conclude $K \succ K''$, so  $\langle w, K \rangle R^{*} \langle w, K'' \rangle$ by the definition of $R^{*}$. Since $w \leq w'$ and $w' = w''$ we have $w \leq w''$, so we conclude $\langle w, K'' \rangle \leq^{*} \langle w'', K'' \rangle$ by the definition of $\leq^{*}$, hence the desired witness exists.

  For uniqueness, pick a pair $\langle w''', K '''\rangle$ such that  $\langle w, K \rangle R^{*} \langle w''', K''' \rangle$ and $\langle w''', K''' \rangle \leq^{*} \langle w'', K'' \rangle$. Since  $\langle w, K \rangle R^{*} \langle w''', K''' \rangle$ by the definition of $R^{*}$ we have $w = w'''$, and since $\langle w''', K''' \rangle \leq^{*} \langle w'', K'' \rangle$ by the definition of $\leq^{*}$ we have $K''' = K''$, so $\langle w''', K''' \rangle = \langle w, K'' \rangle$.
\end{enumerate}
    
\end{enumerate}

\end{proof}

\item \textbf{Proposition \ref{prop:excessivemodels}}

\begin{proof}
Assume $\langle w, K \rangle \leq^{*} \langle w', K' \rangle$ and $\langle w'', K'' \rangle R^{*} \langle w', K' \rangle$.  Then from the definition of $\leq^{*}$ we get $K = K'$ and $w \leq w'$, and from the definition of $R^{*}$ we get $K'' \succ K'$ and $w' = w''$. Since we have $\langle w, K \rangle \leq^{*} \langle w', K' \rangle$ and $\langle w'', K'' \rangle R^{*} \langle w', K' \rangle$   $\langle w', K'' \rangle$ we also have both $\langle w, K \rangle \in F^{*}$ and $\langle w'', K'' \rangle \in F^{*}$, so also $w \in W_{K}$, $K \in F$ and $K'' \in F$. Since both $K$ and $K''$ are elements of $F$, $w \in W_{K}$ and $F$ is homogeneous we conclude $w \in W_{K''}$, so also $\langle w, K'' \rangle \in F^{*}$. Now consider the pair $\langle w, K'' \rangle$. Since $K = K'$ and $K'' \succ K'$ we have $K'' \succ K$, so we conclude $\langle w, K'' \rangle R^{*} \langle w, K \rangle$ by the definition of $R^{*}$. Since $w \leq w'$ and $w' = w''$ we conclude $w \leq w''$, hence $\langle w', K'' \rangle \leq^{*} \langle w'', K'' \rangle$ by the definition of $\leq^{*}$, so the desired witness exists.

    To show uniqueness, pick a $\langle w''', K '''\rangle$ for which it holds that $\langle w''', K''' \rangle R^{*} \langle w, K \rangle$ and $\langle w''', K''' \rangle \leq^{*} \langle w'', K'' \rangle$. Since  $\langle w''', K''' \rangle R^{*} \langle w, K \rangle$ by the definition of $R^{*}$ we have $w = w'''$, and since $\langle w''', K''' \rangle \leq^{*} \langle w'', K'' \rangle$ by the definition of $\leq^{*}$ we have $K''' = K''$, so we conclude $\langle w''', K''' \rangle = \langle w, K'' \rangle$.

\end{proof}

\item \textbf{Lemma \ref{lemma:correspondenceproofmappingshomogeneous}}

\begin{proof}
    The proof is identical to that of Lemma \ref{lemma:correspondenceproofmappingsimilarityintobirelationa} in all cases except when $\Gamma = \varnothing$ and $A$ is either $\Box B$ or $\Diamond B$, so we deal only with those.

    \begin{enumerate}
        \item $A = \Box B$. Assume $\vDash^{H,K}_{w} \Box B$. Then $K \succ K'$ implies $\vDash^{H,K'}_{w} B$. Induction hypothesis: $K \succ K'$ implies $\Vdash^{H^{*}}_{\langle w , K' \rangle} B$. Now pick any $\langle w', K' \rangle$ with $\langle w, K \rangle R^{*} \langle w', K' \rangle$. By the definition of $R^{*}$ we have $w = w'$ and $K \succ K'$, so the induction hypothesis yields $\Vdash^{H^{*}}_{\langle w', K' \rangle} B$, and since we picked an arbitrary $\langle w', K' \rangle$ with $\langle w, K \rangle R^{*} \langle w', K' \rangle$ we conclude  $\Vdash^{H^{*}}_{\langle w , K \rangle} \Box B$. For the converse, assume $\Vdash^{H^{*}}_{\langle w , K \rangle} \Box B$. Then  $\langle w, K \rangle R^{*} \langle w', K' \rangle$ implies $\Vdash^{H^{*}}_{\langle w', K' \rangle} B$. Since $\langle w, K \rangle R^{*} \langle w', K' \rangle$ implies $w = w'$ due to how $R^{*}$ is defined, we also conclude that $\langle w, K \rangle R^{*} \langle w', K' \rangle$ implies $\Vdash^{H^{*}}_{\langle w, K' \rangle} B$. Induction hypothesis: $\langle w, K \rangle R^{*} \langle w', K' \rangle$ implies $\vDash^{H, K'}_{w} B$. Now pick any $K''$ such that $K \succ K''$. Since $w \in W_{K}$ and $H$ is a homogeneous model we conclude $w \in W_{K''}$, so also $\langle w, K'' \rangle \in F^*$. Since $K \succ K''$ we have $\langle w, K \rangle R^{*} \langle w, K'' \rangle$ by the definition of $R^{*}$, so the induction hypothesis yields $\vDash^{H, K''}_{w} B$. Since $K''$ was an arbitrary model with $K \succ K''$ we conclude $\Vdash^{H, K}_{w} \Box B$.

        \item $A = \Diamond B$. Assume $\vDash^{H,K}_{w} \Diamond B$. Then there is a $K \succ K'$ with $\vDash^{H, K'}_{w} B$. Induction hypothesis: there is a $K \succ K'$ with $\Vdash^{H^{*}}_{\langle w, K' \rangle} B$. Since $K \succ K'$ we have $\langle w, K \rangle R^{*} \langle w, K' \rangle$, so we already conclude $\Vdash^{H^{*}}_{\langle w, K \rangle} \Diamond B$. For the converse, assume $\Vdash^{H^{*}}_{\langle w, K \rangle} \Diamond B$. Then there is $\langle w, K \rangle R^{*} \langle w', K' \rangle$ such that $\Vdash^{H^{*}}_{\langle w', K' \rangle} B$. Since $\langle w, K \rangle R^{*} \langle w', K' \rangle$ implies $w = w'$ we can conclude $\Vdash^{H^{*}}_{\langle w, K' \rangle} B$. Induction hypothesis: $\vDash^{H, K'}_{w} B$. Since $\langle w, K \rangle R^{*} \langle w', K' \rangle$ also implies $K \succ K'$ we conclude $\vDash^{H, K}_{w} \Diamond B$.
         
    \end{enumerate}

\end{proof}

\end{itemize}

\subsection{Proofs of Section 6}

\begin{itemize}

\item \textbf{Lemma \ref{lemma:atleastpartialcopyFC}}

\begin{proof}
   We need to prove that $PC^j$ satisfies all three conditions of Definition \ref{def:partialcopy} with respect to $FC$.

   \begin{enumerate}
       \item  Since $Q_{PC^j} = \{ j' | j \leq_{PC} j'\}$ and $\leq_{PC}$ is a relation on the objects of $Q_{PC}$ clearly $Q_{PC^j} \subseteq Q_{PC}$, so since $Q_{PC} \subseteq Q_{FC}$ by transitivity of inclusion we have $Q_{PC^j} \subseteq Q_{FC}$;

       \item Assume $j' \leq_{FC} j''$ and $j' \in Q_{PC^j}$. From the definition of $Q_{PC^j}$ we get $j \leq_{PC} j'$. From Clause 3 of Definition \ref{def:partialcopy} we get $j \leq_{FC} j'$, so since $j' \leq_{FC} j''$ by transitivity of $\leq_{FC}$ we get $j \leq_{FC} j''$. Then since $j \in Q_{PC}$ and $j \leq_{FC} j''$ from Clause 2 of Definition \ref{def:partialcopy} we get $j'' \in Q_{PC}$, and since $j \leq_{FC} j''$ from Clause 3 of Definition \ref{def:partialcopy} we get $j \leq_{PC} j''$, so from the definition of $Q_{PC^j}$ we conclude $j'' \in Q_{PC_{j}}$.

       \item From the definition of $\leq_{PC^J}$ we have that, for all  $j', j '' \in Q_{PC^j}$, $j' \leq_{PC^j} j''$ iff $j' \leq_{PC} j''$. From Clause 3 of Definition \ref{def:partialcopy} and the fact that $Q_{PC^J} \subseteq Q_{PC}$ we also have $j' \leq_{PC} j''$ iff $j' \leq_{FC} j''$ for all $j', j'' \in Q_{PC^j}$, hence $j', j''' \in Q_{PC^j}$ implies $j' \leq_{PC^j} j''$ iff $j' \leq_{PC} j''$ iff $j' \leq_{FC} j''$. 
   \end{enumerate}

\end{proof}

\item \textbf{Lemma \ref{lemma:firstfunctionexistence}}

\begin{proof}
  Let $\langle Q, R \rangle$ be the canonical semi-structure from which $FC$ is obtained. Let $n$ be the smallest number such that there is a $j$ with $j \in Q_{PC}$ and $j \in Q_n$ for the $Q_{n}$ that make up $Q$. Since from the definition of $R$ we have that $jRj'$ and $j \in Q_{m }$ implies $j' \in Q_{m +1}$ clearly for no $j' \in Q_{i}$ ($i \leq n$) we have $j'Rj$, and since $\leq_{PC}$ is the transitive and reflexive closure of $R$ we have $j' \leq_{PC} j$ implies $j' = j$.

  For every $w' \in W$, let $e_{w'}$ be a surjective function from the natural numbers to the set $w'_{\leq} = \{ w' | w \leq w'\}$, the existence of which can be guaranteed because $W$ is countable. For every $j \in Q_{PC}$ with $j \in Q_{m} (m > 0)$, let $e_{j}$ be a bijection from the set $j_{R} = \{ j' |j R j' \}$ to the natural numbers, the existence of which can be guaranteed because by definition there are (countably) infinitely many such $j'$ for each $j$. Then our function $f_{B,w}$ is defined as follows, in which $n$ is still the smallest number such that there is a $j$ with $j \in Q_{PC}$ and $j \in Q_n$ for the $Q_{n}$ that make up $Q$:

  \begin{enumerate}

      \item For all $j \in Q_{PC}$ with $j \in Q_{n}$, $f_{B,w}(j) = v(w)$;

     \item  For all $j' \in Q_{PC}$ with $j \in Q_{m} (m > n)$, let $j$ be the singular object with $j Rj'$. Then if $f_{B, w}(j) = v(w')$, $e_{w'}(i) = w''$ and $e_{j}(j') = i$ then $f_{B, w}(j') = v(w'')$.

  \end{enumerate}

So the function $e_{w}$ assigns to all numbers some extension of $w$, the function $e_{j}$ assigns to all dedicated extensions of $j$ some unique number and the function $f_{B,w}$ simply assigns to each extension $j'$ of $j$ the extension $w'$ of $w$ with the same number, thus guaranteeing that the atomic assignments of every extension $w'$ will be assigned to some $j'$.  Now we prove that $f_{B,w}$ indeed satisfies all requirements of a interpretation function (see Definition \ref{def:interpretationfunction}). The proof proceeds via induction on the numbering $m$ of sets $Q_m$ such that there is a $j$ with $j \in Q_{PC}$ and $j \in Q_{n}$.

\begin{enumerate}

\item There is some $j \in Q_{PC}$ such that $f_{B,w}(j) = v(w)$;

 Since $w$ is assigned to each object of the set $Q_{n}$ with the smallest $n$ this follows immediately from the definition.

    \item For all $j \in Q_{PC}$, $f_{B, w}(j) = v(w')$ for some $w \leq w'$.

    (Base case) Since for all $j \in Q_{PC}$ with $j \in Q_{n}$ we have $f_{B,w}(j) = v(w)$ and $w \leq w$ we conclude that for all such $j$ we have $v(j) = v(w')$ for some $w \leq w'$.

    (Inductive step) Pick any $j' \in Q_{PC}$ with $j' \in Q_{m} (m > n)$. Let $j$ be the singular $j$ with $jRj'$. Since $j \in Q_{m-1}$ the induction hypothesis yields $f_{B, w}(j) = v(w')$ for some $w \leq w'$. Then by the definition of $e_{w'}$ and $f_{B,w}$ we have $f_{B, w}(j') = v(w'')$ for some $w' \leq w''$. Since $w \leq w'$ and $w' \leq w''$ by transitivity of $\leq$ we conclude $w \leq w''$, so for this arbitrary $j'$ we have $f_{B,w}(j') = v(w'')$ for some $w \leq w''$.

    \item For all $j \in Q_{PC}$, if $f_{B, w}(j) = v(w')$ and $w' \leq w''$, then there is a $j' \in Q_{PC}$ such that $j \leq_{PC} j'$ and $f_{B, w}(j') = v(w'')$.

    (Inductive step and base case) Pick any $j$ with $j \in Q_{PC}$ and $j \in Q_{m} ( n \leq m)$. Without loss of generality, let $f_{B, w}(j) = v(w')$. Pick any $w'' \in W$ with $w' \leq w''$. Since $e_{w'}$ is a surjection from the natural numbers to $w'_{\leq}$ there is some number $i$ with $f_{B, w}(i) = w''$. Since $e_{j}$ is a bijection from the natural numbers to $j_{R}$ there is some $j'$ with $e_{j}(j') = i$ and $jRj'$, and from the definition of $\leq_{PC}$ we also get $j \leq_{PC} j'$. Then from the definition of $f_{B, w}$ we conclude $f_{B, w}(j') = v(w'')$, so $j'$ witnesses the desired property for the arbitrary $w''$ with $w' \leq w''$.

    \item For all $j,j' \in Q_{PC}$, if $f_{B, w}(j) = v(w')$, $f_{B, w}(j') = v(w'')$ and $j \leq_{PC} j'$ then $w' \leq w''$.

    (Base case) As shown at the beginning of this proof, for every $j' \in Q_{PC}$ with $j' \in Q_{n}$ we have that $j \leq j'$ implies $j = j'$, so clearly for all $j \leq j'$ we have $f_{B,w} (j) = v(w)$, and since $w \leq w$ the desired property follows immediately.

    (Inductive step) Pick any $j'$ with $j' \in Q_{PC}$ and $j' \in Q_{m} (m > n)$. Pick any $j$ with $j \leq_{PC} j'$. Since $\leq_{PC}$ is the reflexive and transitive closure of $R$ we have that either $j = j'$, $jRj'$ or there is a chain $jR j^{0} R \ldots R j^{n} Rj' $ for some $n \geq 0$. If $j = j'$ then $f_{B,w}(j) = f_{B,w}(j') = v(w')$ for some $w' \in W$, and the result follows from the fact that $w' \leq w'$. If $jRj'$ and $f_{B,w}(j) = v(w')$ then by the very definition of $e_{w'}$ and $f_{B, w}$ we have $f_{B, w}(j') = v(w'')$ for some $w' \leq w''$, as desired. If there is chain $jR j^{0} R \ldots R j^{n} Rj' $, let $f_{B,w}(j) = v(w')$. Then from the definition of $e_{w'}$ and $f_{B,w}$ we get that $f_{B,w}(j^{0}) = v(w^{0})$ for some $w' \leq w^{0}$, for every $0 < m \leq n$ we get that $f_{B,w}(j^{m}) = v(w^{m})$ for some $w^{m-1} \leq w^{m}$ and also $f_{B,w}(j') = v(w'')$ for some $w^{n} \leq w^{''}$, so from the transitivity of $\leq$ we conclude $w' \leq w''$.

\end{enumerate}
      
\end{proof}

\item \textbf{Lemma \ref{lemma:proofofequalityforworlds}}

    \begin{proof}
        We prove the result by induction on the complexity of $A$.

        \begin{enumerate}
            \item $\Gamma = \{ \varnothing\}$:

        \begin{enumerate}
            \item $A = p $ for atomic $p$. The result follows immediately from the clauses for atomic validity in Definitions \ref{def:validityinpartialsimilaritymodel} and \ref{def:validityinbirelational} and the fact that $f_{B,w}(j) = v(w')$.

            \item $A = B \land C$.  Assume $\Gamma \vDash_{j}^{M^B, K} B \land C$. Then $ \vDash_{j}^{M^B, K} B$ and $ \vDash_{j}^{M^B, K} C$. Induction hypothesis: $\Vdash^{B}_{w'} B$ and $\Vdash^{B}_{w'} C$. Then $\Vdash^{B}_{w'} B \land C$. For the converse, assume $\Vdash^{B}_{w'} B \land C$. Then $\Vdash^{B}_{w'} B$ and $\Vdash^{B}_{w'} C$. Induction hypothesis: $ \vDash_{j}^{M^B, K} B$ and $ \vDash_{j}^{M^B, K} C$. Then $ \vDash_{j}^{M^B, K} B \land C$.

            \item $A = B \lor C$.  Assume $ \vDash_{j}^{M^B, K} B \lor C$. Then $ \vDash_{j}^{M^B, K} B$ or $ \vDash_{j}^{M^B, K} C$. Induction hypothesis: $\Vdash^{B}_{w'} B$ or $\Vdash^{B}_{w'} C$. Then $\Vdash^{B}_{w'} B \lor C$. For the converse, assume $\Vdash^{B}_{w'} B \lor C$. Then $\Vdash^{B}_{w'} B$ or $\Vdash^{B}_{w'} C$. Induction hypothesis: $ \vDash_{j}^{M^B, K} B$ or $ \vDash_{j}^{M^B, K} C$. Then $ \vDash_{j}^{M^B, K} B \lor C$.

             \item $A = B \to C$.  Assume $ \vDash_{j}^{M^B, K} B \to C$. Pick any $w' \leq w''$ with $\Vdash^{B}_{w''} B$. From Clause 3 of Definition \ref{def:interpretationfunction} we have that, since $f_{B,w}(j) = v(w')$ and $w' \leq w''$, there must be some $j \leq_{PC} j'$ such that $f_{B,w}(j') = v(w'')$. Since $f_{B,w}(j') = v(w'')$ and $\Vdash^{B}_{w''} B$ the induction hypothesis yields  $ \vDash_{j'}^{M^B, K} B$, and since  $ \vDash_{j}^{M^B, K} B \to C$ and $j \leq_{PC} j'$ by the semantic clause for implication we conclude $ \vDash_{j'}^{M^B, K} C$. But then since $f_{B,w}(j') = v(w'')$ we can apply the induction hypothesis again to conclude $\Vdash^{B}_{w''} C$. Hence for all $w''$ with $w' \leq w''$ we have that if $\Vdash^{B}_{w''} B$ then $\Vdash^{B}_{w''} C$, whence we conclude $\Vdash^{B}_{w'} B \to C$. For the converse, assume $\Vdash^{B}_{w'} B \to C$. Pick any $j \leq_{PC} j'$ with $ \vDash_{j'}^{M^B, K} B$. Without loss of generality, let $f_{B,w}(j') = v(w'')$. The induction hypothesis yields $\Vdash^{B}_{w''} B$. Since $f_{B,w}(j') = v(w'')$ and $j \leq_{PC} j'$ by Clause 4 of Definition \ref{def:interpretationfunction} we have $w' \leq w''$. Since $\Vdash^{B}_{w'} B \to C$, $w' \leq w''$ and $\Vdash^{B}_{w''} B$ the semantic clause for implication yields $\Vdash^{B}_{w''} C$, and the induction hypothesis yields $ \vDash_{j'}^{M^B, K} C$. Then since $j'$ was arbitrary we have that, for every $j \leq_{PC} j'$, if $ \vDash_{j'}^{M^B, K} B$ then $ \vDash_{j'}^{M^B, K} C$, so the semantic clause for implication yields $ \vDash_{j}^{M^B, K} B \to C$.

            \item $A = \Box B$. Assume $ \vDash_{j}^{M^B, K} \Box B$. Then $j \leq_{PC} j'$, $K \succ K'$ and $j' \in Q_{K'}$ implies  $\vDash^{M^B,K'}_{j'} B$. Pick any $w''$ and $w'''$ such that $w' \leq w''$ and $w'' R w'''$. Since $w' \leq w''$, by Clause 3 of Definition \ref{def:interpretationfunction} we conclude that there is some $j \leq_{PC} j'$ such that $f_{B,w}(j') = v(w'')$. Since $f_{B,w}(j') = v(w'')$, $w' \leq w''$ and $w'' R w'''$ by Lemma \ref{lemma:modalexistence} we conclude that there is a frame $K' = \langle Q_{PC^{j'}}, \leq_{PC^{j'}}, f_{B,w'''} \rangle$ such that if $j'' \in Q_{PC}$, $j'' \in Q_{PC^{j'}}$, $f_{B,w}(j'') = v(s)$ and $f_{B,w'''}(j'') = v(s')$ then $s R s'$. By Definition \ref{def:equivalentbirelationalmodel} this means that $K \succ K'$, and by Definition \ref{def:PCj} we have $j' \in Q_{PC^{j'}}$. But then since $j \leq_{PC} j'$, $j' \in Q_{PC_{j'}}$ and $K \succ K'$ we conclude $\vDash^{M^B,K'}_{j'} B$. Since by the definition of $f_{B,w'''}$ in Lemma \ref{lemma:modalexistence} we have $f_{B,w'''}(j') = v(w''')$ the induction hypothesis yields $\Vdash^{B}_{w'''} B$. But $w'''$ is an arbitrary world with $w'' R w'''$ for arbitrary $w''$ with $w' \leq w''$, so by the clause for necessity we conclude $\Vdash^{B}_{w'} \Box B$. For the converse, assume $\Vdash^{B}_{w'} \Box B$. Then, for every $w''$ and $w'''$, if $w' \leq w''$ and $w'' R w'''$ then $\Vdash^{B}_{w'''} B$. Now pick any $j \leq_{PC} j'$ and any $K' = \langle Q'_{PC'}, \leq'_{PC'} f_{B, s} \rangle$ with $K \succ K'$ and $j' \in Q^{'}_{PC'}$. Then since $K \succ K'$ from Definition \ref{def:equivalentbirelationalmodel} we conclude that if $j'' \in Q_{PC}$, $j'' \in Q^{'}_{PC'}$, $f_{B, w'}(j'') = v(s')$ and $f_{B, s}(j'') = v(s'')$ then $s' R s''$. Let $f_{B,w} (j') = v(w'')$ and $f_{B,s} (j') = v(w''')$ for arbitrary $w''$ and $w'''$. Then since $j' \in Q_{PC}$ and $j' \in Q'_{PC'}$ we have $w'' R w'''$. Since $f_{B,w}(j') = v(w'')$, $f_{B,w}(j) = v(w')$ and $j \leq_{PC} j'$ by  by Clause 4 of Definition \ref{def:interpretationfunction} we have $w' \leq w''$. Since $w' \leq w''$ and $w'' R w'''$ we conclude $\Vdash^{B}_{w'''} B$, so since $f_{B,s}(j') = v(w''')$ the induction hypothesis yields $ \vDash_{j'}^{M^B, K'} B$. But $j'$ is an arbitrary object and $K'$ an arbitrary frame with $j \leq_{PC} j'$, $j' \in K'$ and $K \succ K'$, so  we conclude $ \vDash_{j}^{M^B, K} \Box B$.

            \item $A = \Diamond B$. Assume $\vDash_{j}^{M^B, K} \Diamond B$. Then there is a $K \succ K'$ with $j \in K'$ and $ \vDash_{j}^{M^B, K'} B$. without loss of generality, let $K' = \{Q'_{P C'}, \leq_{PC'}, f_{B,s}\}$ and $f_{B,s}(j) = v(w'')$. Then since $f_{B,w}(j) = v(w')$ from the definition of $\succ$ we have $w' R w''$. Since $ \vDash_{j}^{M^B, K'} B$ and $f_{B,s}(j) = v(w'')$ the induction hypothesis yields $\Vdash^{B}_{w''} B$. Since $w' R w''$ and $\Vdash^{B}_{w''} B$ we can use the semantic clause for possibility to conclude $\Vdash^{B}_{w'} \Diamond B$. For the converse, assume $\Vdash^{B}_{w'} \Diamond B$. Then there is a $w''$ with $w' R w''$ and $\Vdash^{B}_{w''} B$. Since $f_{B,w}(j) = v(w')$, by Clause 2 of Definition \ref{def:interpretationfunction} we have $w \leq w'$. Since $w \leq w'$ and $w' R w''$ by Lemma \ref{lemma:modalexistence} we conclude that there is a $K' = \langle Q_{PC^j}, \leq_{PC^j}, f_{B,w''} \rangle$ such that if $j' \in Q_{PC}$, $j' \in Q_{PC^{j}}$, $f_{B,w}(j') = v(s)$ and $f_{B,w''}(j') = v(s')$ then $s R s'$. By the definition of $\succ$ we get $K \succ K'$, and by Definition \ref{def:PCj} we also have $j \in Q_{PC_{j}}$. By the definition of $f_{B,w''}$ in Lemma \ref{lemma:modalexistence} we also have $f_{B,w''} (j) = v(w'')$, so the induction hypothesis yields $ \vDash_{j}^{M^B, K'}  B$. Hence we have $K \succ K'$, $j \in Q_{PC^j}$ and $ \vDash_{j}^{M^B, K'}  B$, so we can use the semantic clause for possibility to conclude $ \vDash_{j}^{M^B, K} \Diamond B$.

                    \end{enumerate}

      \item $\Gamma \neq \{ \varnothing\}$:

      Assume $ \Gamma \vDash_{j}^{M^B, K} A$. Pick any $w''$ with $w' \leq w''$ and $\Vdash^{B}_{w''} B$ for every $B \in \Gamma$. Since $w' \leq w''$ by Clause 3 of Definition \ref{def:interpretationfunction} there must be some $j \leq_{PC} j'$ with $f_{B,w}(j') = v(w'')$. Fix any such $j'$. Since $f_{B,w}(j') = v(w'')$ and $\Vdash^{B}_{w''} B$ for every $B \in \Gamma$ our proof for $\Gamma = \{\varnothing\}$ shows that $\vDash^{M^B, K}_{j'} B$ holds for every $B \in \Gamma$, and since $ \Gamma \vDash_{j}^{M^B, K} A$ and $j \leq_{PC} j'$ we conclude $ \vDash_{j'}^{M^B, K} A$. Since $ \vDash_{j'}^{M^B, K} A$ and $f_{B,w}(j') = v(w'')$ from the proof for $\Gamma = \{\varnothing\}$ we conclude $\Vdash^{B}_{w''} A$. But then we have that, for every $w''$ with $w' \leq w''$, if $\Vdash^{B}_{w''} B$ for every $B \in \Gamma$ then $\Vdash^{B}_{w''} A$, so we conclude $\Gamma \Vdash^{B}_{w'} A$. For the converse, assume $\Gamma \Vdash^{B}_{w'} A$. Pick any $j'$ with $j \leq_{PC} j'$ and $\vDash^{M^B, K}_{j'} B$ for every $B \in \Gamma$. Without loss of generality, let $f_{B,w}(j') = v(w'')$. Since $f_{B,w}(j') = v(w'')$, $f_{B,w}(j) = v(w')$ and $j \leq_{PC} j'$ by Clause 3 of Definition \ref{minimalpropositionalmodels} we conclude $w' \leq w''$. Since $f_{B,w}(j') = v(w'')$ and $\vDash^{M^B, K}_{j'} B$ for every $B \in \Gamma$ our proof for $\Gamma = \{ \varnothing\}$ yields $\Vdash^{B}_{w''} B$ for every $B \in \Gamma$, and since $\Gamma \Vdash^{B}_{w'} A$ and $w' \leq w''$ we conclude $\Vdash^{B}_{w''} A$. Since $\Vdash^{B}_{w''} A$ and $f_{B,w}(j') = v(w'')$  our proof for $\Gamma = \{ \varnothing\}$ yields $\vDash^{M^B, K}_{j'} A$. But then for every $j'$ such that $j \leq_{PC} j'$ we have that if $\vDash^{M^B, K}_{j'} B$ for every $B \in \Gamma$ then $\vDash^{M^B, K}_{j'} A$, so we conclude $ \Gamma \vDash_{j}^{M^B, K} A$.

        \end{enumerate}
    \end{proof}

\item \textbf{Lemma \ref{lemma:modalexistencehomogeneous}}

\begin{proof}


Let us remember that, in virtue of Clause 2 of Definition \ref{def:interpretationfunction}, $f_{S,w} (j') = v(s)$ implies $w \leq s$. Then we construct $f_{S,w'''}$ as follows:



\begin{enumerate}
    \item Since $f_{S,w}(j) = v(w')$ we conclude $w \leq w'$. Since $w' Rw''$, by $F_{3}$ and witness uniqueness there must be a unique $w''''$ such that $wRw''''$ and $w'''' \leq w''$. We put $w''' = w''''$ and $f_{S, w'''}(j) = v(w''')$ for all $j \in Q_{0}$.

\item For all $j' \in Q_{n}$ for $n > 0$, let $f_{S, w}(j') = v (s)$. Then $w \leq s$ and, since $w' R w'''$, from condition $F_{1}$ and witness uniqueness we conclude that there must be a unique $s'$ such that $s R s'$ and $w''' \leq s'$. Then we define $f_{S,w'''}(j) = v(s')$.

\end{enumerate}

Since by definition $f_{S, w'''}(j) = v(w''')$ the function clearly satisfies Clause 1 of Definition \ref{def:interpretationfunction}, and since by definition for all $j' \in Q_{FC}$ we have $f_{S, w'''}(j') = v(s')$ for some $w''' \leq s'$ it clearly also satisfies Clause 2. 

We have to show that $f_{S, w'''}(j) = v(w'')$. For that, let $f_{S, w'''}(j) = v(w'''')$. Since $f_{S, w'''}$ satisfies Clause 2 of Definition \ref{def:interpretationfunction}, $w''' \leq w''''$. By the definition of $f_{S, w'''}$ and the fact that $f_{S,w}(j) = v(w')$ we also have $w'Rw''''$ and $w \leq w'$. In the construction of $f_{S,w'''}$ above we have shown that $w''' \leq w''$, and it is assumed in the statement of the lemma that $w'Rw''$. Then $w''' \leq w''$, $w'Rw''$, $w''' \leq w''''$ and $w'Rw''''$, so by uniqueness of $w''''$ we have $w'''' = w''$, thus we conclude $f_{S, w'''}(j) = v(w'')$.

 In order to finish the proof and show that $f_{S,w'''}$ is indeed an interpretation of $w''' \in W$ in $FC$ it suffices to show that it satisfies the two remaining clauses of Definition \ref{def:interpretationfunction}. The proof is essentially the same of Lemma \ref{lemma:modalexistence}, the sole difference being that the homogenous character of $FC$ allows us to drop all references to Lemma \ref{def:PCj} and to structural features of $PC^j$ (and thus simplifies the proof).

\begin{enumerate}

  \item For all $j' \in Q_{FC}$, if $f_{S, w'''}(j') = v(s)$ and $s \leq s'$, then there is a $j'' \in Q_{FC}$ such that $j' \leq_{FC} j''$ and $f_{S, w'''}(j'') = v(s')$.

    Assume $f_{S, w'''}(j') = v(s)$ and $s \leq s'$ for some $j' \in Q_{FC}$. From the definition of $f_{S, w'''}$ we have $f_{S,w}(j') = v(s'')$ for some $s'' R s$. From $F_{2}$ and witness uniqueness we conclude that, since $s'' R s$ and $s \leq s'$, there must be some unique $s'''$ such that $s'' \leq s'''$ and $s''' R s'$. Since $s'' \leq s'''$ and $f_{S, w}(j') = v(s'')$ we conclude by Clause 2 of Definition \ref{def:interpretationfunction} that there is some $j'' \in Q_{FC}$ such that $j' \leq_{FC} j''$ and $f_{S,w}(j'') = v(s''')$. Since $j \leq_{FC} j'$ and $j' \leq_{FC} j''$ by transitivity of $\leq_{FC}$ we conclude $j \leq_{FC} j''$. Since $f_{S,w}(j'') = v(s''')$ then $f_{S,w'''}(j'') = v(s'''')$, where $s''''$ is the unique object with $s''' R s''''$ and $w''' \leq s''''$. Since  $f_{S, w'''}(j') = v(s)$ and $f_{S,w'''}$ satisfies Clause 2 of Definition \ref{def:interpretationfunction} we conclude $w''' \leq s$, and since $s \leq s'$ by transitivity of $\leq$ we conclude $w''' \leq s'$. So we have $w''' \leq s'$, $s''' R s'$, $w''' \leq s''''$ and $s''' R s''''$, so from uniqueness of $s''''$ we conclude $s' = s''''$. Hence since  $f_{S,w'''}(j'') = v(s'''')$ we have $f_{S,w'''}(j'') = v(s')$, so $j''$ is the desired witness.


\item For all $j',j'' \in Q_{FC}$, if $f_{S, w'''}(j') = v(s)$, $f_{S, w'''}(j'') = v(s')$ and $j' \leq_{FC} j''$ then $s \leq s'$.

  Pick any two objects $j', j'' \in Q_{FC}$ with $f_{S, w'''}(j') = v(s)$, $f_{S, w'''}(j'') = v(s')$ and $j' \leq_{FC} j''$. By the definition of $f_{S,w'''}$ we have $f_{S,w}(j') = v(s'')$ for $s''$ with $s''Rs$ and $f_{S,w}(j'') = v(s''')$ for $s'''$ with $s'''Rs'$. Due to how $f_{S,w'''}$ is defined we also have that $s'$ is the unique object such that $w''' \leq s'$ and $s''' R s'$. Since $f_{S,w}(j') = v(s'')$ and $f_{S,w}(j'') = v(s''')$ we conclude $s'' \leq s'''$ by Clause 4 of Definition \ref{def:interpretationfunction}. Since $s'' \leq s'''$ and $s''Rs$, by $F_{1}$ and witness uniqueness there must be a unique $s''''$ such that $s \leq s''''$ and $s'''Rs''''$. Since $f_{S, w'''}(j') = v(s)$ and $f_{S,w'''}$ satisfies Clause 2 of Definition \ref{def:interpretationfunction} we conclude $w''' \leq s$, so since $s \leq s''''$ we conclude $w''' \leq s''''$ by transitivity of $\leq$. Since $w''' \leq s'$, $s'''Rs'$, $w''' \leq s''''$ and $s'''Rs'''''$ and the object $s'$ is unique we conclude $s' = s''''$, hence since $s \leq s''''$ we conclude $s \leq s'$.

\end{enumerate}

\end{proof}

\item \textbf{Lemma \ref{lemma:proofofequalityforworldshomogeneous}}

    \begin{proof}
        We prove the result by induction on the complexity of $A$. All cases except the ones with $\Gamma = \varnothing$ and either $A = \Box B$ or $A = \Diamond B$ are analogous to the ones of Lemma \ref{lemma:proofofequalityforworlds}, so we deal only with those.

        \begin{enumerate}

            \item $A = \Box B$. Assume $ \vDash_{j}^{M^S, K} \Box B$. Then $K \succ K'$ implies  $\vDash^{M^S,K'}_{j} B$. Pick any $w''$ with $w'Rw''$. By Lemma \ref{lemma:modalexistencehomogeneous}, there is a $K'' \in F$ with $K'' = \langle Q_{FC}, \leq_{FC}, f_{S,w'''} \rangle$ such that $f_{S,w'''}(j) = v(w'')$ and, for any object $j'' \in Q_{FC}$, $f_{S, w}(j'') = v(s)$ and $f_{S, w}(j'') = v(s')$ implies $sRs'$. By the definition of $\succ$ we have $K \succ K''$, so $\vDash^{M^S,K''}_{j} B$. Since $f_{S,w'''}(j) = v(w'')$ the induction hypothesis yields $\Vdash^{S}_{w''} A$, so by arbitrariness of $w''$ we conclude $\Vdash^{S}_{w'} \Box A$. For the converse, assume $\Vdash^{S}_{w'} \Box A$. Then $w'Rw''$ implies $\Vdash^{S}_{w''} A$. Pick any $K'$ with $K \succ K'$. Then $K' = \langle Q_{FC}, \leq_{FC}, f_{S,w'''} \rangle$ such that, for any $j'' \in Q_{FC}$, $f_{S, w}(j'') = v(s)$ and $f_{S, w}(j'') = v(s')$ implies $sRs'$. In particular, since $f_{S,w}(j) = v(w')$ we have $f_{S,w'''}(w'''')$ for $w'Rw''''$. Since $wRw''''$ we have $\Vdash^{S}_{w''''} A$. Since $f_{S,w'''}(j) = v(w'''')$ the induction hypothesis yields $ \vDash_{j}^{M^S, K'} B$, hence since $K'$ was arbitrary we conclude $ \vDash_{j}^{M^S, K} \Box B$.

            \item $A = \Diamond B$. Assume $\vDash_{j}^{M^S, K} \Diamond B$. Then there is a $K \succ K'$ with $ \vDash_{j}^{M^S, K'} B$. Without loss of generality, let $K' = \{Q_{FC}, \leq_{FC}, f_{S,w'''}\}$ and $f_{S,w'''}(j) = v(w'')$. Then since $f_{S,w}(j) = v(w')$ from the definition of $\succ$ we have $w' R w''$. Since $ \vDash_{j}^{M^S, K'} B$ and $f_{S,w'''}(j) = v(w'')$ the induction hypothesis yields $\Vdash^{S}_{w''} B$, hence since $w'Rw''$ we conclude $\Vdash^{S}_{w'} \Diamond B$. For the converse, assume $\Vdash^{S}_{w'} \Diamond B$. Then there is a $w''$ with $w' R w''$ and $\Vdash^{S}_{w''} B$. Since $w'Rw''$ and $f_{S,w}(j) = v(w')$, by Lemma \ref{lemma:modalexistencehomogeneous} we conclude that there is a $K' = \langle Q_{FC}, \leq_{FC}, f_{S,w'''} \rangle$ such that $f_{S,w'''}(j) = v(w'')$ and, if $j' \in Q_{FC}$, $f_{S,w}(j') = v(s)$ and $f_{S,w'''}(j') = v(s')$, then $s R s'$. By the definition of $\succ$ we have $K\succ K'$. Since $f_{S,w'''}(j) = v(w'')$ and and $\Vdash^{S}_{w''} B$ the induction hypothesis yields $ \vDash_{j}^{M^S, K'} B$, hence since $K \succ K'$ we conclude $\vDash_{j}^{M^S, K} \Diamond B$.

                    \end{enumerate}

    \end{proof}

\end{itemize}

\bibliography{references} 
\bibliographystyle{asl}

\end{document}